\theoremstyle{plain}
\newtheorem{Thm}{Theorem}[section]
\newtheorem{Def}[Thm]{Definition}
\newtheorem{Exm}[Thm]{Example}
\renewcommand{\baselinestretch}{1.1}\small\normalsize
\numberwithin{equation}{section}
\begin{document}

\title[Continuous and discrete dynamics of HIV infection]
{Continuous and discrete dynamics of  a deterministic model of HIV infection}

\author{}
\address{$^*$Majid Jaberi Douraki
% D\'epartement de math\'ematiques et de
%statistique,
%         Universit\'e Laval,
%         Qu\'ebec (QC),
%         Canada G1K 7P4.\hfill\break
         }
\email{Jaberi@K-State.edu}

%\author{}
%\address{Javad Mashreghi\hfill\break
%D\'epartement de math\'ematiques et de statistique,
%         Universit\'e Laval,
%         Qu\'ebec(QC),
%         Canada G1K 7P4.}
%\email{Javad.Mashreghi@mat.ulaval.ca}
%
%\author{}
%\address{Mostafa Nasri\hfill\break
%Instituto de Matematica Pura e Aplicada,
%Estrada Dona Castorina 110,
%Jardim Botanico  22460-320,
%Rio de Janeiro  RJ  BRAZIL }
%\email{mostafa@impa.br}

%\author[M. Jaberi Douraki, J. Mashreghi, and M. Nasri]
%{M. Jaberi Douraki, J. Mashreghi, and M. Nasri}

\author[M. Jaberi Douraki]
{{\bf Majid Jaberi Douraki$^*$}\\
Institute of Computational Comparative Medicine,\\
Department of Mathematics,\\
Kansas State University,
Manhattan, KS 66506-5802}

%\thanks{This work was supported by NSERC (Canada) and FQRNT (Qu\'ebec)}

\keywords{CD4$^{+}$ T cell, HIV infection, Effective treatment therapy, Mathematical modelling, Difference equations, Equilibrium points, Forward bifurcation, Asymptotic stability}

\subjclass[2000]{Primary: 39A10, 39A11,  92Bxx, 93A30, 97Mxx, 00A71}
\begin{abstract}
%A major pharmaceutical intervention for management of HIV (human immunodeficiency virus) infection is the use of anti-HIV preventive vaccines
%(prophylactic vaccines). Unlike the rise of several treatments, the HIV infection still poses significant threats to the human beings. This research proposes an
%optimal treatment therapy which can minimize the viral load through an effective combination of reverse transcriptase and protease inhibitors in vivo.
We will study a mathematical model of the human immunodeficiency virus (HIV) infection in the presence of combination therapy that includes within-host
infectious dynamics.
%in the context of disease spread in vivo.
%
%Determining optimal treatment strategies is generally a challenging problem and requires the application of more sophisticated mathematical techniques, such as the use of discrete dynamical system.
The deterministic model requires us to analyze asymptotic stability of two distinct steady states, disease-free and endemic equilibria. Previous results have focused on investigating the global asymptotic stability of the trivial steady state using an implicit finite-difference method which generates a system of difference equations. We, instead, provide analytic solutions and long term attractive behavior for the endemic steady state using the theory of difference equations. The dynamics of estimated model is appropriately determined by a certain quantity threshold maintaining the immune response to a sufficient level. The result also indicates that a forward bifurcation in the model happens when the disease-free equilibrium loses its stability and a stable endemic equilibrium appears as the basic reproduction number exceeds unity. In this scenario, the classical requirement of the reproduction number being less than unity becomes a necessary and sufficient condition for disease mitigation. When the associated reproduction number is in excess of unity, a stable endemic equilibrium emerges with an unstable disease-free equilibrium (leading to the persistence and existence of HIV within the infected individuals). The attractivity of the model reveals that the disease-free equilibrium is globally asymptotically stable under certain assumptions.
A comparison between the continuous and estimated discrete models is also provided to have a clear perception in understanding the behavioral dynamics of disease modelling. Finally, we show that the associated estimation method is very robust in the sense of numerical stability since the equilibria and the stability conditions are  independent of the time step.

%Nonetheless, in the presence of $60\%$ effective combination therapy, the unique endemic equilibrium is globally asymptotically stable.
\end{abstract}

\maketitle

\section{Introduction}
Although rising number of anti-HIV (human immunodeficiency virus) preventive vaccines (also prophylactic vaccines) currently undergo clinical trials (see \cite{chap7_Blower} and the references therein), there has been a surge of interest during the past decades with regards to mathematical model for HIV infection in order to determine the side effects and long term impacts of such vaccines to the community. Identifying an optimal treatment therapy to minimize the viral load still poses significant threats to the human beings and a dilemma for public health policy maker \cite{JaberiJBD,JaberiMBE,JHWM,Lin}. In particular, since none of the current anti-HIV vaccines attains high enough efficacy to prevent HIV infection, such models may give insight into the optimal efficacy and treatment coverage levels needed to mitigate HIV infection in the community \cite{chap7_Blower}.

%\textbf{Targeting HIV replication}

%the development of the virion such as reverse transcriptase, protease, ribonuclease and integrase

Although some  published papers concerning the HIV infection models provides mostly slight details on
the system of difference equations extracted from associated systems of differential equations,  authors have thoroughly studied this subject
\cite{chap7_D:Clark, chap7_M. Dehghan 2004-1, chap7_M. Fan, chap7_M. Jaberi, chap7_Kouichi, chap7_R.E. Mickens,
chap7_M. Nasri}. Besides, there are outstanding published papers in the mathematical literature in the field of difference equations while a few of
them discussed systems of difference equations with two equations or more than one equilibrium point \cite{chap7_F. Brauer, chap7_V:Kocic, chap7_G:Papaschinopoulos, chap7_R.B. Potts, chap7_H. Sedaghat 2003}. Implementation of control theory to infectious diseases is another challenge to alleviate infection widespread. For instance, Jaberi et al. \cite{JHWM} developed an optimal control theory to a population model for influenza infection with sensitive and resistant strains and solved the adjoint control components to find the optimal treatment profile that reduces the epidemic final size. In the current research work, however, we attempt to analyze and mitigate a HIV infection presented by a model of difference equations with three equations associated with two distinct equilibrium points.

Apart from impede and enhance performance in mathematical analysis, several challenges related to the characteristics of the pathogen, e.g.  replication and evolution, have been encountered the nature of host immune responses at the individual level \cite{chap7_Blower,chap7_F. Brauer,chap7_P. Essunger, chap7_M. Farkas, chap7_A.B. Gumel,Mazloom}. The viral replication is essential for progression of HIV to AIDS and, as a result, requires better administration and intervention of antiretroviral drugs \cite{Kirschner1,Kirschner2,Kirschner3,McLean}. The replication process of the HIV pathogen undergo a cycle of development within a human host. Several essential phases (enzymes) such as reverse transcriptase, protease, ribonuclease and integrase are required for successful development of infectious virus particle \cite{chap7_Blower,chap7_F. Brauer,chap7_P. Essunger, chap7_M. Farkas, chap7_A.B. Gumel,Kirschner3,McLean}. In most cases these replication process are taken into consideration and mathematically formulated by different types of population models using differential equations.
Generally speaking, a handful of these models underwent analytic approaches to investigate the sensitivity and stability analysis. Other approaches, mostly computational, have been taken into account to determine the most efficient way of drug treatment at both the individual and population levels \cite{chap7_A.B. Gumel}. Not surprisingly, some mathematical models dealing with a nonlinear IVP (initial value problem) system do not have a closed form solution, nor can one, even locally, predict the long-term behavior of infection. As a result, a numerical method (e.g. see section \ref{sec4}) is widely applied to observe the characteristics of the system with various effects of time steps, especially for artificially large time steps. However, one may argue that if a numerical method is applied to a system of differential equations, the solution obtained through the numerical scheme may be inconsistent with the original system. An appropriate explanation to this argument will be provided in section \ref{subsec1}.

The outline of current paper is organized as follows. Our main purpose in this work is to study the stability and bifurcation of the existing
mathematical model for HIV infection
\begin{align}
T_{4}'(t)&= (s+r\,T_{4}(t)\,V_{I}(t))-k_{v}\,(1-E_{RT})\,(r+\alpha)\,T_{4}(t)\,V_{I}(t)-
\gamma_{1}\,T_{4}(t), & T_{4}(t_{0})=T_{4}^{0},\label{chap7_e2.1.n1}\\
T_{I}'(t)&= (k_{v}\,(1-E_{RT})\,(r+\alpha))\,T_{4}(t)\,V_{I}(t)- (\gamma_{2}+ k_{c})\,T_{I}(t),
  & T_{I}(t_{0})=T_{I}^{0},\label{chap7_e2.1.n2}\\
V_{I}'(t)&= \gamma_{2}\,(1-E_{PI})\,N\,(1-L)\,T_{I}(t)-\sigma\,
V_{I}(t),& V_{I}(t_{0})=V_{I}^{0},\label{chap7_e2.1.n3}
\end{align}
where the parameters are given in Table \ref{chap7_table0}. We will investigate the long term behavior of solutions in two different scenarios, continuous and discrete forms. A system of difference equations will be obtained from discretization of the mathematical model. It will be shown that the estimation method used in \cite{chap7_A.B. Gumel} to discretize system (\ref{chap7_e2.1.n1})--(\ref{chap7_e2.1.n3}) is significantly compatible with the continuous system and is a robust numerical method to investigate the nature of the IVP system (for instance, see \cite{chap7_R.E. Mickens}), that is, we can obtain similar properties (such as boundedness, asymptotic stability, oscillatory behavior) of the IVP system from the associated numerical scheme. It is worthwhile to point out that we demonstrate  the condition for the endemic equilibrium point of the discrete model to be stable is given by $e>(b-1)(1-d)(1-f)$ which was conjectured in \cite{chap7_DNJ} as an Open Problem, i.e. the associated reproduction number is greater than unity for stable endemic state. Similarities and differences between two models are reported in the beginning of Section \ref{sec5}. This research ends with extensive simulations and a brief conclusion.

\section{A survey on mathematical model and dynamics of continuous system}\label{sec2}
The viral replication is essential for progression of HIV to AIDS and, as a result, requires better administration and intervention of antiretroviral drugs \cite{Kirschner1,Kirschner2,Kirschner3,McLean}. The replication process of the HIV pathogen undergo a cycle of development within a human host. Several essential phases (enzymes) such as reverse transcriptase, protease, ribonuclease and integrase are required for successful development of infectious virus particle \cite{chap7_Blower,chap7_F. Brauer,chap7_P. Essunger, chap7_M. Farkas, chap7_A.B. Gumel,Kirschner3,McLean}. Each step is therefore a potential target for identification of an optimal treatment therapy. In the first step, infectious virus particles  locate appropriate host cells such as a CD4$^{+}$ T cells to infect. The process of infection begins by entering HIV into the target cell by fusion after binding to the CD4 glycoprotein and then releasing three crucial replication enzymes: Reverse Transcriptase, Integrase and Protease. Prophylactic medications which interfere with the principal stages of viral replication can inhibit this mortal contamination. For instance, introduction of HIV into the recipient cell which involves in initial infection can be possibly impeded by implementation of fusion inhibitors. Mitigation of reverse transcriptase by application of nucleoside inhibitors or by non-nucleoside Reverse Transcriptase  inhibitors is also part of standard therapeutic course of medical treatments. Besides, the fatal process of Retroviral Integrase which enables the genetic material of retrovirus enzyme to be integrated into the DNA of the infected cell can be blocked by HIV integrase inhibitors such as Raltegravir \cite{Steigbigel}. Perhaps the last key step in a standard antiretroviral therapy is to inhibit the retroviral aspartyl protease causing disease progression by intervention of protease  inhibitors including Saquinavir, Ritonavir, Indinavir, Nelfinavir, Amprenavir \cite{Rang}.

The model simulates the interaction between CD4$^{+}$ T-lymphocyte
(these cells perform essential helper functions in the
development of cellular and humoral immunity against pathogens
including HIV, see \cite{chap7_P. Essunger, chap7_M. Farkas, chap7_A.B. Gumel}) and
HIV in vivo when combination antiretroviral therapy is used.
Namely, reverse transcriptase (RT) inhibitors and protease
inhibitors (PIs)  as antiviral drug intervention are exerted for the perturbation of HIV infection.

\begin{table}[hbt]\label{chap7_table_1}
\tikzstyle{int}=[rectangle, draw, fill=blue!10,
    text width=22em, text centered, rounded corners, minimum height=10.5em]
\tikzstyle{init} = [pin edge={to-,thick,black}]
{
\tikzstyle{cloud} = [draw=none, rectangle,text centered,rounded corners,text width=\textwidth,fill=gray!20, node distance=3.8cm,
    minimum height=5em]%;
\begin{tikzpicture}[node distance=2.cm,auto,>=latex']
    \node [cloud] (dd)
    {\begin{center}\footnotesize
    \begin{tabularx}{\textwidth}{llXl}
            Parameter     && Interpretation     & Value\\[1ex]\hline
   $s$           && Source term for uninfected CD4$^{+}$ T cells & $8.076 \,\,d^{-1} mm^{-3}$\\
   $r$           && Rate of proliferation of CD4$^{+}$ T cells   &   $0.03 \,\,d^{-1}$\\
   $k_{v}$       && Probability of infection of activated CD4$^{+}$ T cells  & 1\\
   $\gamma_{1}$  && Natural death rate of uninfected CD4$^{+}$ T cells   &   $0.5 \,\,d^{-1}$\\
   $\gamma_{2}$  && Natural death rate of infected CD4$^{+}$ T cells   &   $0.5 \,\,d^{-1}$\\
   $k_{c}$       && Anti-HIV immune response             &       0.5\\
   $N$           && Number of free viruses produced per infected cell  &   1000\\
   $L$           && Proportion of latently infected T cells      &   0.2\\
   $E_{RT}$      && Effectiveness of RT inhibitors           &   $0<E_{RT}<1$\\
   $E_{PI}$      && Effectiveness of protease inhibitors       &   $0<E_{PI}<1$\\
   $\sigma$      && Rate of clearance of infectious virions    & 10 $d^{-1}$\\
   $\alpha$      && Proportion of pre-existing activated CD4$^{+}$ T cells &   $0<\alpha<1$\\[1ex]
   \hline
   \end{tabularx}
\caption{\footnotesize Description of parameters in HIV model and their estimated values \cite{chap7_M. Dehghan 2007,chap7_A.B. Gumel,chap7_DNJ}}\label{chap7_table0}
\end{center}};
\end{tikzpicture}
}
\end{table}
\begin{figure}[hbtp]
%\begin{center}
%  \begin{tabular}{cc}
%     {\fbox  {\includegraphics*[height=3.in, width=6in]{Compartmental_model.eps}}}
%  \end{tabular}
\tikzstyle{int}=[rectangle, draw=none, fill=blue!10,
    text width=12em, text centered, rounded corners, minimum height=3.5em]
\tikzstyle{init} = [pin edge={to-,thick,black}]
{\footnotesize
\tikzstyle{cloud} = [draw=none, rectangle,text centered,rounded corners,text width=20em,fill=blue!10, node distance=3.8cm,
    minimum height=4em]%;
\begin{tikzpicture}[node distance=2.cm,auto,>=latex']
    \node [int] (a) {CD4+T Cells (Susceptible)};
    \node (b) [above of=a,node distance=1.2cm, coordinate] {a};
    \node [int] (c) [below of=a] {Infected cells carrying integrated HIV (Infected)};
    \node [int] (d) [below of=c] {Infectious virus particles  (Infectious)};
    \node [coordinate] (end1) [right of=a, node distance=2.75cm]{};
    \node [coordinate] (end2) [right of=c, node distance=2.75cm]{};
    \node [coordinate] (end3) [right of=d, node distance=2.75cm]{};
    \node [cloud] (dd) [right of=end2,node distance=4.5cm] {
    \begin{eqnarray*}
          \frac{dT_{4}}{dt}&=& (s+r\,T_{4}\,V_{I})-\lambda_{1}\,T_{4}\,V_{I}- \gamma_{1}\,T_{4},\label{chap7_e2.1.1}\\\\
          \frac{dT_{I}}{dt}&=& \lambda_{1}\,T_{4}\,V_{I}- (\gamma_{2}+ k_{c})\,T_{I},\label{chap7_e2.2.1}\\\\
          \frac{dV_{I}}{dt}&=& \lambda_{2}\, T_{I}-\sigma\,V_{I},\label{chap7_e2.3.1}
    \end{eqnarray*}};
    {\path[->] (b) edge node {Source} (a);}
    \hspace{2cm}{\path[->] (a) edge node[left] {$\lambda_{1}=(r+\alpha)(1-E_{RT})k_v$} (c);}
    \hspace{-2cm}\draw[->] (a) edge node {$\gamma_1$} (end1) ;
    \draw[->] (c) edge node {$k_c$} (end2) ;
    \draw[->] (d) edge node {$\sigma$} (end3) ;
    \hspace{2.2cm}\draw[->] (c) edge node[left] {$\lambda_2=\gamma_2(1-E_{PI})N(1-L)$} (d) ;
%    \node[cloud]    (Ee) [right  of=a, node distance=3.2cm] {asdfds}
\end{tikzpicture}
}
\caption{\footnotesize Compartmental model of HIV infection for three different cell types, i.e. susceptible, infected, and infectious. In this model, $\lambda_1$ and $\lambda_2$ are the force of infection from the  susceptible compartment to the infected one and  the force of infection from the infected compartment to the infectious one, respectively.}
  \label{chap7_Three Cases}
%\end{center}
\end{figure}
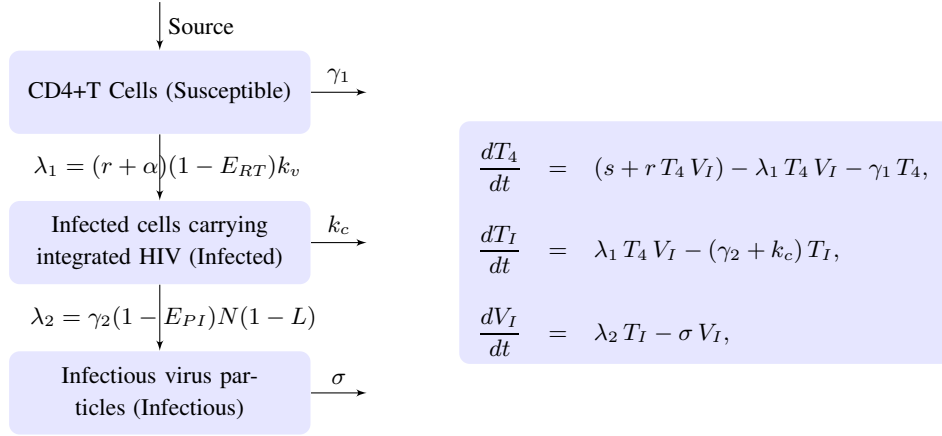

The existing mathematical model will be  briefly described in the sequel, for further information and the terminology used here, we refer readers to \cite{chap7_A.B. Gumel, chap7_M. Nasri, chap7_DNJ, chap7_A.S. Perelson}. The experiment shows that the reciprocal interplay between infectious viral particles and
CD4$^{+}$ T cells produces three sizable quantities in vivo which are characterized by the mathematical notations: $T_{4}$,
$T_{I}$, and $V_{I}$, where the variable $T_{4}$ is used to illustrate the amount of inactivated CD4$^{+}$ T cells at each unit of time which is
not accompanied by  integrated HIV (or susceptible cells in patient), the second quantity $T_{I}$ is taken into consideration to represent the amount of CD4$^{+}$ T cells carrying integrated HIV (infected cells), and the third class $V_{I}$ indicates infectious HIV
load (infectious particles) released from infected cells. As depicted in Figure \ref{chap7_Three Cases}, a compartmental model is used to facilitate the flow in each category and the interaction between CD4$^{+}$ T cells and viral particles with the intervention of combination antiretroviral therapy. Using the three compartments illustrated for the sizable amount of susceptibles, infected cells and free virions, we obtain the following associated  IVP system  at any time $t$, $t>t_{0}$,
\begin{align}
 \frac{dT_{4}}{dt}&= (s+r\,T_{4}(t)\,V_{I}(t))-\lambda_{1}\,T_{4}(t)\,V_{I}(t)- \gamma_{1}\,T_{4}(t),\nonumber\\
                  &= s+e_{1}\,T_{4}(t)\,V_{I}(t)-\gamma_{1}\,T_{4}(t), & T_{4}(t_{0})=T_{4}^{0},\label{chap7_e2.1}\\
\frac{dT_{I}}{dt}&= e_{2}\,T_{4}(t)\,V_{I}(t)- e_{3}\,T_{I}(t),& T_{I}(t_{0})=T_{I}^{0},\label{chap7_e2.2}\\
\frac{dV_{I}}{dt}&= e_{4}\,T_{I}(t)-\sigma \,V_{I}(t),& V_{I}(t_{0})=V_{I}^{0},\label{chap7_e2.3}
\end{align}
where the recruitment term $(s+r\,T_{4}(t)\,V_{I}(t))$ represents the autocatalysis (autocatalytic reaction in a chemical reaction was also examined by Petrov et al. \cite{Petrov}, see also \cite{Murray}, with cubic and quadratic autocatalysis term) and the parameters are given in Table \ref{chap7_table0} and defined by
\begin{equation}\label{chap7_e2.4}
  \begin{array}{lrl}
  e_{1}=r-k_{v}\,(1-E_{RT})\,(r+\alpha) &\mbox{and}& e_{2}=k_{v}(1-E_{RT})\,(r+\alpha);\\
  e_{3}=\gamma_{2}+ k_{c} &\mbox{and}& e_{4}=\gamma_{2}\,(1-E_{PI})\,N\,(1-L).\\
  \end{array}
\end{equation}
Note that the infection term is given by $\lambda_1$ in Figure \ref{chap7_Three Cases}, not by $e_1$. As a result, the law of conservation of mass will remain constant over time,  that is, the population mass leaving the susceptible (healthy) compartment is taken into account in the succeeding compartment regarded as infected cells carrying integrated virus particles. For the simplicity in analysis of system (\ref{chap7_e2.1})$-$(\ref{chap7_e2.3}) and comparative presentation, we add together $r$ and $k_{v}\,(1-E_{RT})\,(r+\alpha)$ and use the notation $e_{1}$.

The following change of variables simplifies the model (\ref{chap7_e2.1})$-$(\ref{chap7_e2.3}),
where $e_1\neq0$:
\begin{equation}\label{chap7_eq1}
T_{4}=\frac{s}{\gamma_1}\,x,~~~\mbox{}~~~
T_I=\dfrac{e_2\,s}{|e_1|\,\gamma_1}\,y,~~~\mbox{and}~~~V_{I}=\dfrac{e_2\,e_4\,s}{|e_1|\,e_3\,\gamma_1}\,z.
\end{equation}
In fact, the system of differential equations (\ref{chap7_e2.1})$-$(\ref{chap7_e2.3})
takes the following form:
\begin{align}
x'&= F(x,z)= \text{sign}(e_1) \,\rho \,x\,z+\tau-\tau\, x, &x(0)=x_0,\nonumber\\
y'&= G(x,y,z)= \rho \,x\,z- \zeta\, y, &y(0)=y_0,\label{chap7_eq2}\\
z'&= H(y,z)= \zeta\, y-\eta\, z, &z(0)=z_0,\nonumber
\end{align}
in which $\tau=\gamma_1$, $\rho=\dfrac{e_2\,e_4\,s}{e_3\,\gamma_1}$,
$\zeta=e_3$,  $\eta=\sigma$, and sign($\cdot$) is the  signum function of real numbers.  It is worthwhile to mention that solutions of
system (\ref{chap7_eq2})
is positively invariant with respect to the first octant, i.e. $\mathbb{R}^3_{+} = \{(x,y,z)| x\geq 0, y\geq 0, z\geq 0\}$, \cite{chap7_M. Dehghan 2007}. The equilibria of  system (\ref{chap7_eq2}),
\begin{equation}\label{chap7_eq3}
    E_1=(1,0,0) \quad \mbox{and} \quad
E_2=\left(\dfrac{\eta}{\rho},\dfrac{\tau\,(\eta-\rho)}{\text{sign}(e_1)\,\rho\,\zeta},\dfrac{\tau\,(\eta-\rho)}{\text{sign}(e_1)\,\rho\,\eta}\right).
\end{equation}
can easily be evaluated by some algebraic calculation. Now set $\mathcal{R}_c={\rho}/{\eta}$. The quantity $\mathcal{R}_c$ is the basic reproduction number for the continuous model (\ref{chap7_eq2}).
Note that if $\mathcal{R}_c=1$, then the system (\ref{chap7_eq2}) has no
equilibria other than $E_1$.
To investigate the local asymptotic stability, we evaluate the
characteristic polynomial by first differentiating the right-hand side of (\ref{chap7_eq2}) and then calculating the Jacobian matrix at the equilibria
\cite{chap7_M.W. Hirsch 1974,chap7_R.K. Miller 1982,chap7_L. Perko 1991}. The
characteristic polynomial at the disease-free equilibrium, $E_1$, is given by $-(\lambda+\tau)\left[\lambda^{2}+(\zeta+\eta)\lambda+\zeta(\eta-\rho)\right]$, and the characteristic polynomial at the endemic state,  $E_2$, is presented by
\begin{equation}\label{chap7_eq6}
-\lambda^{3}-\left(\zeta+\eta+\dfrac{\tau\,\rho}{\eta}\right)\,\lambda^{2}-\left(\tau\,\rho+\dfrac
{\tau\,\rho\,\zeta}{\eta}\right)\,\lambda-\tau\,\zeta\,(\rho-\eta).
\end{equation}
Using the Routh-Hurwitz criterion \cite{chap7_R.K. Miller 1982} for the above characteristic polynomials, one can show \cite{chap7_M. Dehghan 2007,chap7_A.B. Gumel} that the disease-free state $E_1$ is locally asymptotically stable if $\mathcal{R}_c<1$ and is unstable if $\mathcal{R}_c>1$. Also for the latter case, the endemic state $E_2$ is locally asymptotically stable and is unstable if $\mathcal{R}_c<1$. In addition, the system has no equilibria other than $E_1$ when $\mathcal{R}_c=1$. The eigenvalues of the Jacobian matrix at the disease-free state are $\{0, -\tau,-\zeta-\eta\}$.
It was shown (see Theorem 3.4 in \cite{chap7_M. Dehghan 2007}) that if
$e_1>0$ and $\mathcal{R}_c>1$, it follows that $(x,y,z)\rightarrow(\infty,\infty,\infty)$ for $(x_0,y_0,z_0) \in \Delta=\{(x,y,z)~|~y\neq0 \mbox{ or } z\neq0 \} \cap
\mathbb{R}^3_{+}$.
%with
%\begin{equation}\label{chap7_eq8}
%\Delta=\{(x,y,z)~|~y\neq0 \mbox{ or } z\neq0 \} \cap
%\mathbb{R}^3_{+}.
%\end{equation}
%In view of (\ref{chap7_eq8}), we mean that there exists at least a small quantity (unit) of infected or infectious cell in vivo.
It was also illustrated by some experiments in \cite{chap7_M. Dehghan 2007} that
system (\ref{chap7_eq2}) has bounded and unbounded solutions depending
on the domain of initial values when $e_1>0$ and $\mathcal{R}_c\ge1$, while the following result shows the stability of steady states when $e_1< 0$ (see Theorems
4.1, 4.11 and Remark 4.2 in \cite{chap7_M. Dehghan 2007}).

\begin{Thm}\label{chap7_th2}
Consider system (\ref{chap7_eq2}) with $e_1<0$. Then the following statements are true.
\begin{itemize}
\item[(i)]
If the reproduction number $\mathcal{R}_c$ is less than or equal to unity, then the disease-free state is globally
asymptotically stable in $\mathbb R_+^{3}$. In this case, HIV infection is eradicated from the infected host.
\item[(ii)]
If  the reproduction number $\mathcal{R}_c$ is greater than unity, then the endemic state is globally asymptotically
stable in $\Delta$.
%where $\Delta$ is given by (\ref{chap7_eq8}).
This case leads to the persistence and existence
of HIV infection within the individual.
\end{itemize}
\end{Thm}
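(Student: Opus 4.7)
The plan is to construct Lyapunov functions of Volterra--Goh type for each of the two equilibria and conclude with LaSalle's invariance principle. A short positivity observation first guarantees that any orbit starting in $\mathbb{R}_+^3\cap\Delta$ enters the open positive octant $\{x,y,z>0\}$ after arbitrarily short time, since $x'|_{x=0}=\tau>0$, $y'|_{y=0}=\rho x z$, and $z'|_{z=0}=\zeta y$. From that instant on, the logarithmic terms appearing below are well defined along the orbit.

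For part (i), with $\mathcal{R}_c=\rho/\eta\le 1$, I would take $V_1(x,y,z)=(x-1-\ln x)+y+z$, which is nonnegative on $\{x>0\}\cap\mathbb{R}_+^3$ and vanishes only at $E_1=(1,0,0)$. A direct computation along (\ref{chap7_eq2}) with $e_1<0$ (so $\mathrm{sign}(e_1)=-1$) gives $V_1'=-\tau(x-1)^2/x+(\rho-\eta)\,z\le 0$. Since sub-level sets of $V_1$ are compact, orbits remain bounded and bounded away from $x=0$. On $\{V_1'=0\}$ one has $x=1$ (and $z=0$ when $\rho<\eta$); invariance together with the $z$-equation then forces $y=0$, so the largest invariant subset is $\{E_1\}$, and LaSalle's principle yields global asymptotic stability on $\mathbb{R}_+^3$.

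For part (ii), with $\mathcal{R}_c>1$, the endemic point $E_2=(x^*,y^*,z^*)$ has all coordinates strictly positive, and I would switch to the ratios $u=x/x^*$, $v=y/y^*$, $w=z/z^*$ and use the Volterra--Goh functional $V_2=(x-x^*-x^*\ln(x/x^*))+(y-y^*-y^*\ln(y/y^*))+(z-z^*-z^*\ln(z/z^*))$. The three equilibrium identities $\rho x^*=\eta$, $\rho x^* z^*=\zeta y^*=\eta z^*$, and $\tau(1-x^*)=\rho x^* z^*$ together collapse the cross terms into $V_2'=-\tau x^*\,(u-1)^2/u+\eta z^*\bigl(3-1/u-uw/v-v/w\bigr)$. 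The bracket is $\le 0$ by AM--GM applied to the positive triple $(1/u,\,uw/v,\,v/w)$, whose product is $1$, with equality only when $u=1$ and $v=w$. On this set the $x$-equation reduces to $x'=\rho x^* z^*(1-w)$, which forces $w=1$ and then $y=y^*$; the largest invariant subset is therefore $\{E_2\}$, and LaSalle gives global asymptotic stability on $\Delta$.

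The main difficulty is the bookkeeping in part (ii): rewriting each derivative in the ratio variables, deploying the three equilibrium identities in the correct order so that all residual terms of indefinite sign cancel, and recognizing the symmetric AM--GM triple. A subtle point in the final LaSalle step is that $\{V_2'=0\}$ is the two-dimensional set $\{u=1,\,v=w\}$, larger than a single point, so one must feed it back into the $x$-equation once more to contract it down to $\{E_2\}$. Part (i) is comparatively routine once one notices that the sign $e_1<0$ is precisely what makes the $-\rho x z$ term in $x'$ contribute the useful piece $+\rho z$ to $V_1'$, allowing the clean cancellation that leaves the term $(\rho-\eta)z$.
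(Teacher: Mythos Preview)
Your argument is correct. The paper does not itself prove Theorem~\ref{chap7_th2}; it merely quotes the result from Theorems~4.1, 4.11 and Remark~4.2 of \cite{chap7_M. Dehghan 2007}. The Volterra--Goh Lyapunov functions you propose, combined with LaSalle's invariance principle, are the standard route for this class of within-host viral dynamics models, and the calculations check out: in part~(i) the choice $V_1=(x-1-\ln x)+y+z$ gives exactly $V_1'=-\tau(x-1)^2/x+(\rho-\eta)z\le 0$, with the borderline case $\rho=\eta$ handled correctly by feeding $x\equiv 1$ back into the $x$-equation; in part~(ii) the identity $V_2'=-\tau x^*(u-1)^2/u+\eta z^*\bigl(3-1/u-uw/v-v/w\bigr)$ holds as written, the AM--GM step is correct, and your final LaSalle reduction from the two-dimensional set $\{u=1,\,v=w\}$ down to $\{E_2\}$ via the $x$-equation is sound. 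The compactness of sub-level sets of $V_1$ and $V_2$ in the appropriate domains supplies the boundedness needed for LaSalle, and your opening positivity remark correctly disposes of initial data on the boundary of $\mathbb{R}_+^3$.
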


Considering system (\ref{chap7_e2.1})$-$(\ref{chap7_e2.3}) with $e_1=0$, it follows that the equilibrium point $(s/\gamma_{1},0,0)$ is unique when $w=C\,e_3\,\gamma_1-s\,e_2\,e_4\neq0$. When $w=0$, the model (\ref{chap7_e2.1})$-$(\ref{chap7_e2.3}) also possesses the unique positive equilibrium $(s/\gamma_{1},\,e_{4}\,\theta/C,\theta)$, in which $\theta$ depends critically on the coefficients of
the system and the initial value $(x_{0},y_{0},z_{0})$. The following theorem  shows the qualitative behavior of
 system (\ref{chap7_e2.1})$-$(\ref{chap7_e2.3}) when $e_1=0$.% and $w\geq 0$.

\begin{Thm}[See \cite{chap7_M. Dehghan 2007}]\label{chap7_th3}
Consider  system (\ref{chap7_e2.1})$-$(\ref{chap7_e2.3}) with $e_1=0$. Then the
following statements are true.
\begin{itemize}
\item[(i)]
If $w>0$, then the disease-free state is globally
asymptotically stable in $\mathbb R_+^{3}$.
\item[(ii)]
If $w=0$, then positive solutions converges eventually to  $(s/\gamma_{1},\,e_{4}\,\theta/C,\theta)$ when $(T_4^0,T_I^0,V_I^0)\in\mathbb R_+^{3}$.
\end{itemize}
\end{Thm}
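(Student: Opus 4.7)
With $e_1=0$, equation (\ref{chap7_e2.1}) decouples to the linear equation $T_4'=s-\gamma_1 T_4$, so
\[
T_4(t)=\frac{s}{\gamma_1}+\Bigl(T_4^0-\frac{s}{\gamma_1}\Bigr)e^{-\gamma_1 t},
\]
and the perturbation $u(t):=T_4(t)-s/\gamma_1$ satisfies $|u|\in L^1(0,\infty)$ with exponential decay. The residual $(T_I,V_I)$-system is then a linear non-autonomous problem whose limiting constant-coefficient matrix is $M=\begin{pmatrix}-e_3 & e_2 s/\gamma_1\\ e_4 & -\sigma\end{pmatrix}$, with characteristic polynomial $\lambda^{2}+(e_3+\sigma)\lambda+w/\gamma_1$. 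Both parts reduce to analyzing this limit through the perturbation $u$.

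\emph{Part (i).} When $w>0$, $M$ has negative trace and positive determinant, hence both eigenvalues have negative real parts by Routh--Hurwitz. I would choose $\epsilon>0$ with $e_2 e_4(s/\gamma_1+\epsilon)<e_3\sigma$ and a time $T^{\star}$ with $T_4(t)\le s/\gamma_1+\epsilon$ for $t\ge T^{\star}$. On $[0,T^{\star}]$ the coefficients of the $(T_I,V_I)$-system are bounded, so the solution is finite at $T^{\star}$; for $t\ge T^{\star}$, nonnegativity of $T_I,V_I$ lets me dominate the trajectory componentwise by the solution of the autonomous stable linear system with matrix $M_\epsilon=\begin{pmatrix}-e_3 & e_2(s/\gamma_1+\epsilon)\\ e_4 & -\sigma\end{pmatrix}$, which satisfies the same Routh--Hurwitz criterion. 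Hence $(T_I(t),V_I(t))\to(0,0)$, and combined with $T_4\to s/\gamma_1$ this establishes global asymptotic stability of the disease-free state $(s/\gamma_1,0,0)$ on $\mathbb{R}_+^{3}$.

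\emph{Part (ii).} When $w=0$, the limit matrix $M$ has spectrum $\{0,-(e_3+\sigma)\}$. Motivated by its left eigenvectors I would introduce the observables
\[
Q(t):=e_4 T_I(t)+e_3 V_I(t),\qquad P(t):=e_4 T_I(t)-\sigma V_I(t),
\]
for which a direct computation using $e_2 e_4 s/\gamma_1=e_3\sigma$ gives $Q'=e_2 e_4\,V_I\,u$ and $P'=-(e_3+\sigma)P+e_2 e_4\,V_I\,u$. Nonnegativity of $T_I$ yields $V_I\le Q/e_3$ and hence $|Q'|\le (e_2 e_4/e_3)\,Q\,|u|$. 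Since $\int_0^{\infty}|u|\,dt<\infty$, Gronwall supplies a uniform two-sided bound on $Q$, which together with $V_I\le Q/e_3$ bounds the whole $(T_I,V_I)$-trajectory. In particular $Q'\in L^1(0,\infty)$, so $Q(t)\to Q_\infty$ for a constant $Q_\infty\ge 0$ depending on the initial data, and variation of constants applied to the $P$-equation with exponentially-decaying forcing $V_I u$ forces $P(t)\to 0$. Inverting the linear change of variables yields $V_I(t)\to\theta:=Q_\infty/(e_3+\sigma)$ and $T_I(t)\to\sigma\theta/e_4$, producing the advertised positive equilibrium parametrized by the initial condition through $\theta$.

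\emph{Main obstacle.} The delicate point is part (ii): the zero eigenvalue of $M$ rules out any spectral gap that would directly force $(T_I,V_I)$ toward a specific limit, so no Lyapunov-type contraction is available. The argument is closed only by exploiting nonnegativity, since the center-direction observable $Q$ is driven only by the small term $V_I u$, and $V_I$ is itself bounded by $Q$, permitting a Gronwall closure. Once boundedness is secured, both $Q$ and $P$ admit classical convergence arguments, and the value of $\theta$ emerges as a first integral of the perturbed flow.
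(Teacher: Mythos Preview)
The paper does not supply its own proof of this theorem; it is quoted from \cite{chap7_M. Dehghan 2007} and stated without argument, so there is nothing in the present paper to compare your proposal against directly. Judged on its own merits, your plan is sound and would yield a complete proof.

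A few remarks. In part~(i) your comparison step is legitimate because the $(T_I,V_I)$-system is cooperative (the off-diagonal entries $e_2 T_4$ and $e_4$ are nonnegative), so Kamke-type monotonicity applies and the componentwise domination by the $M_\epsilon$-flow is justified; you might state this explicitly. You should also add one line noting local stability of the disease-free state (the Jacobian there is block-triangular with eigenvalue $-\gamma_1$ and the eigenvalues of $M$), since ``globally asymptotically stable'' requires both attractivity and local stability.

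In part~(ii) your choice of observables $Q=e_4T_I+e_3V_I$ and $P=e_4T_I-\sigma V_I$ diagonalizes the limit matrix $M$ exactly, and the computations $Q'=e_2e_4 V_I u$ and $P'=-(e_3+\sigma)P+e_2e_4 V_I u$ are correct under the constraint $w=0$. The Gronwall closure via $V_I\le Q/e_3$ is the right idea; note that the same inequality with the reverse sign, $Q'\ge -(e_2e_4/e_3)|u|\,Q$, also gives a strictly positive lower bound on $Q$ whenever $Q(0)>0$, which is what guarantees $\theta>0$ for genuinely positive initial data as the statement asserts. Finally, your limit $T_I\to\sigma\theta/e_4$ is the value forced by the equilibrium relation $e_4T_I=\sigma V_I$; the expression $e_4\theta/C$ printed in the paper appears to carry undefined notation from the cited source, and your formula is the correct one.
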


The model (\ref{chap7_e2.1})$-$(\ref{chap7_e2.3}) has unbounded solutions whenever $e_1=0$
and $w<0$ (see Remark 5.4 and examples illustrated in \cite{chap7_M. Dehghan 2007}).

%------------------------------------------------------------------------------
%------------------------------------------------------------------------------

\section{Analysis of discrete model}

\subsection{Preliminaries to difference equations}\label{sec3}
In this part, we begin by presenting some basic definitions to study the qualitative  behavior of a system of difference equations.
%One helpful supportive lemma which plays key role in the next section is  also stated
%here.
Let $I_{1}$, $I_{2}$, and $I_{3}$ be some intervals of real
numbers and let
\begin{equation*}
F_{i}\,:\, I_{1}\times I_{3}\times I_{3}\longrightarrow I_{i},\qquad \mbox{for $i\in\{1,2,3\}$}
\end{equation*}
be three continuously differentiable functions. Then for every {\em initial condition} $(x_{0},y_{0},z_{0})
\in I_{1}\times I_{2}\times I_{3}$, the system of difference
equations
\begin{eqnarray}
x_{n+1}=F_{1}(x_{n},z_{n}), \qquad y_{n+1}=F_{2}(x_{n+1},y_{n},z_{n}),\qquad z_{n+1}=F_{3}(y_{n+1},z_{n}) \label{chap7_e3.1}
\end{eqnarray}
has a {\em unique solution}
$\{(x_{n},y_{n},z_{n})\}_{n=0}^{\infty}$. A point $(\bar{x},\bar{y},\bar{z}) \in I_{1}\times I_{2}\times
I_{3}$ is called an {\em equilibrium point} of system
          (\ref{chap7_e3.1}) if
          \begin{eqnarray}
          \bar{x}=F_{1}(\bar{x},\bar{z}),\qquad\bar{y}=F_{2}(\bar{x},\bar{y},\bar{z}),\qquad\bar{z}=F_{3}(\bar{y},\bar{z}).\label{chap7_e3.4}
          \end{eqnarray}
\begin{Def}
The equilibrium point $(\bar{x},\bar{y},\bar{z})$ of system
(\ref{chap7_e3.1}) is called
\begin{enumerate}
  \item{\em stable} (or {\em
locally stable}) if for every $\epsilon>0,$ there exists $\delta>0$
such that for all $(x_{0},y_{0},z_{0}) \in I_{1}\times I_{2}\times
I_{3}$ with $\|(x_{0},y_{0},z_{0})-(\bar{x},\bar{y},\bar{z})\|<\delta$
implies $\|(x_{n},y_{n},z_{n})-(\bar{x},\bar{y},\bar{z})\|<\epsilon$ for all $n\geq0.$
Otherwise the equilibrium point is called unstable.

          \item {\em asymptotically stable} (or {\em locally asymptotically stable}) if it is stable and there exists $\gamma>0$ such that for all $(x_{0},y_{0},z_{0})$ $\in I_{1}\times I_{2}\times I_{3}$ with $\|(x_{0},y_{0},z_{0})-(\bar{x},\bar{y},\bar{z})\|<\gamma$ implies $\lim_{n\rightarrow\infty} \|(x_{n},y_{n},z_{n})-(\bar{x},\bar{y},\bar{z})\|=0.$

\item {\em globally asymptotically stable} if it is locally stable and for every $(x_{0},y_{0},z_{0})$ $\in
I_{1}\times I_{2}\times I_{3}$, we have $\|(x_{n},y_{n},z_{n}) - (\bar{x},\bar{y},\bar{z})\|\to0$, as $n\rightarrow\infty. $
\item a {\em repeller}, if there
exists $r>0$ such that for all $(x_{0},y_{0},z_{0})$ $\in
I_{1}\times I_{2}\times I_{3}$ with $0<\|(x_{0},y_{0},z_{0})-(\bar{x},\bar{y},\bar{z})\|<r,$ there exists $N\geq1$
such that $\|(x_{N},y_{N},z_{N})-(\bar{x},\bar{y},\bar{z})\|\geq r.$
\end{enumerate}
\end{Def}

\subsection{Discretizing the model}\label{sec4}
Here, we investigate the boundedness, the stability
analysis, and the bifurcation phenomenon of a discrete system in general cases. The development of the numerical method will be based on the first-order
approximation and using a numerical method of the form \cite{chap7_A.B. Gumel,chap7_R.E. Mickens,Faqir,Q_Zhang,P_Zhang1,P_Zhang2,JaberiJTB1,JaberiJTB2,JaberiPediatric,JaberiPlosOne}
\begin{equation}\label{e1.1}
W^{n+1}=W^{n}+l\,f(W^{n},W^{n+1}),\qquad n\ge0
\end{equation}
where $W^{n}=[T_{4}^{n}, T_{I}^{n},V_{I}^{n}]^{\top}$, ${\top}$ denoting transpose. The robust numerical method can be obtained by approximating the time derivative with its first-order forward-difference approximation for each equation in the left side of system
(\ref{chap7_e2.1})$-$(\ref{chap7_e2.3}) and making appropriate estimations for the right side, as follows with the fixed time step size $\ell>0$.
\begin{eqnarray}
({T_{4}^{n+1}-T_{4}^{n}})/\ell&=&s+e_{1}\,T_{4}^{n+1}\,V^{n}_{I}-\gamma_{1}\,T_{4}^{n+1},\label{chap7_e2.5}\\
({T_{I}^{n+1}-T_{I}^{n}})/{\ell}&=&e_{2}\,T_{4}^{n+1}\,V^{n}_{I}-e_{3}\,T_{I}^{n+1}, \label{chap7_e2.6}\\
({V_{I}^{n+1}-V_{I}^{n}})/{\ell}&=&e_{4}\,T_{I}^{n+1}-\sigma\,V^{n}_{I}.\label{chap7_e2.7}
\end{eqnarray}
Note that the variables $T_{4}^{n}$, $T_{I}^{n}$, and $V_{I}^{n}$ denote the
estimation values of
$T_{4}(n\ell)$, $T_{I}(n\ell)$, and $V_{I}(n\ell)$ respectively. In addition,  $T_{4}^{n}$, $T_{I}^{n}$, and $V_{I}^{n}$
are taken into consideration as a set with three non-negative sequences consistent with the biological nature of the model in practice. In
fact, the parameters of the model should be chosen such that
$T_{4}^{n}$, $T_{I}^{n}$, and $V_{I}^{n}$ are non-negative.
In most cases, the non-negativity is a standard assumption in the
literature of difference equations \cite{chap7_M. Dehghan 2004-2, chap7_M.
Jaberi, chap7_V:Kocic, chap7_K:Kulenovic}, besides the nature of the model
(the IVP system) is another reason. Note also that it is a very
difficult task to investigate \cite{chap7_M. Dehghan 2004-1} the
behavior of solutions of a difference equation which attains
negative and positive values simultaneously. For instance, finding
a forbidden set, the set of initial conditions through which the
related difference equation is undefined, is a difficult task for
this class of difference equations. To observe these types of
complexities, readers are encouraged to study (\cite{chap7_K:Kulenovic}, p.17). However, in this
work we will obtain some results where $\hat{c}>0$ in Equation
(\ref{chap7_e2.5}), that is, the solutions of system
(\ref{chap7_e2.5})$-$(\ref{chap7_e2.7}) deal with negative numbers, see Appendix \ref{sec6}.

Using the following change of variables
\begin{align}
T_{4}^{n}=l\,s\,x_{n}, \quad T_{I}^{n}=({1+l\,\sigma})\,y_{n}/{l\,e_4}, \quad  V_I^n=z_n,
\end{align}
and re-arranging system (\ref{chap7_e2.5})$-$(\ref{chap7_e2.7}), it follows that
\begin{eqnarray}
x_{n+1}&=&\frac{1+x_{n}}{b+c\,z_{n}},\label{chap7_e4.4}\\
y_{n+1}&=&d\,y_{n}+e\,x_{n+1}\,z_{n},\label{chap7_e4.5}\\
z_{n+1}&=&{f\,z_{n}+y_{n+1}},\label{chap7_e4.6}
\end{eqnarray}
where $b=1+l\,\gamma_1>1$, $c=-l\,e_1\in \mathbb{R}$, $0<d=({1+l\,e_3})^{-1}<1$,
$e=l^3 s\,e_4\,e_2\left(({1+l\,e_3})({1+l\,\sigma})\right)^{-1}>0$, and $0<f=({1+l\,\sigma})^{-1}<1$.

%%\begin{eqnarray}
%%T_{4}^{n+1}&=&({\hat{a}+T_{4}^{n}})/({\hat{b}+\hat{c}\,V_{I}^{n}}),\label{chap7_e4.1}\\
%%T_{I}^{n+1}&=&\hat{d}\,T_{I}^{n}+\hat{e}\,T_{4}^{n+1}\,V_{I}^{n},\label{chap7_e4.2}\\
%%V_{I}^{n+1}&=&\hat{f}\,V_{I}^{n}+\hat{g}\,T_{I}^{n+1},\label{chap7_e4.3}
%%\end{eqnarray}
%where $\hat{a}=ls$, $\hat{b}=1+l\,\gamma_1$, $\hat{c}=-l\,e_1$, $\hat{d}={1}/({1+l\,e_3})$, $\hat{e}={l\,e_2}/({1+l\,e_3})$, $\hat{f}={1}/({1+l\,\sigma})$, $\hat{g}={l\,e_4}/({1+l\,\sigma})$. Note that all parameters are nonnegative apart from $\hat{c}$.
%We use the change of variables $T_{4}^{n}=\hat{a}\,x_{n}$,
%$T_{I}^{n}=y_{n}/{\hat{g}}$, and $V_I^n=z_n$ to reduce system
%(\ref{chap7_e4.1})$-$(\ref{chap7_e4.3}) to

Define $\mathcal{R}_d={e}/\left({(b-1)\,(1-d)\,(1-f)}\right)$, the threshold $\mathcal{R}_d$ is the basic reproduction number for discrete model (\ref{chap7_e4.4})$-$(\ref{chap7_e4.6}).
Suppose that $c\neq0$ and $\mathcal{R}_d\neq1$, then it is easy
to show that system (\ref{chap7_e4.4})$-$(\ref{chap7_e4.6}), independent of time step size, has the disease-free  equilibrium point
\begin{equation}\label{chap7_e4.7}
\left(({b-1})^{-1},0,0\right),
\end{equation}
and  the endemic equilibrium point
\begin{equation}\label{chap7_e4.8}
\left(\frac{(1-d)\,(1-f)}{e},\frac{e+(1-b)\,(1-d)\,(1-f)}{c\,(1-d)}, \frac{e+(1-b)\,(1-d)\,(1-f)}
{c\,(1-d)\,(1-f)}\right).
\end{equation}
Note that system (\ref{chap7_e4.4})$-$(\ref{chap7_e4.6}) has only one
equilibrium point (\ref{chap7_e4.7}) where either $c\neq0$ or
$\mathcal{R}_d\neq1$. We also observe that system
(\ref{chap7_e4.4})$-$(\ref{chap7_e4.6}) has the following equilibrium point $({1}/({b-1}),(1-f)\,\psi,\psi)$, where $c=0$ and $\mathcal{R}_d=1$, in which $\psi$ depends on the coefficients of system (\ref{chap7_e4.4})$-$(\ref{chap7_e4.6}) and the
initial value $(x_{0},y_{0},z_{0})$.

%%%%%%%%%%%%%%%%%%%%%%%%%%%%%%%%%%%%%%%%%%%%%%%%%%%%%%%%%%%%%%%%%%%%%%%%%%%%%%
The following theorem provides us the boundedness of solutions which will be trapped into some intervals.
\begin{Thm}[See \cite{chap7_DNJ}, Section 4.1]\label{chap7_t4.1}
Let $b>1$, $c>0$, $0<d<1$, $e>0$ and $0<f<1$. Then either
$(x_{n},y_{n},z_{n})$ converges to an equilibrium point or the
following statements are true.
\begin{eqnarray}
\Gamma \leq \liminf x_n \leq &\limsup x_n &\leq ({b-1})^{-1 },\label{chap7_e4.24}\\
0\leq \liminf y_n\leq&\limsup y_n &\leq \frac{b\,e}{c(1-d)}\left(\frac{1}{b-1}-\Gamma\right),\label{chap7_e4.25}\\
0\leq \liminf z_n\leq&\limsup z_n& \leq
\frac{b\,e}{c(1-d)(1-f)}\left(\frac{1}{b-1}-\Gamma\right).\label{chap7_e4.26}
\end{eqnarray}
where $\Gamma=\min\, (\frac{(b-1)\,(1-d)\,(1-f)}{b\,e},\frac{1}{b-1})$.
\end{Thm}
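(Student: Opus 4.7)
The plan is to prove the three pairs of bounds in order, building from an easy upper estimate on $x_n$ toward a delicate closed-loop argument for the lower bound $\Gamma$. First, since $c,z_n\ge 0$, the recurrence (\ref{chap7_e4.4}) yields the linear majorization $x_{n+1}\le(1+x_n)/b$; because $b>1$, iterating drives $x_n$ toward the fixed point $1/(b-1)$, so $\limsup x_n\le 1/(b-1)$, which is the upper half of (\ref{chap7_e4.24}).

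The key structural observation for the $y$- and $z$-bounds is the algebraic identity $c\,x_{n+1}z_n = 1+x_n-b\,x_{n+1}$, obtained by clearing denominators in (\ref{chap7_e4.4}). Substituting it into (\ref{chap7_e4.5}) turns the $y$-equation into the linear first-order recurrence $y_{n+1}=d\,y_n+(e/c)(1+x_n-b\,x_{n+1})$, so that $y_n$ admits an explicit formula as a convolution of $(1+x_k-b\,x_{k+1})$ against the geometric kernel $d^{n-1-k}$. Writing $L:=\liminf x_n$ and using the bound $x_k\le 1/(b-1)+\varepsilon$ from the previous step together with $x_{k+1}\ge L-\varepsilon$ for all large $k$, the geometric tail with $0<d<1$ yields
$$\limsup_{n\to\infty}y_n\le\frac{be}{c(1-d)}\Bigl(\frac{1}{b-1}-L\Bigr),$$
and propagating this through (\ref{chap7_e4.6}) with $0<f<1$ gives the analogous bound on $N:=\limsup z_n$, with an extra factor $1/(1-f)$.

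The lower bound $L\ge\Gamma$ is the main obstacle, because of the closed feedback loop: bounding $x_n$ below requires bounding $z_n$ above, which needs bounding $y_n$ above, which in turn needs a lower bound on $x_n$. I would extract a subsequence $(n_k)$ with $x_{n_k+1}\to L$, refine to $(x_{n_k},z_{n_k})\to(x^*,z^*)$ with $x^*\ge L$ and $z^*\le N$, and pass to the limit in (\ref{chap7_e4.4}) to obtain $L(b+cz^*)=1+x^*\ge 1+L$, hence $L\ge 1/(b-1+cN)$. Combining this with the upper bound on $N$ from the previous step and writing $\alpha:=be/((1-d)(1-f))$ produces the scalar quadratic inequality
$$\alpha L^2-\Bigl((b-1)+\frac{\alpha}{b-1}\Bigr)L+1\le 0,$$
which factors as $\alpha\bigl(L-(b-1)/\alpha\bigr)\bigl(L-1/(b-1)\bigr)\le 0$. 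Its two roots are precisely $(b-1)(1-d)(1-f)/(be)$ and $1/(b-1)$, the two quantities whose minimum defines $\Gamma$, so a short case distinction on their ordering (equivalent to comparing $\mathcal{R}_d$ with $(b-1)/b$) combined with the already-proved $L\le 1/(b-1)$ forces $L\ge\Gamma$ in every case; the degenerate subcase in which the admissible interval collapses to a single point corresponds precisely to $x_n\to 1/(b-1)$ and convergence to the disease-free equilibrium, supplying the convergence alternative of the dichotomy. Substituting $L\ge\Gamma$ back into the bounds above then yields (\ref{chap7_e4.25}) and (\ref{chap7_e4.26}).
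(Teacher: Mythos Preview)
Your argument is correct. The paper does not supply its own proof of this theorem---it merely quotes the result from \cite{chap7_DNJ}, Section~4.1---so a direct comparison with ``the paper's proof'' is not possible here.

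Two small points are worth making explicit in your write-up. First, before you extract the convergent subsequence $(x_{n_k},z_{n_k})\to(x^*,z^*)$ you need $\{z_n\}$ bounded; this is already implicit in your Step~3--4 bounds applied with the trivial estimate $L\ge 0$, which give $N\le \dfrac{be}{c(1-d)(1-f)(b-1)}<\infty$, so Bolzano--Weierstrass is available. Second, in deriving the quadratic you substitute the upper bound $cN\le\alpha\bigl(\tfrac{1}{b-1}-L\bigr)$ into $L(b-1)+LcN\ge 1$; since you are replacing $cN$ by something \emph{larger} and it is multiplied by $L\ge 0$, the direction of the inequality is preserved, but say so. With those details filled in, the factorisation $\alpha\bigl(L-(b-1)/\alpha\bigr)\bigl(L-1/(b-1)\bigr)\le 0$ together with $L\le 1/(b-1)$ indeed forces $L\ge\Gamma$ in every case, and back-substitution yields (\ref{chap7_e4.25})--(\ref{chap7_e4.26}). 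In fact your argument shows that the bounds (\ref{chap7_e4.24})--(\ref{chap7_e4.26}) hold unconditionally, so the ``convergence'' alternative in the theorem's dichotomy is subsumed rather than genuinely needed.
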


\subsection{Analysis of disease-free and endemic states }\label{chap7_sec5}
We state some theorems that  will be of help in the subsequent sections and  are derived from the theoretical results of Dehghan et al \cite[Section 4.2--4.3]{chap7_DNJ}. These results present the
local stability of both equilibrium points and the global stability
of (\ref{chap7_e4.7}).
%The Jacobian matrix evaluated at the equilibrium point of system of difference equation (\ref{chap7_e4.4})$-$(\ref{chap7_e4.6})
%$(\bar{x},\bar{y},\bar{z})$ is given by
%$$J(\bar{x},\bar{y},\bar{z})=\left[
%         \begin{array}{lccrr}
%              \dfrac{1}{b+c\,\bar{z}} && 0 && -\dfrac{c\,(1+\bar{x})}{{(b+c\,\bar{z})}^{2}} \\
%             \dfrac{e\,\bar{z}}{b+c\,\bar{z}} && d && \dfrac{b\,e\,(1+\bar{x})}{{(b+c\,\bar{z})}^{2}}\\
%              \dfrac{e\,\bar{z}}{b+c\,\bar{z}} && d &&
%              f+\dfrac{b\,e\,(1+\bar{x})}{{(b+c\,\bar{z})}^{2}}
%          \end{array}\right],$$
%and the characteristic polynomial of $J(\bar{x},\bar{y},\bar{z})$
%is
%\begin{equation}\label{chap7_e4.41}
%P(\lambda)=\lambda^{3}-(d+f+u+b\,e\,u^{2}\,v)\lambda^{2}+(d\,u+f\,u+d\,f+e\,u^{2}v)\lambda-d\,f\,u,
%\end{equation}
%where $u=({b+c\,\bar{z}})^{-1}$ and $v=1+\bar{x}$.
%
%The characteristic polynomial at the equilibrium point (\ref{chap7_e4.7})
%is
%%\begin{equation}\label{chap7_e4.42}
%%\left(\lambda-\frac{1}{b}\right)\left[\lambda^{2}-\left(\frac{e}{b-1}+d+f\right)\lambda+d\,f\right].
%%\end{equation}
%Note that the characteristic polynomial at the equilibrium point
%(\ref{chap7_e4.9}) is given by (\ref{chap7_e4.42}), where
%\[
%\frac{e}{b-1}+d+f=1+d\,f,
%\]
%since   $\mathcal{R}_d=1$. Therefore,
%three eigenvalues of the Jacobian matrix at the equilibria
%(\ref{chap7_e4.7}) and (\ref{chap7_e4.9}) are $\{1, \frac{1}{b}, df\}$
%where  $\mathcal{R}_d=1$. We also observe that the characteristic polynomial (\ref{chap7_e4.42}) has three real positive
%roots and their product is smaller than $1$.

\begin{Thm}[See \cite{chap7_DNJ}, Theorem 4.4]\label{chap7_t4.4}
 Consider system (\ref{chap7_e4.4})$-$(\ref{chap7_e4.6}) with $b>1$, $0<d<1$,
$e>0$, and $0<f<1$. Then the following statements are true.
\begin{itemize}
\item[(i)]
If $\mathcal{R}_d<1$ holds, then the equilibrium point
(\ref{chap7_e4.7}) is asymptotic stable.
\item[(ii)]
If $\mathcal{R}_d>1$ holds, then the equilibrium point
(\ref{chap7_e4.7}) is unstable.
\item[(iii)]
If $\mathcal{R}_d=1$ holds, then the roots of the
characteristic polynomial at the equilibrium points
$(\frac{1}{b-1},0,0)$ and $(\frac{1}{b-1},(1-f)\psi,\psi)$ are
$\{1, \frac{1}{b}, d\,f\}$.
\item[(iv)]
The equilibrium point $(\frac{1}{b-1},0,0)$ is not a repeller.
\end{itemize}
\end{Thm}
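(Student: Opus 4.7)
The plan is to compute the Jacobian of the map $(x_n,y_n,z_n)\mapsto(x_{n+1},y_{n+1},z_{n+1})$ at the disease-free equilibrium $(1/(b-1),0,0)$, extract its characteristic polynomial, and apply the Schur--Cohn (Jury) criterion to the resulting quadratic factor. The key structural observation is that the partial derivatives $\partial F_2/\partial x$ and $\partial F_3/\partial x$ each carry a factor of $z_n$ and therefore vanish at the disease-free point. The Jacobian is thus block lower triangular with a $1\times 1$ block equal to $1/b$ and a $2\times 2$ block acting on $(y,z)$, so the characteristic polynomial factors as
\[
p(\lambda)=\Bigl(\tfrac{1}{b}-\lambda\Bigr)\Bigl[(d-\lambda)(f-\lambda)-\tfrac{e\,\lambda}{b-1}\Bigr].
\]
Crucially this expression is independent of $c$, and the linear factor always supplies the root $\lambda=1/b\in(0,1)$, so stability is governed entirely by the quadratic $q(\lambda):=\lambda^{2}-\bigl(d+f+\tfrac{e}{b-1}\bigr)\lambda+df$.

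For parts (i) and (ii), I will verify the Jury conditions on $q$. The constant term $df$ lies in $(0,1)$ since $d,f\in(0,1)$; the value $q(-1)=1+(d+f)+e/(b-1)+df$ is positive unconditionally; and
\[
q(1)=(1-d)(1-f)-\frac{e}{b-1}=\frac{(b-1)(1-d)(1-f)-e}{b-1},
\]
which is positive precisely when $\mathcal{R}_d<1$, yielding asymptotic stability in (i). If instead $\mathcal{R}_d>1$, the same computation forces $q(1)<0$ while $q(\lambda)\to+\infty$ as $\lambda\to+\infty$, so $q$ has a real root strictly greater than one; this establishes instability in (ii).

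For part (iii), the threshold identity $\mathcal{R}_d=1$ reads $e=(b-1)(1-d)(1-f)$, whence $d+f+e/(b-1)=d+f+(1-d)(1-f)=1+df$. Substituting gives $q(\lambda)=\lambda^{2}-(1+df)\lambda+df=(\lambda-1)(\lambda-df)$, and combining with the linear factor produces the eigenvalue set $\{1/b,1,df\}$. For the second family of equilibria $(1/(b-1),(1-f)\psi,\psi)$, which arises precisely when $c=0$ together with $\mathcal{R}_d=1$, the hypothesis $c=0$ already forces $\partial F_1/\partial z\equiv 0$, so the linearization is again block triangular with the very same $2\times 2$ lower block, and the same factorization applies verbatim.

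Part (iv) exploits the forward invariance of the $x$-axis $\{y=0,z=0\}$: if $y_n=z_n=0$ then $y_{n+1}=z_{n+1}=0$ and $x_{n+1}=(1+x_n)/b$, which is a contraction with globally attracting fixed point $1/(b-1)$ because $b>1$. Hence initial data of the form $(x_0,0,0)$ arbitrarily close to the disease-free equilibrium generate iterates that remain in every prescribed neighborhood of it---and in fact converge to it---directly contradicting the repeller definition. The only delicate bookkeeping is in (iii), where one must confirm that the presence of a nonzero $\psi$ does not alter the characteristic polynomial at the non-isolated equilibria; this follows from the $c=0$ reduction and the block-triangular structure, after which everything else reduces to a routine application of Jury's criterion and forward invariance.
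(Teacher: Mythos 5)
Your proof is correct: the Jacobian at $(\tfrac{1}{b-1},0,0)$, the factorization $p(\lambda)=\bigl(\tfrac1b-\lambda\bigr)\bigl[(d-\lambda)(f-\lambda)-\tfrac{e\lambda}{b-1}\bigr]$, the Jury conditions (with $q(1)$ changing sign exactly at $\mathcal{R}_d=1$, $q(-1)>0$ and $df\in(0,1)$ automatic), the $c=0$ reduction showing the same $2\times2$ block governs the equilibria $(\tfrac{1}{b-1},(1-f)\psi,\psi)$, and the invariant-axis contraction argument for (iv) all check out. The paper itself gives no proof of this statement --- it is imported verbatim from Theorem 4.4 of the cited reference \cite{chap7_DNJ} --- and your linearization/Schur--Cohn argument is precisely the standard route taken there; the only cosmetic slip is that at the disease-free point the Jacobian is block \emph{upper} (not lower) triangular in the ordering $(x;y,z)$, which does not affect the factorization.
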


\begin{Thm}[See \cite{chap7_DNJ}, Theorem 4.7]\label{chap7_t4.7}
Consider system (\ref{chap7_e4.4})$-$(\ref{chap7_e4.6}) with $b>1$, $c>0$,
$0<d<1$, $e>0 $, and $0<f<1$. If the reproduction number $\mathcal{R}_d$ is less than unity, then
the equilibrium point $(\frac{1}{b-1},0,0)$ is globally
asymptotically stable. Therefore the disease can be eradicated from the host.
\end{Thm}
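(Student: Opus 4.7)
The plan is to upgrade the local asymptotic stability proved in Theorem \ref{chap7_t4.4}(i) to global asymptotic stability by establishing global attractivity of the disease-free equilibrium $(1/(b-1),0,0)$. My strategy has two stages: first, bound the susceptible component $x_n$ from above by a constant arbitrarily close to $1/(b-1)$; second, use that bound to linearly majorize the $(y,z)$-subsystem by a $2\times 2$ nonnegative matrix whose spectral radius is strictly less than one precisely when $\mathcal{R}_d<1$.

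For the first stage, because $c>0$ and $z_n\geq 0$ (see Theorem \ref{chap7_t4.1}), the recurrence $x_{n+1}=(1+x_n)/(b+cz_n)$ is dominated by the affine contraction $x_{n+1}\leq (1+x_n)/b$. Comparing with $u_{n+1}=(1+u_n)/b$, whose attracting fixed point is $1/(b-1)$, yields $\limsup_{n\to\infty} x_n\leq 1/(b-1)$. Because $\mathcal{R}_d<1$ is exactly $e/(b-1)<(1-d)(1-f)$, I can choose $\epsilon>0$ small enough that $X:=1/(b-1)+\epsilon$ still satisfies $eX<(1-d)(1-f)$, and then pick $N$ so that $x_{n+1}\leq X$ for every $n\geq N$.

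For the second stage, substituting the bound $y_{n+1}\leq d\,y_n+eX\,z_n$ into $z_{n+1}=f\,z_n+y_{n+1}$ gives $z_{n+1}\leq d\,y_n+(f+eX)\,z_n$, so
\begin{equation*}
\begin{pmatrix} y_{n+1}\\ z_{n+1}\end{pmatrix}\leq M\begin{pmatrix} y_n\\ z_n\end{pmatrix},\qquad M=\begin{pmatrix} d & eX\\ d & f+eX\end{pmatrix},
\end{equation*}
componentwise for $n\geq N$. The characteristic polynomial of $M$ is $\lambda^2-(d+f+eX)\lambda+df$; the Jury/Schur--Cohn conditions collapse to $df<1$ (immediate from $0<d,f<1$) and $(1-d)(1-f)-eX>0$ (exactly what the choice of $\epsilon$ secures). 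Hence $\rho(M)<1$, the nonnegative iterate $M^{k}\to 0$, and the comparison forces $y_n,z_n\to 0$. With $z_n\to 0$, the algebraic identity
\begin{equation*}
x_{n+1}-\tfrac{1}{b-1}=\frac{x_n-\tfrac{1}{b-1}}{b+cz_n}-\frac{cz_n}{(b-1)(b+cz_n)}
\end{equation*}
produces $|x_{n+1}-1/(b-1)|\leq (1/b)|x_n-1/(b-1)|+\varepsilon_n$ with $\varepsilon_n\to 0$, and the standard lemma that such a perturbed contraction drives its iterates to zero yields $x_n\to 1/(b-1)$. Combining this global attractivity with the local stability from Theorem \ref{chap7_t4.4}(i) then gives global asymptotic stability.

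I expect the principal obstacle to be the sharp matching between the Jury inequality $eX<(1-d)(1-f)$ emerging from the linear majorant $M$ and the hypothesis $\mathcal{R}_d<1$: this sharpness is what forces $\mathcal{R}_d=1$ to be the exact bifurcation threshold and explains why the same algebraic quantity governs both local spectral behavior (Theorem \ref{chap7_t4.4}) and global attractivity. A secondary technical point is that the bound on $y_{n+1}$ must be propagated into the recurrence for $z_{n+1}$ before the comparison matrix is written down, which is why $d$ appears in both rows of the first column of $M$ and why a naive decoupled estimate on $y$ and $z$ separately would fail to capture the coupling built into the reproduction number.
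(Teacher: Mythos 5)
Your argument is correct, and it is worth noting that the paper does not actually reprove this statement: Theorem \ref{chap7_t4.7} is imported verbatim from \cite{chap7_DNJ}, and the global-attractivity machinery that this paper does develop (Theorem \ref{chap7_t3.n} and the proofs of Theorems \ref{chap7_t4.8} and \ref{chap7_t4.6.6}) is of the $\liminf/\limsup$ ``full limiting system'' type: one shows $m_x\le\limsup x_n\le 1/(b-1)$, feeds this into the relations $M_y\le d\,M_y+e\,M_x\,M_z$ and $M_z\le f\,M_z+M_y$, and closes the loop to get $M_z\le \mathcal{R}_d\,M_z$, forcing $M_z=0$ when $\mathcal{R}_d<1$. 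Your route replaces that fixed-point-of-limits bookkeeping by a linear majorization: after the elementary comparison $x_{n+1}\le(1+x_n)/b$ gives $\limsup x_n\le 1/(b-1)$, you dominate the $(y,z)$ subsystem by the nonnegative matrix $M=\begin{pmatrix} d & eX\\ d & f+eX\end{pmatrix}$ and check via Jury/Schur--Cohn that $\rho(M)<1$ exactly when $eX<(1-d)(1-f)$, which your choice of $\epsilon$ secures from $\mathcal{R}_d<1$; the determinant computation $\det M=df$ and the condition $p(1)=(1-d)(1-f)-eX>0$ are both right, and your final perturbed-contraction step for $x_n$ and the appeal to Theorem \ref{chap7_t4.4}(i) for local stability correctly assemble global asymptotic stability. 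The two approaches buy slightly different things: the $\limsup$ argument needs no spectral computation and extends directly to the endemic case (as in Theorem \ref{chap7_t4.8}), while your comparison-matrix argument makes the threshold role of $\mathcal{R}_d$ transparent as a spectral-radius condition and even yields a geometric decay rate for $(y_n,z_n)$. The only hypothesis you use implicitly is nonnegativity of the orbit (needed both for $x_{n+1}\le(1+x_n)/b$ and for iterating the componentwise inequality), but this is the paper's standing assumption and is preserved by the recursion when $c>0$ and $(x_0,y_0,z_0)\in\mathbb{R}^3_{+}$, so there is no gap.
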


\begin{Thm}[See \cite{chap7_DNJ}, Theorem 4.5]\label{chap7_t4.5}
Consider system (\ref{chap7_e4.4})$-$(\ref{chap7_e4.6}) with $b>1$, $0<d<1$,
$e>0$, and $0<f<1$. Then the following statements are true.
\begin{itemize}
\item[(i)]
If $\mathcal{R}_d>1$ holds, then the equilibrium point
(\ref{chap7_e4.8}) is asymptotic stable. In addition, the system has no
equilibrium point in $\mathbb{R}_{>0}^{3}(=
\left\{(x,y,z)|~x>0,~y>0,~z>0 \right\})$ when $c\leq0$.
\item[(ii)]
If $\mathcal{R}_d<1$ holds, then the equilibrium point
(\ref{chap7_e4.8}) is unstable. In addition, the system has no
equilibrium point in $\mathbb{R}_{>0}^{3}$ when $c\geq0$.
 \item[(iii)]The equilibrium point (\ref{chap7_e4.8}) is not a repeller.
\end{itemize}
\end{Thm}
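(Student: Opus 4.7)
The plan is to linearize the explicit map $(x_n,y_n,z_n)\mapsto(x_{n+1},y_{n+1},z_{n+1})$ obtained from (\ref{chap7_e4.4})--(\ref{chap7_e4.6}) at the endemic equilibrium (\ref{chap7_e4.8}) and invoke the Jury (Schur--Cohn) criterion on the resulting $3\times 3$ Jacobian $J$. Using the identities $e\bar x=(1-d)(1-f)$ and $c\bar z=(\mathcal{R}_d-1)(b-1)=B-b$, where $B:=b+c\bar z=1+\mathcal{R}_d(b-1)$, every entry of $J$ reduces to a tractable rational expression. Crucially, the second and third rows of $J$ coincide in the first two columns and differ in the $(3,3)$-entry only by the additive term $f$; subtracting row $2$ from row $3$ exposes the block structure
\[
J' = \begin{pmatrix} 1/B & 0 & * \\ * & d & * \\ 0 & 0 & f \end{pmatrix},
\]
so that $\det J=\det J'=df/B$, and the characteristic polynomial takes the form
\[
P(\lambda)=\lambda^{3}+a_2\lambda^{2}+a_1\lambda+a_0,\qquad a_0=-\tfrac{df}{B}.
\]

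\emph{Existence in $\mathbb{R}_{>0}^3$.} From (\ref{chap7_e4.8}) the coordinates $\bar y$ and $\bar z$ share the sign of $\bigl(e+(1-b)(1-d)(1-f)\bigr)/c$, and this numerator is positive precisely when $\mathcal{R}_d>1$. Positivity of the endemic point therefore forces $c>0$ when $\mathcal{R}_d>1$ and $c<0$ when $\mathcal{R}_d<1$, which is exactly the non-existence content of (i) and (ii).

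\emph{Local stability.} A direct evaluation gives
\[
P(1)=\frac{(1-d)(1-f)(B-b)}{B}=\frac{(1-d)(1-f)\,c\bar z}{B}.
\]
For (ii), the hypothesis $\mathcal{R}_d<1$ combined with the existence condition $c<0$ yields $c\bar z<0$, whence $P(1)<0$; since $P(\lambda)\to +\infty$ as $\lambda\to+\infty$, the polynomial admits a real root exceeding $1$, proving instability. For (i), I intend to verify the four Jury inequalities $P(1)>0$, $P(-1)<0$, $|a_0|<1$, and $|a_0^2-1|>|a_0a_2-a_1|$. The first three are immediate from $B>1$, $d,f\in(0,1)$, and the formula for $P(1)$ (now positive because $c\bar z>0$). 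The hard part will be the last inequality, whose sign depends non-trivially on the parameters; my intended route is to clear denominators by multiplying through by $B^{3}$, expand both sides as polynomials in $d,f,\mathcal{R}_d$, and regroup so as to factor out $(1-d)(1-f)$ and reduce the inequality to a sum of manifestly non-negative terms using $B>b>1$ and the already-proved positivity of $P(1)$.

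\emph{Not a repeller.} The computation above has $|\det J|=df/B<1$ under the standing hypotheses $d,f\in(0,1)$, $B>1$. Consequently the three eigenvalues of $J$ cannot all lie outside the closed unit disk, so at least one eigenvalue $\lambda_\ast$ of $J$ satisfies $|\lambda_\ast|<1$. The corresponding (linear) stable direction supplies, in every neighbourhood of (\ref{chap7_e4.8}), a non-trivial family of initial data whose orbits remain within that neighbourhood. This contradicts the repeller definition of Section~\ref{sec3} and establishes (iii).
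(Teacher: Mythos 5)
First, note that the paper itself does not prove this theorem: it is imported verbatim from the reference [chap7\_DNJ, Theorem 4.5], so there is no in-paper argument to match; your linearization-plus-Schur--Cohn route is the natural one, and much of it checks out. Your bookkeeping is correct: with $B=b+c\bar z=1+(b-1)\mathcal{R}_d$ and $e\bar x=(1-d)(1-f)$ one indeed gets $\det J=df/B$, the characteristic polynomial $P(\lambda)=\lambda^3-\bigl(\tfrac{1+b(1-d)(1-f)}{B}+d+f\bigr)\lambda^2+\bigl(\tfrac{d+f+(1-d)(1-f)}{B}+df\bigr)\lambda-\tfrac{df}{B}$, and $P(1)=\tfrac{(1-d)(1-f)(B-b)}{B}$; hence instability in (ii) follows (in fact without invoking any ``existence condition $c<0$'': $c\bar z=(b-1)(\mathcal{R}_d-1)$ is sign-determined by $\mathcal{R}_d$ alone, so the spectrum does not see the sign of $c$). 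The nonexistence statements and the not-a-repeller argument via $|\det J|=df/B<1$ plus a one-dimensional local stable manifold are also fine, modulo a one-line remark that for $c=0$ the point \eqref{chap7_e4.8} is not even defined.

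The genuine gap is in part (i), which is the heart of the theorem: you never verify the fourth Jury/Schur--Cohn inequality $1-a_0^2>|a_0a_2-a_1|$, you only announce an ``intended route.'' This is not a routine omission, because that inequality is exactly where the stability content lives and it is tight: writing $t=1/B\in(0,1/b)$, the direction $1-a_0^2>-(a_0a_2-a_1)$ reduces, after the regrouping you gesture at, to
\begin{equation*}
(1-df)(1-dt)(1-ft)\;>\;t\,(1-d)(1-f)\,(1-b\,df\,t),
\end{equation*}
which degenerates to an equality as $d,f\to1$ or as $b\to1$, $t\to1$; so no crude bounding of the right-hand side by $t(1-d)(1-f)$ can work, and an actual argument is needed (e.g.\ observe the worst case is $b\downarrow1$, then check that the resulting quadratic in $t$ vanishes at $t=1$ and factors as $df(2-d-f)(1-t)(t_2-t)$ with $t_2=\tfrac{1-df}{df(2-d-f)}\ge1$ because $df(3-d-f)\le1$ on $(0,1)^2$). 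The other direction $1-a_0^2>a_0a_2-a_1$ does turn out to be a sum of visibly positive terms, but as submitted your proposal proves (ii) and (iii) while leaving (i) --- the asymptotic stability claim for $\mathcal{R}_d>1$ --- unestablished; either carry out the computation above or simply cite [chap7\_DNJ] as the paper does.
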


Theorems \ref{chap7_t4.4} and \ref{chap7_t4.5} show that a bifurcation occurs
when  $\mathcal{R}_d=1$ (See \cite[Corollary 4.6]{chap7_DNJ}).  On the other hand, the disease will persist if $\mathcal{R}_d$ exceeds
unity, where a stable endemic equilibrium exists. This phenomenon, where the disease-free equilibrium
loses its stability and a stable endemic equilibrium appears as $\mathcal{R}_d$ increases through one, is
known as {\em forward bifurcation} \cite{chap7_Castillo1,chap7_Castillo2,chap7_Castillo3,chap7_Dushoff}. This pattern was first noted by Kermack and McKendrick \cite{chap7_Kermack}, and has been observed in the vast majority of disease transmission models in the literature
ever since (see \cite{chap7_Dushoff,chap7_Hethcote1,chap7_Hethcote2,chap7_Lajmanovich} and the references therein). For models that exhibit forward bifurcation,
the requirement $\mathcal{R}_d<1$ is necessary and sufficient for disease elimination. Furthermore, these theorems
express that the equilibria (\ref{chap7_e4.7}) and (\ref{chap7_e4.8}) are not
asymptotically stable simultaneously.
%Therefore, we obtain the following result.
%\begin{Cor}[See \cite{chap7_DNJ}, Corollary 4.6.]\label{chap7_t4.6}
%Consider system of difference equations (\ref{chap7_e4.4})$-$(\ref{chap7_e4.6})
%with $b>1$, $0<d<1$, $e>0$, and $0<f<1$. If $\mathcal{R}_d=1$, then a forward bifurcation occurs. In addition, the bistability phenomenon does not occur.
%\end{Cor}

Now we give the main question of this work which was given as an Open Problem in  \cite[Section 4.3]{chap7_DNJ}. Hopefully we then show that, based on the parameter set requirement, the conjecture is true.\\

{\bf Open Problem :} Show that the positive equilibrium point (\ref{chap7_e4.8}) is globally asymptomatically stable if $\mathcal{R}_d>1$  along with the following conditions imposed to the paramors: $$b>1, \,\,c>0, \,\,0<d<1, \,\,e>0, \mbox{ and } 0<f<1.$$\\

In the first place, we show that the this problem holds under the assumption when $e>(b-1)(1-d)$.
But before demonstrating this statement, we need to state the following theorem as things develop.

\begin{Thm}\label{chap7_t3.n}
Let $[a_1,b_1]$, $[a_2,b_2]$, and  $[a_3,b_3]$ be intervals of real numbers, and let
\begin{equation*}
f_i:[a_1,b_1]\times[a_2,b_2]\times[a_3,b_3]\rightarrow[a_i,b_i],\qquad \mbox{for }~~ i\in\{1,2,3\}
\end{equation*}
be continuous functions. Consider the system of difference equations
\begin{equation}\label{chap7_e11}
\begin{array}{llll}
    x_{n+1}=f_1(x_n,,z_n),\\
    y_{n+1}=f_2(x_{n+1},y_n,z_n), & & n=0,1,2,...\\
    z_{n+1}=f_3(y_{n+1},z_n),
\end{array}
\end{equation}
with initial conditions $(x_0,y_0,z_0)\in [a_1,b_1]\times[a_2,b_2]\times[a_3,b_3]$. Suppose that the following statements are true.
\begin{enumerate}
    \item $f_1(x,z)$ is non-increasing in $z$ and is non-decreasing in
    $x$.

    \item $f_2(x,y,z)$ is non-decreasing in $x$, $y$, and  $z$.

    \item $f_3(y,z)$ is non-decreasing in  $y$ and  $z$.

    \item If $(m_i,M_i)\in[a_i,b_i]^2$, for $i\in\{1,2,3\}$, is a
    solution of the system of equations
    \begin{eqnarray}\label{chap7_e4.30}
    m_i=f_i(m_1,m_2,m_3) & \mbox{and} &  M_i=f_i(M_1,M_2,M_3) \qquad \mbox{for }~~ i\in\{1,2,3\}
    \end{eqnarray}
    then $m_1=M_1$, $m_2=M_2$, and  $m_3=M_3$.

    Then there exists exactly
    one equilibrium point $(\bar{x},\bar{y},\bar{z})$
    of system (\ref{chap7_e11}) and every solution of system (\ref{chap7_e11})
    converges to $(\bar{x},\bar{y},\bar{z})$.
\end{enumerate}
\end{Thm}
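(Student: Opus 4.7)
The plan is to apply a three-dimensional mixed-monotone envelope argument in the spirit of the Kulenovi\'c--Ladas $m$--$M$ technique. I would construct two auxiliary sequences of triples in the box $[a_1,b_1]\times[a_2,b_2]\times[a_3,b_3]$: a lower envelope $(m_1^{(k)},m_2^{(k)},m_3^{(k)})$ initialized at $m_i^{(0)}=a_i$, and an upper envelope $(M_1^{(k)},M_2^{(k)},M_3^{(k)})$ initialized at $M_i^{(0)}=b_i$. Each envelope iterate mimics the recursion (\ref{chap7_e11}) with every argument replaced by its extremal bound according to the monotonicity prescribed in (1)--(3); because $f_1$ is non-increasing in $z$, the $x$-updates form the mixed pair
\[
M_1^{(k+1)}=f_1\bigl(M_1^{(k)},\,m_3^{(k)}\bigr),\qquad m_1^{(k+1)}=f_1\bigl(m_1^{(k)},\,M_3^{(k)}\bigr),
\]
while the fully monotone $f_2$ and $f_3$ yield $M_2^{(k+1)}=f_2(M_1^{(k+1)},M_2^{(k)},M_3^{(k)})$ and $M_3^{(k+1)}=f_3(M_2^{(k+1)},M_3^{(k)})$, with the obvious lower-envelope analogues.

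The next step is a double induction. First, for each $k$ I would prove that $m_1^{(k)}\le x_{n+k}\le M_1^{(k)}$, $m_2^{(k)}\le y_{n+k}\le M_2^{(k)}$, and $m_3^{(k)}\le z_{n+k}\le M_3^{(k)}$ for every $n\ge 0$: the base case is immediate because the $f_i$ map into $[a_i,b_i]$, and the inductive step applies hypotheses (1)--(3) each once in the direction dictated by the corresponding monotonicity. Second, using $M_i^{(1)}\le b_i=M_i^{(0)}$ and $m_i^{(1)}\ge a_i=m_i^{(0)}$, I would propagate the monotonicities a second time to show that $\{M_i^{(k)}\}_k$ is non-increasing and $\{m_i^{(k)}\}_k$ is non-decreasing. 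Both are bounded in the box, hence convergent to limits $M_i$ and $m_i$ with $m_i\le M_i$.

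Passing to the limit in the envelope recurrences by continuity of the $f_i$, the limits satisfy the mixed $x$-equations $m_1=f_1(m_1,M_3)$ and $M_1=f_1(M_1,m_3)$ together with the diagonal $y$- and $z$-equations $m_i=f_i(m_1,m_2,m_3)$ and $M_i=f_i(M_1,M_2,M_3)$ for $i=2,3$. Invoking hypothesis (4) on this coupled limit system forces $m_i=M_i$ for $i=1,2,3$; indeed, once one shows (from the $y$- and $z$-equations alone) that $m_3=M_3$, the mixed $x$-equations collapse to the diagonal form and (4) applies in full. The squeeze from the first induction then delivers $(x_n,y_n,z_n)\to(\bar x,\bar y,\bar z):=(m_1,m_2,m_3)$, and since any equilibrium of (\ref{chap7_e11}) is simultaneously a valid choice of $(m_1,m_2,m_3)=(M_1,M_2,M_3)$, hypothesis (4) delivers uniqueness of the equilibrium in parallel.

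The main obstacle I anticipate is the careful bookkeeping in the inductive steps: one must ensure that the mixed monotonicity of $f_1$ in $z$ is absorbed correctly, and that the $y$- and $z$-envelope updates invoke the already-refreshed $M_1^{(k+1)}, m_1^{(k+1)}$ (and subsequently $M_2^{(k+1)}, m_2^{(k+1)}$), reflecting the staggered order of updates in (\ref{chap7_e11}). A secondary subtlety is the reconciliation of the mixed $x$-envelope with the diagonal form in hypothesis (4); the cleanest path is to apply (4) first to the $y$- and $z$-components to force $m_3=M_3$, at which point the mixed $x$-equations reduce to the diagonal form required by (4).
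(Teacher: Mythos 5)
Your envelope construction, the double induction, and the passage to the limit are exactly the standard $m$--$M$ argument that the paper itself delegates to the cited references, and that part of your proposal is sound. The genuine gap is in your final ``reconciliation'' step. Hypothesis (4) is a single implication whose premise consists of \emph{all six} diagonal equations $m_i=f_i(m_1,m_2,m_3)$, $M_i=f_i(M_1,M_2,M_3)$; you cannot invoke it ``on the $y$- and $z$-components alone,'' because your limits satisfy the crossed $x$-equations $m_1=f_1(m_1,M_3)$, $M_1=f_1(M_1,m_3)$ rather than the diagonal ones, so the premise of (4) is simply not verified. Moreover, the claim that the $y$- and $z$-limit equations by themselves force $m_3=M_3$ is false in general: take $[a_i,b_i]=[0,1]$, $f_1(x,z)=1-z$, $f_2(x,y,z)=x$, $f_3(y,z)=y$. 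These satisfy the monotonicity conditions (1)--(3) and the \emph{diagonal} form of (4) (any diagonal solution has all entries equal to $1/2$), yet the system reduces to $z_{n+1}=1-z_n$, which has genuine period-two solutions; correspondingly the crossed limit system admits $m_3=0$, $M_3=1$, so neither your bridging claim nor the theorem under the literal diagonal reading of (4) can hold.

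The correct repair is not to collapse the crossed equations but to read condition (\ref{chap7_e4.30}) the way the paper itself uses it when applying the theorem: in the proof of Theorem \ref{chap7_t4.8} the verified system (\ref{chap7_e5.9})--(\ref{chap7_e5.11}) is precisely the \emph{mixed} one, with $M_z$ appearing in the equation for $m_x$ and $m_z$ in the equation for $M_x$, while the $y$- and $z$-equations are diagonal. With (4) interpreted as uniqueness for that crossed system (the standard hypothesis in the Kulenovi\'c--Ladas framework for a coordinate in which the map is non-increasing), your limit triple satisfies its premise verbatim and the conclusion $m_i=M_i$ follows at once; the squeeze and the uniqueness of the equilibrium then go through exactly as you wrote them. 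So the proof is recovered by deleting the two-step collapse and matching hypothesis (4) to the crossed limit system, not by deriving $m_3=M_3$ from a subsystem.
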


\begin{proof}
It is obtained in the same argument as  \cite{chap7_jaberi1,chap7_jaberi2}.
\end{proof}

We showed that  solutions of system (\ref{chap7_e4.4})$-$(\ref{chap7_e4.6}) are
bounded from above in (\ref{chap7_e4.24})$-$(\ref{chap7_e4.26}) (see Theorem
\ref{chap7_t4.1}). We now demonstrate that solutions are bounded from below, namely
$(y_n,z_n)$ are away from $(0,0)$ (this was already shown for
$x_n$).

\begin{Thm}\label{chap7_t4.5.n}
Assume that $e>(b-1)(1-d)$  and  $(y_0,z_0)\neq(0,0)$ hold. Then
every solution of system (\ref{chap7_e4.4})$-$(\ref{chap7_e4.6}) is bounded from below, i.e. $\liminf\,(x_n,y_n,z_n)>(0,0,0)$.
\end{Thm}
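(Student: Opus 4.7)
The plan is to argue by contradiction, combining a subsequence analysis with a Perron--Frobenius Lyapunov estimate at the disease-free equilibrium (DFE). First I would reduce to proving $\liminf z_n>0$: Theorem~\ref{chap7_t4.1} already gives $\liminf x_n\geq\Gamma>0$, and the bounds $y_{n+1}\geq e\,x_{n+1}\,z_n\geq e\Gamma z_n$ (once $x_{n+1}\geq\Gamma$), together with $z_{n+1}\geq y_{n+1}$, show that $\liminf y_n$ and $\liminf z_n$ are simultaneously strictly positive, so it suffices to handle $z_n$.

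Assuming for contradiction that $\liminf z_n=0$, I would extract $n_k\to\infty$ with $z_{n_k}\to 0$. Since $z_{n_k}=f z_{n_k-1}+y_{n_k}$ has non-negative summands, both $y_{n_k}$ and $z_{n_k-1}$ tend to $0$; iterating backward and passing to nested subsequences gives $y_{n_k-j},z_{n_k-j}\to 0$ for each fixed $j\geq 0$. Taking limits in the $x$-recursion yields $x^{(j-1)}=(1+x^{(j)})/b$ between the cluster values $x^{(j)}:=\lim_k x_{n_k-j}$; combined with $x_n\in[\Gamma,1/(b-1)]$ this forces $x^{(j)}=1/(b-1)$ for every $j\geq 0$ (otherwise the backward iterates $x\mapsto bx-1$ would exit the bounded interval). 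A symmetric forward argument then gives $(x_{n_k+j},y_{n_k+j},z_{n_k+j})\to(1/(b-1),0,0)$ for each fixed $j\geq 0$.

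Next, I would exploit that $e>(b-1)(1-d)>(b-1)(1-d)(1-f)$ gives $\mathcal R_d>1$, so the positive matrix linearizing the $(y,z)$-dynamics at the DFE,
\[
M:=\begin{pmatrix} d & e/(b-1) \\ d & f+e/(b-1)\end{pmatrix},
\]
has Perron eigenvalue $\lambda_+>1$ (its characteristic polynomial at $\lambda=1$ equals $(1-d)(1-f)-e/(b-1)<0$), with a strictly positive left Perron eigenvector $u=(u_1,u_2)$. Setting $\phi_n:=u_1 y_n+u_2 z_n>0$, a direct calculation yields $\phi_{n+1}=\lambda_+\phi_n-e(u_1+u_2)\xi_{n+1}z_n$ with $\xi_n:=1/(b-1)-x_n\geq 0$, and using $z_n\leq\phi_n/u_2$ one extracts the key Lyapunov inequality
\[
\phi_{n+1}\geq (\lambda_+-K\,\xi_{n+1})\,\phi_n,\qquad K:=e(u_1+u_2)/u_2.
\]
If the orbit itself converged to the DFE, then $\xi_n\to 0$ would make $\phi_{n+1}\geq\mu\phi_n$ hold eventually for any fixed $\mu\in(1,\lambda_+)$, forcing $\phi_n\to\infty$ and contradicting the uniform upper bound from Theorem~\ref{chap7_t4.1}.

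The hard part will be converting this into a full contradiction with $\liminf z_n=0$. The $\omega$-limit set $\Omega$ of the orbit is compact and invariant, contains the DFE (by the subsequence step) but is not equal to $\{(1/(b-1),0,0)\}$ (by the argument just given). I would then invoke a Butler--McGehee-type lemma, adapted to the discrete semidynamical system via invariant-manifold theory for maps at the hyperbolic saddle, to conclude that $\Omega$ must contain the unstable manifold $W^u$ of the DFE; Perron positivity ensures $W^u$ enters the positive octant, and by Theorem~\ref{chap7_t4.5}(i) its forward closure reaches the asymptotically stable endemic equilibrium, at which $z$ is bounded away from $0$---the desired contradiction with $z_{n_k}\to 0$. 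The principal obstacle is precisely making this Butler--McGehee step fully rigorous in the discrete setting (requiring hyperbolic invariant-manifold theory for maps), or alternatively invoking an abstract uniform-persistence theorem (e.g.\ Hale--Waltman or Hofbauer--So), which applies because the DFE is isolated, acyclic in the boundary $\{y=z=0\}$, and linearly repelling via $\lambda_+>1$.
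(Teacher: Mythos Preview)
Your approach is correct in outline but far heavier than what the paper does, and it does not exploit the specific hypothesis $e>(b-1)(1-d)$ at all---you only use $\mathcal R_d>1$, which is strictly weaker. The paper's key observation is that $z_{n+1}=f z_n+y_{n+1}\geq y_{n+1}$, so $z_n\geq y_n$ for every $n\geq 1$; substituting into the $y$-recursion gives the \emph{scalar} inequality
\[
y_{n+1}=d\,y_n+e\,x_{n+1}\,z_n\;\geq\;\bigl(d+e\,x_{n+1}\bigr)\,y_n.
\]
The hypothesis $e>(b-1)(1-d)$ is exactly the statement $d+\dfrac{e}{b-1}>1$, so once the orbit is assumed to converge to the DFE one picks $\epsilon>0$ with $d+e\bigl(\tfrac{1}{b-1}-\epsilon\bigr)\geq 1$ and an $N$ with $x_n>\tfrac{1}{b-1}-\epsilon$ for $n\geq N$; then $y_{n+1}\geq y_n\geq\cdots\geq y_N>0$, contradicting $y_n\to 0$. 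This one-line monotonicity argument replaces your entire Perron eigenvector construction for $\phi_n$.

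Where you are genuinely more careful than the paper is in the passage from $\liminf z_n=0$ to full convergence $(x_n,y_n,z_n)\to(\tfrac{1}{b-1},0,0)$. The paper simply asserts this reduction ``can be shown simply from (\ref{chap7_e4.4})--(\ref{chap7_e4.6})'' and proceeds; you correctly identify it as the crux and reach for Butler--McGehee/Hofbauer--So persistence theory, which you yourself flag as not fully written out. So both arguments are incomplete at the same juncture, but the paper's route to the contradiction-from-convergence is dramatically shorter, and its method is what motivates the later Theorem~\ref{chap7_t4.5.n.n}, where the same $z_n\geq y_n$ trick is iterated $l$ times to handle the weaker hypothesis $(b-1)(1-d)(1-f)<e$.
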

\begin{proof}
It was already shown in (\ref{chap7_e4.24}) that  $x_n>0$.
One can use equations
(\ref{chap7_e4.4})$-$(\ref{chap7_e4.6}) to show simply that if $\limsup_{n\rightarrow\infty} y_n$ or $\liminf_{n\rightarrow\infty} z_n$ equals
zero, then $\{y_n\}_{n=0}^{\infty}$ and $\{z_n\}_{n=0}^{\infty}$
converge to zero as well. Now  assume on the contrary that one of the sequences $y_n$ or $z_n$ converges to zero,
then the other one converges to zero as well. So
$(x_n,y_n,z_n)\rightarrow(\frac{1}{b-1},0,0)$ with $y_n>0$ and $z_n>0$ for all $n\in\mathbb{N}$. Let $\epsilon>0$ satisfy
\begin{eqnarray}\label{chap7_e4.5.1.1}
(b-1)(d-1)+e-e(b-1)\epsilon\ge0,
\end{eqnarray}
then there is $N$ such that $x_n$ is very close to $\frac{1}{b-1}$ and for $n\ge N$
\[
\frac{1}{b-1}-\epsilon<x_n<\frac{1}{b-1}+\epsilon.
\]
As a result, for $n\ge N$,
\[
y_{n+1} > d y_n + e \left(\frac{1}{b-1}-\epsilon\right)z_n.
\]
Using the fact that $z_n \ge y_n$, we have
\[
y_{n+1} > d y_n + e (\frac{1}{b-1}-\epsilon)y_n=
\left(d+e(\frac{1}{b-1}-\epsilon)\right)y_n.
\]
In view of (\ref{chap7_e4.5.1.1}), we have $ y_{n+1} > y_n> ...> y_N$, which is a contradiction. So the desired result follows.
\end{proof}

Now we present one of the main theorems of this work which provide partially an answer to
 the Open Problem given in \cite{chap7_DNJ}  when $e>(b-1)(1-d)$.

\begin{Thm}\label{chap7_t4.8}
Assume that $e>(b-1)(1-d)$ holds. Then the
equilibrium point (\ref{chap7_e4.8}) of system (\ref{chap7_e4.4})$-$(\ref{chap7_e4.6}) is globally asymptotically stable.
\end{Thm}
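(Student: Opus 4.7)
The plan is to invoke Theorem~\ref{chap7_t3.n} (the $m$--$M$ monotonicity principle) on a compact positive box that eventually traps the orbit, and then show that the associated $m$--$M$ system forces $m_i=M_i$. First, I would use Theorem~\ref{chap7_t4.1} to get uniform upper envelopes and a strictly positive lower envelope for $x_n$, and Theorem~\ref{chap7_t4.5.n} --- which applies precisely because the hypothesis $e>(b-1)(1-d)$ of the present theorem is exactly what Theorem~\ref{chap7_t4.5.n} requires --- to get strictly positive lower envelopes for $y_n$ and $z_n$. After some index $N$ the orbit $(x_n,y_n,z_n)$ therefore lies in a compact box $[a_1,b_1]\times[a_2,b_2]\times[a_3,b_3]\subset(0,\infty)^3$ on which the map $F=(f_1,f_2,f_3)$ acts.

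Next I would check the three monotonicity hypotheses of Theorem~\ref{chap7_t3.n}. With $c>0$, the map $f_1(x,z)=(1+x)/(b+cz)$ is non-decreasing in $x$ and non-increasing in $z$; since $d,e>0$, the map $f_2(x,y,z)=dy+e\,xz$ is non-decreasing in every argument; and $f_3(y,z)=fz+y$ is obviously non-decreasing in both. So the monotonicity hypotheses hold on the trapping box.

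The crux is the uniqueness condition~(4). With the natural $m/M$ swap in the argument in which $f_1$ is decreasing, the $m$--$M$ system reads
\begin{align*}
m_1(b+cM_3)&=1+m_1, & M_1(b+cm_3)&=1+M_1,\\
(1-d)m_2&=e\,m_1\,m_3, & (1-d)M_2&=e\,M_1\,M_3,\\
(1-f)m_3&=m_2, & (1-f)M_3&=M_2.
\end{align*}
The third pair gives $m_3=m_2/(1-f)$ and $M_3=M_2/(1-f)$. Inserting these into the second pair and using $m_2>0$ (guaranteed by Theorem~\ref{chap7_t4.5.n}), I can cancel $m_2$ and $M_2$ to obtain
\[
m_1=M_1=\frac{(1-d)(1-f)}{e}.
\]
Call this common value $X$. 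Subtracting the two first-row equations yields $cX(M_3-m_3)=0$, and since $c,X>0$ this forces $m_3=M_3$, whence $m_2=M_2$. Theorem~\ref{chap7_t3.n} then gives convergence of every orbit in the trapping box to the unique equilibrium it contains, which must be the endemic point~(\ref{chap7_e4.8}) (the disease-free point is excluded because $a_2,a_3>0$). Combined with the local asymptotic stability of~(\ref{chap7_e4.8}) from Theorem~\ref{chap7_t4.5}(i) (applicable because $e>(b-1)(1-d)\ge(b-1)(1-d)(1-f)$, so $\mathcal{R}_d>1$), this upgrades attractivity to global asymptotic stability on the physically relevant domain $\{(y_0,z_0)\neq(0,0)\}$.

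The main obstacle I anticipate is the careful construction of the trapping box: Theorems~\ref{chap7_t4.1} and~\ref{chap7_t4.5.n} only control $\liminf$ and $\limsup$, not an a priori invariant region, so I would have to pass to a sub-tail of the orbit, inflate the $\liminf/\limsup$ values by an arbitrarily small $\epsilon$, and argue that $F$ maps the resulting closed box into itself in the limit. A secondary subtlety is the mismatch between the $m$--$M$ system I used above and the literal statement of Theorem~\ref{chap7_t3.n} (which, as written, pairs $m_i$ with $(m_1,m_2,m_3)$ throughout); this is harmless provided one reads the theorem with the standard sign convention that swaps $m_j\leftrightarrow M_j$ in arguments of decreasing monotonicity, or equivalently changes variables in $f_1$ to monotonize it before applying Theorem~\ref{chap7_t3.n}.
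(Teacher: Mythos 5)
Your proposal is correct and follows essentially the same route as the paper: boundedness from Theorem \ref{chap7_t4.1}, strictly positive lower bounds from Theorem \ref{chap7_t4.5.n} (whose hypothesis is exactly $e>(b-1)(1-d)$), the mixed-monotone $m$--$M$ system of Theorem \ref{chap7_t3.n} with the same swapped pairing as in the paper's equations (\ref{chap7_e5.9})--(\ref{chap7_e5.11}), and local stability from Theorem \ref{chap7_t4.5}. Your elimination order (deducing $m_1=M_1=(1-d)(1-f)/e$ and then $m_3=M_3$) differs only cosmetically from the paper's forward substitution yielding $m_z=M_z$ first, and the two subtleties you flag (building the trapping box from $\liminf/\limsup$ and the sign convention in Theorem \ref{chap7_t3.n}) are real but are handled no more explicitly in the paper's own proof.
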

\begin{proof}
It was shown in Theorem \ref{chap7_t4.5} that the
equilibrium point (\ref{chap7_e4.8}) of system (\ref{chap7_e4.4})$-$(\ref{chap7_e4.6}) is locally asymptotically stable.
It is just required to demonstrate that every solution of system (\ref{chap7_e4.4})$-$(\ref{chap7_e4.6}) converges to the
equilibrium point (\ref{chap7_e4.8}). We use the result of Theorems \ref{chap7_t3.n} and \ref{chap7_t4.5.n}.
It follows by Theorem \ref{chap7_t4.5.n} that there are $(m_x,M_x)$, $(m_y,M_y)$, and $(m_z,M_z)$, a solution of (\ref{chap7_e4.30}), such that
\begin{eqnarray}
0<m_x = \liminf x_n \le x_n \le \limsup x_n = M_x, \\
0<m_y = \liminf y_n \le y_n \le \limsup y_n = M_y, \\
0<m_z = \liminf z_n \le z_n \le \limsup z_n = M_z.
\end{eqnarray}
Now if we show that $m_x=M_x$, $m_y=M_y$, and $m_z=M_z$, then we prove the part (4) of Theorem \ref{chap7_t3.n} and other parts of this theorem are clearly true. To this end, we substitute $(m_x,M_x)$, $(m_y,M_y)$, and $(m_z,M_z)$ in
system (\ref{chap7_e4.4})$-$(\ref{chap7_e4.6}), it then follows that
\begin{eqnarray}
m_x=\frac{1+m_x}{b+c\,M_z} & \mbox{and} & M_x=\frac{1+M_x}{b+c\,m_z},\label{chap7_e5.9}\\
m_y=d\,m_y+e\,m_x\,m_z & \mbox{and} & M_y=d\,M_y+e\,M_x\,M_z,\label{chap7_e5.10}\\
m_z=f\,m_z + m_y \quad &\mbox{and} & M_z=f\,M_z + M_y .\label{chap7_e5.11}
\end{eqnarray}
First isolating $m_x$ and $M_x$ in (\ref{chap7_e5.9}) give
\begin{eqnarray*}
m_x=\frac{1}{b-1+c\,M_z} & \mbox{and} & M_x=\frac{1}{b-1+c\,m_z},
\end{eqnarray*}
we then replace $m_x$ and $M_x$ in (\ref{chap7_e5.10}). Likewise solving (\ref{chap7_e5.10}) for $m_y$ and $M_y$  gives
\begin{eqnarray*}
m_y=\frac{e\,m_z}{(1-d)(b-1+c\,M_z)} & \mbox{and} & M_y=\frac{e\,M_z}{(1-d)(b-1+c\,m_z)},
\end{eqnarray*}
and  we are now able to generate $m_z$ and $M_z$ in (\ref{chap7_e5.11}),
\begin{eqnarray*}
m_z=\frac{e\,m_z}{(1-f)(1-d)(b-1+c\,M_z)} & \mbox{and} & M_z=\frac{e\,M_z}{(1-f)(1-d)(b-1+c\,m_z)},
\end{eqnarray*}
which is equivalent to
\begin{eqnarray*}
1=\frac{e}{(1-f)(1-d)(b-1+c\,M_z)} & \mbox{and} & 1=\frac{e}{(1-f)(1-d)(b-1+c\,m_z)}.
\end{eqnarray*}
This shows that $m_z=M_z$, so are $m_x=M_x$ and $m_y=M_y$. As a result the assumptions of Theorem \ref{chap7_t3.n} hold, then there exists exactly
    one equilibrium point (\ref{chap7_e4.8})
    of system  (\ref{chap7_e4.4})$-$(\ref{chap7_e4.6}) and every solution of system  (\ref{chap7_e4.4})$-$(\ref{chap7_e4.6})
    converges to (\ref{chap7_e4.8}).
\end{proof}

\begin{Thm}\label{chap7_t4.6.6}
Consider system of difference equation (\ref{chap7_e4.4})$-$(\ref{chap7_e4.6})
with $b>1$, $c>0$, $0<d<1$, $e>0$, and $0<f<1$. Assume that
$(x_0,y_0,z_0)\in \mathbb{R}^3_{+}$, then
$\{(x_n,y_n,z_n)\}_{n=0}^{\infty}$ converges to (\ref{chap7_e4.7}) or
(\ref{chap7_e4.8}).
\end{Thm}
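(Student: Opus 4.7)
\emph{Overview.} The plan is to use the monotone structure of the recurrence together with the a-priori boundedness from Theorem~\ref{chap7_t4.1}, following the strategy already in place for Theorems~\ref{chap7_t3.n} and~\ref{chap7_t4.8}, to reduce convergence to a purely algebraic analysis of the lim-inf/lim-sup system (\ref{chap7_e5.9})--(\ref{chap7_e5.11}). A case split on whether $m_z=\liminf z_n$ vanishes or is strictly positive will then identify the limit as either the disease-free equilibrium (\ref{chap7_e4.7}) or the endemic equilibrium (\ref{chap7_e4.8}).

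\emph{Step 1: Reduction to the system for lim inf/lim sup.} The degenerate initial condition $(y_0,z_0)=(0,0)$ is trivial: then $y_n\equiv z_n\equiv 0$ and $x_{n+1}=(1+x_n)/b\to 1/(b-1)$, so the orbit converges to (\ref{chap7_e4.7}). Assume therefore $(y_0,z_0)\ne(0,0)$, which forces $x_n,y_n,z_n>0$ for every $n\ge 1$. By Theorem~\ref{chap7_t4.1} the six quantities $m_x,M_x,m_y,M_y,m_z,M_z$ are finite and non-negative. Observe that $f_1(x,z)=(1+x)/(b+cz)$ is non-decreasing in $x$ and (since $c>0$) non-increasing in $z$; $f_2(x,y,z)=dy+exz$ is non-decreasing in all three arguments; and $f_3(y,z)=fz+y$ is non-decreasing in both. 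These mixed-monotonicity properties are exactly what is needed to pass to $\limsup$ and $\liminf$ in the three recurrences in the manner of the proof of Theorem~\ref{chap7_t4.8}, yielding the equality system (\ref{chap7_e5.9})--(\ref{chap7_e5.11}) for $(m_x,M_x),(m_y,M_y),(m_z,M_z)$.

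\emph{Step 2: Algebraic analysis.} From (\ref{chap7_e5.9}) one isolates $m_x=1/(b-1+cM_z)$ and $M_x=1/(b-1+cm_z)$; from (\ref{chap7_e5.11}), using $0<f<1$, one obtains $m_y=(1-f)m_z$ and $M_y=(1-f)M_z$. Substituting into (\ref{chap7_e5.10}) and using $0<d<1$ yields, after factoring,
\begin{equation*}
m_z\bigl[(1-d)(1-f)-em_x\bigr]=0 \qquad\text{and}\qquad M_z\bigl[(1-d)(1-f)-eM_x\bigr]=0.
\end{equation*}
If $m_z>0$ then also $M_z\ge m_z>0$, both bracketed terms must vanish, and this forces $m_x=M_x=(1-d)(1-f)/e$. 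Plugging back into (\ref{chap7_e5.9}) gives $cm_z=cM_z$, hence $m_z=M_z$ (using $c>0$), then $m_y=M_y$; the common value is precisely (\ref{chap7_e4.8}). If instead $m_z=0$, then $M_x=1/(b-1)$ and the second relation reduces to $M_z(b-1)(1-d)(1-f)(1-\mathcal{R}_d)=0$. Whenever $\mathcal{R}_d\ne 1$ this forces $M_z=0$, whence $m_y=M_y=0$ and $m_x=M_x=1/(b-1)$, so the orbit converges to (\ref{chap7_e4.7}).

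\emph{Main obstacle.} The delicate point is the critical threshold $\mathcal{R}_d=1$ with $m_z=0$, where the algebra leaves $M_z$ undetermined and Step~2 alone does not close. Fortunately at $\mathcal{R}_d=1$ the endemic formula (\ref{chap7_e4.8}) collapses exactly onto (\ref{chap7_e4.7}) (its $z$-coordinate vanishes), so the only candidate limit is (\ref{chap7_e4.7}). To rule out $M_z>0$ in this degenerate setting one cannot rely on the mixed-monotone machinery and must supply a separate dynamical argument---for instance a Lyapunov comparison on a combination such as $y_n+(1-f)^{-1}z_n$, or a center-manifold reduction exploiting the eigenvalues $\{1,\,1/b,\,df\}$ identified in Theorem~\ref{chap7_t4.4}(iii) together with the non-repelling conclusion of Theorem~\ref{chap7_t4.4}(iv). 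This bifurcation threshold is the principal technical hurdle I anticipate; away from it, the case split in Step~2 closes the proof, and together with the trivial case and the $\mathcal{R}_d=1$ argument every solution in $\mathbb{R}^3_+$ converges to (\ref{chap7_e4.7}) or (\ref{chap7_e4.8}).
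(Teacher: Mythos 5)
Your argument is essentially the paper's own: it likewise passes to the lim\,inf/lim\,sup system (\ref{chap7_e5.9})--(\ref{chap7_e5.11}) and case-splits on whether the lower limits of $y_n,z_n$ vanish, sending the degenerate case to (\ref{chap7_e4.7}) and otherwise recovering (\ref{chap7_e4.8}) via the algebra of Theorem \ref{chap7_t4.8}, the only organizational difference being that the paper first invokes Theorems \ref{chap7_t4.7} and \ref{chap7_t4.8} to restrict attention to the strict window $(b-1)(1-d)(1-f)<e<(b-1)(1-d)$ rather than carrying $\mathcal{R}_d$ through the case analysis as you do. The $\mathcal{R}_d=1$ obstacle you flag is genuine but is not resolved in the paper either (its reduction uses strict inequalities and silently omits that boundary), so on that point your treatment is no less complete, and rather more candid, than the published proof.
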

\begin{proof}
Using Theorems \ref{chap7_t4.7} and \ref{chap7_t4.8}, it is sufficient to show  the argument hold only for
\[
(b-1)(1-d)(1-f) <e <(b-1)(1-d).
\]
The same reasoning as in the proof of Theorem \ref{chap7_t4.8} produces system (\ref{chap7_e5.9})$-$(\ref{chap7_e5.11}).
If $m_y$ or $m_z$ is zero, say $m_y$, then  by (\ref{chap7_e4.5}) and the fact that  $m_x$ is bounded and greater
than zero it follows that  $m_z$ is zero. As a result by (\ref{chap7_e5.9}), we deduce that $m_x=M_x=1/(b-1)$, hence $M_y=M_z=0$.
So the solution converges to (\ref{chap7_e4.7}). If both the lower bounds are strictly greater than zero, the result is given in Theorem \ref{chap7_t4.8}.
\end{proof}

\begin{Thm}\label{chap7_t4.5.n.n}
Assume that $(b-1)(1-d)(1-f)<e<(b-1)(1-d)$  and  $(y_0,z_0)\neq(0,0)$ hold. Then
every solution of system (\ref{chap7_e4.4})$-$(\ref{chap7_e4.6}) is bounded from below.
\end{Thm}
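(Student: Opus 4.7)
My plan is to adapt the proof of Theorem~\ref{chap7_t4.5.n}, replacing its scalar monotonicity bound, which required the stronger hypothesis $e>(b-1)(1-d)$, by a two-dimensional Perron--Frobenius bound that works under the weaker $\mathcal{R}_d>1$. Rewrite the $(y,z)$-subsystem as $(y_{n+1},z_{n+1})^\top = M(x_{n+1})(y_n,z_n)^\top$ with
\[
M(x)=\begin{pmatrix} d & ex \\ d & f+ex \end{pmatrix}.
\]
Its characteristic polynomial evaluated at $\lambda=1$ equals $(1-d)(1-f)-ex$, so the Perron eigenvalue $\lambda(x)$ of $M(x)$ satisfies $\lambda(x)>1$ iff $x>(1-d)(1-f)/e$. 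Since $e>(b-1)(1-d)(1-f)$, this threshold lies strictly below $1/(b-1)$, so by continuity I may fix $\epsilon>0$ small enough that $M_{\epsilon}:=M(\tfrac{1}{b-1}-\epsilon)$ has Perron eigenvalue $\lambda_\epsilon>1$, with strictly positive left eigenvector $q_\epsilon=(q_1,q_2)^\top$; set $W_n:=q_1 y_n+q_2 z_n$.

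Reducing by $z_n\geq y_n$ ($n\geq 1$) to the case $\liminf y_n=0$, I argue by contradiction and extract $m_k\uparrow\infty$ with $y_{m_k}\to 0$. The identity $y_{m_k}=dy_{m_k-1}+ex_{m_k}z_{m_k-1}$ together with the uniform lower bound $x_{m_k}\geq\Gamma>0$ of Theorem~\ref{chap7_t4.1} forces both non-negative summands to vanish, so $y_{m_k-1}\to 0$ and $z_{m_k-1}\to 0$. Iterating gives $(y_{m_k-j},z_{m_k-j})\to(0,0)$ for every fixed $j\geq 0$, and the $x$-recursion combined with boundedness of $x_n$ then forces $x_{m_k-j}\to 1/(b-1)$ for every fixed $j\geq 0$. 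Consequently, for every fixed $N$ and $k$ large, $x_m>\tfrac{1}{b-1}-\epsilon$ throughout $[m_k-N+1,m_k]$; componentwise domination $M(x_{m+1})\geq M_\epsilon$, together with the positivity of $q_\epsilon$ and non-negativity of the state, yields the Lyapunov inequality $W_{m+1}=q_\epsilon^\top M(x_{m+1})(y_m,z_m)^\top\geq q_\epsilon^\top M_\epsilon(y_m,z_m)^\top=\lambda_\epsilon W_m$, and iterating, $W_{m_k}\geq \lambda_\epsilon^N W_{m_k-N}$.

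The main obstacle is that the previous inequality is not yet contradictory, because $W_{m_k-N}\to 0$ for every fixed $N$; I must instead choose $N=N(k)\to\infty$ while keeping $W_{m_k-N(k)}$ bounded below. I close the proof in two steps. First, the same Lyapunov chain applied to any infinite tail on which $z_n$ stays below a small threshold $\eta>0$ would force $W_n\to\infty$, contradicting the uniform upper bound of Theorem~\ref{chap7_t4.1}; hence the set $\{n:z_n\geq\eta\}$ is infinite. Second, let $N(k)$ be the gap from $m_k$ back to the most recent such time (necessarily $N(k)\to\infty$ by the backward propagation above); at that time $W_{m_k-N(k)}\geq q_2\eta$, and after a bounded ``$x$-recovery'' window of length $O(\log\tfrac{1}{\epsilon})$ during which $W$ decreases by at most the factor $\min(d,f)^{O(1)}$ (from $y_{n+1}\geq dy_n$ and $z_{n+1}\geq fz_n$), the $x$-recursion $x_{n+1}=(1+x_n)/(b+cz_n)$ with $z_n<\eta$ drives $x_n$ back above $\tfrac{1}{b-1}-\epsilon$ for the remainder of the window. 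The Lyapunov chain then delivers $W_{m_k}\geq \lambda_\epsilon^{N(k)-O(1)}\min(d,f)^{O(1)}q_2\eta\to\infty$, contradicting $W_{m_k}\to 0$. The technical heart of the argument is precisely the bookkeeping of this window and the quantitative control of the $x$-recovery time, which relies on the specific form of the $x$-recursion and the smallness of $z_n$ throughout the window.
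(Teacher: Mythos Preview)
Your argument is correct and takes a genuinely different route from the paper.  The paper first \emph{reduces to full convergence}: it asserts (as in the proof of Theorem~\ref{chap7_t4.5.n}) that if $\liminf y_n=0$ or $\liminf z_n=0$ then in fact $(y_n,z_n)\to(0,0)$ and hence $x_n\to 1/(b-1)$; with $x_n$ eventually in $(\tfrac{1}{b-1}-\epsilon,\tfrac{1}{b-1}+\epsilon)$, it then unrolls the $y$--recursion $l$ times to obtain the scalar inequality
\[
z_{n+1}\;>\;\Bigl(f+eK\,\tfrac{1-d^{l+1}}{1-d}\Bigr)\,z_{n,l},\qquad z_{n,l}:=\min\{z_n,\dots,z_{n-l}\},\ K=\tfrac{1}{b-1}-\epsilon,
\]
and chooses $l$ large and $\epsilon$ small (using only $e>(b-1)(1-d)(1-f)$) so that the bracket is at least $1$; then the window--minimum $z_{n,l}$ is nondecreasing, contradicting $z_n\to 0$.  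By contrast, you never pass through full convergence: you linearize the $(y,z)$--subsystem via $M(x)$, use the Perron left eigenvector to build the Lyapunov functional $W_n$, and run a renewal argument (infinitely many times with $z_n\ge\eta$; last such time before $m_k$; bounded $x$--recovery window; exponential growth of $W$).  The paper's proof is shorter and entirely elementary---the window--minimum device neatly replaces the eigenvector---but it leans on the unproved reduction ``$\liminf=0\Rightarrow$ convergence to $0$'', which is exactly the obstacle you isolate and resolve; your proof is thus more self--contained, and the Perron--Frobenius viewpoint would also scale to larger cascades where a scalar window trick is harder to engineer.
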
\begin{proof}
Assume on the contrary that one of $y_n$ or $z_n$ converges to zero,
the other one converges to zero too, and then
$x_n\rightarrow\frac{1}{b-1}$. Since $(b-1)(1-d)(1-f)<e$, choose $\epsilon>0$ and $l>0$ such that
\begin{eqnarray}\label{chap7_e4.5.1.1.2}
(1-b)(1-d)(1-f)+e(1-d^{l+1})-e(b-1)\epsilon\ge0,
\end{eqnarray}
then there is $N$ such that $x_n$ is very close to $\frac{1}{b-1}$ and for $n\ge N$
\[
\frac{1}{b-1}-\epsilon<x_n<\frac{1}{b-1}+\epsilon.
\]
As a result, for $n\ge N$
\[
y_{n+1} > d y_n + e \,K\,z_n.
\]
where $K=(\frac{1}{b-1}-\epsilon)$. Let $z_{i,l}=\min\{z_{i},z_{i-1},...,z_{i-l}\}$, then by (\ref{chap7_e4.5.1.1.2}) it follows that
\begin{eqnarray*}
z_{n+1} &>& f z_n  + e \,K\,z_n+ d y_n> f z_n  + e \,Kz_{n-l}+e \,K\,dz_{n-l}+...+e \,Kd^{l}z_{n-l}+ d^ l y_{n-l}\\
        &>&(f + e \,K+e \,K\,d+...+e \,Kd^{l})z_{n,l}+ d^l y_{n-l}\\
        &>& \left(f+e\,K\left(\frac{1-d^{l+1}}{1-d}\right)\right)z_{n,l}.
\end{eqnarray*}
Then by induction on $n$, it follows that
\begin{eqnarray*}
z_{n+T} >...>z_{n+2,l} >z_{n+1,l}>z_{n,l}
\end{eqnarray*}
which is a contradiction. So the desired result obtains.
\end{proof}

The following theorem is the main  result of this research which answers
to the Open Problem given in \cite{chap7_DNJ}.
\begin{Thm}\label{t13}
Assume that $e>(b-1)(1-d)(1-f)$ holds. Then the
equilibrium point (\ref{chap7_e4.8}) of system (\ref{chap7_e4.4})$-$(\ref{chap7_e4.6}) is globally asymptotically stable.
\end{Thm}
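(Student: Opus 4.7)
The plan is to extend Theorem \ref{chap7_t4.8} from the stronger hypothesis $e>(b-1)(1-d)$ down to the sharp condition $e>(b-1)(1-d)(1-f)$, which is exactly the threshold $\mathcal{R}_d>1$. The overall architecture mirrors the proof of Theorem \ref{chap7_t4.8}: establish local asymptotic stability from Theorem \ref{chap7_t4.5}, trap the orbit in a compact positive box via upper and lower bounds, and collapse that box to a single point using the monotonicity framework of Theorem \ref{chap7_t3.n}. The only new ingredient is that the lower bound must now be supplied by the sharper Theorem \ref{chap7_t4.5.n.n} rather than Theorem \ref{chap7_t4.5.n}.

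First I would dispose of the boundary case. If $(y_0,z_0)=(0,0)$ then the recursion for $x_n$ reduces to $x_{n+1}=(1+x_n)/b$, which tends to $1/(b-1)$, and the trajectory remains on the invariant set $\{y=z=0\}$; this subset is not part of the biologically meaningful domain $\Delta$ (the analogue of the set used in Theorem \ref{chap7_th2}). For $(y_0,z_0)\neq (0,0)$, local asymptotic stability of (\ref{chap7_e4.8}) is immediate from Theorem \ref{chap7_t4.5}(i), so it suffices to prove global attraction.

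Next I would separate the parameter range into two regimes. When $e\geq (b-1)(1-d)$, Theorem \ref{chap7_t4.8} already gives the conclusion, so I only need to handle the intermediate regime $(b-1)(1-d)(1-f)<e<(b-1)(1-d)$. Here Theorem \ref{chap7_t4.5.n.n} provides $\liminf(x_n,y_n,z_n)>(0,0,0)$, while Theorem \ref{chap7_t4.1} supplies finite upper bounds. Consequently each coordinate is trapped in a compact interval $[m_i,M_i]\subset(0,\infty)$ with $0<m_i\leq M_i<\infty$.

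Finally I would verify the hypotheses of Theorem \ref{chap7_t3.n}. The maps $F_1(x,z)=(1+x)/(b+cz)$, $F_2(x,y,z)=dy+exz$ and $F_3(y,z)=fz+y$ are, respectively, increasing in $x$ and decreasing in $z$ (using $c>0$), increasing in each of $x,y,z$, and increasing in $y$ and $z$. Substituting $(m_x,M_x)$, $(m_y,M_y)$, $(m_z,M_z)$ into the fixed-point system (\ref{chap7_e5.9})--(\ref{chap7_e5.11}) and carrying out exactly the algebraic elimination performed in the proof of Theorem \ref{chap7_t4.8} reduces the problem to
\[
1=\frac{e}{(1-f)(1-d)(b-1+cM_z)}=\frac{e}{(1-f)(1-d)(b-1+cm_z)},
\]
which forces $m_z=M_z$ and then $m_x=M_x$, $m_y=M_y$. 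Theorem \ref{chap7_t3.n} then yields convergence to the unique positive equilibrium (\ref{chap7_e4.8}). The hard part of the argument is really Theorem \ref{chap7_t4.5.n.n}; once that lower bound is in hand, the present theorem is a clean assembly of the earlier ingredients and requires no calculation beyond that already present in the proof of Theorem \ref{chap7_t4.8}.
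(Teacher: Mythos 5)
Your proposal is correct and follows essentially the same route as the paper, whose proof of Theorem \ref{t13} is precisely the combination of the lower bound from Theorem \ref{chap7_t4.5.n.n} with the compactness-plus-monotonicity argument (Theorems \ref{chap7_t4.1}, \ref{chap7_t4.5} and \ref{chap7_t3.n}) already carried out in Theorem \ref{chap7_t4.8}. Your explicit handling of the invariant set $\{y=z=0\}$ and the split into the two parameter regimes only makes explicit what the paper leaves implicit.
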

\begin{proof}
It follows from  Theorem \ref{chap7_t4.5.n.n} and the argument of Theorem \ref{chap7_t4.8}.
\end{proof}

\section{Inference in continuous HIV infection model with correlated
dynamic discrete model}\label{sec5}
%Dynamical comparison of continuous and discrete models

This section develops a reliable comparison for the original continuous HIV infection model, system
\eqref{chap7_e2.1}--\eqref{chap7_e2.3}, with estimation of correlated dynamic discrete model, system
(\ref{chap7_e4.4})$-$(\ref{chap7_e4.6}). Inference in dynamic discrete model involves computing the solutions to a high-dimensional
system and analyzing theoretically the long-term behavior of solutions. The parameter values used for the numerical scheme, system (\ref{chap7_e4.4})$-$(\ref{chap7_e4.6}), are derived technically by employing the change of variables in Section \ref{sec4}, given as follows,
$$ b=1+l\gamma_{1}, \,\, c=-le_{1}, \,\,d=\frac{1}{1+l\,e_{3}},
\,\,e=\frac{{l}^3\,s\,e_{2}e_{4}}{(1+l\,e_{3})(1+l\,\sigma)}, \mbox{ and }f=\frac{1}{1+l\,\sigma}$$
from which
%and the definition of the model parameters in Table \ref{chap7_table0},
it follows that  $b>1$, $c\in\mathbb{R}$, $0<d<1$, $e>0$, and $ 0<f<1$
when $l>0$. The parameter values for the continuous HIV infection model, system (\ref{chap7_eq2}), were correspondingly  chosen by \eqref{chap7_e2.4} in Section \ref{sec2} and the change of variables \eqref{chap7_eq1}  so that all the possible observed steady states $(\bar x,\bar y,\bar z)$ were identified clearly by
 $$\tau=\gamma_1, \,\,\rho=\frac{e_2\,e_4\,s}{e_3\,\gamma_1}, \,\,\zeta=e_3, \mbox{ and } \eta=\sigma.$$

The chain convergence is checked by comparing the estimation results for several different parameter values in Theorems \ref{chap7_t4.4} and
\ref{chap7_t4.5} with the results of \cite[Theorem 2.1]{chap7_M. Dehghan 2007} or \cite[Section 4]{chap7_A.B. Gumel}. Surprisingly, we observe that the local behavior of solutions of the continuous HIV infection model is qualitatively similar to that of numerical scheme for discrete model corresponding to the same parameter set, since the
conditions  $\mathcal{R}_d=\frac{e}{(b-1)\,(1-d)\,(1-f)}>1$, $\mathcal{R}_d=1$, or $\mathcal{R}_d<1$
%\[
%\mathcal{R}_d=\frac{e}{(b-1)\,(1-d)\,(1-f)}< or = or >
%\]
%$(b-1)\,(1-d)\,(1-f)<e$, $(b-1)\,(1-d)\,(1-f)=e$, and
%$(b-1)\,(1-d)\,(1-f)>e$
are equivalent to $\mathcal{R}_c=\frac{s\,e_2\,e_4}{\sigma\,\gamma_1\,e_3}>1$, $\mathcal{R}_c=1$, or $\mathcal{R}_c<1$
%$\sigma\,\gamma_1\,e_3<s\,e_2\,e_4$, $\sigma\,\gamma_1\,e_3=s\,e_2\,e_4$, and
%$\sigma\,\gamma_1\,e_3>s\,e_2\,e_4$,
respectively. In addition,
the conditions $d<b$, $d=b$, and $d>b$ are equivalent to
$\sigma\,\gamma_1\,e_3<s\,e_2\,e_4$, $\sigma\,\gamma_1\,e_3=s\,e_2\,e_4$, and
$\sigma\,\gamma_1\,e_3>s\,e_2\,e_4$, respectively. Therefore, Theorems
\ref{chap7_th2}, \ref{chap7_t4.7}, and  \ref{chap7_t4.6.6}, and Corollary 4.3 in \cite{chap7_DNJ} show
that global dynamics of these models are 	considerably equivalent for
$e_1 < 0$ which is related to $c>0$. The estimation results for the correlated dynamic discrete model are only slightly affected by the fixed time step size $l$, see Subsection \ref{subsec1}. Such a case scenario shows that the numerical scheme \eqref{e1.1} is an efficient algorithm in conjunction with $e_1 < 0$
and suitable for use to solve the dynamic continuous HIV infection model.

The following results are motivated by the stage estimation procedure in which
the key parameter $c$ is taken to be zero, that is, $c=0$ corresponding to $e_1=0$ for the dynamic continuous model. In fact, these results are provided by Theorem \ref{chap7_th3} in \cite{chap7_M. Dehghan 2007} (proved for the continuous model) and  Theorem 4.2 and Corollary 4.3 in \cite{chap7_DNJ} (proved for the discrete
model). The paper studies theoretically the estimation method's performance in the case when $e_1=0$ or $c=0$. The findings demonstrate that  the estimation accuracy of the proposed numerical method is excellent since similar results for both models are achieved.

However, the main difference appears in the qualitative behaviors of solutions of the models when $e_1>0$. Theoretical results and several experiments on artificial data and parameter values show that the dynamic discrete model conducted by the estimated method and the continuous model with HIV infection can behave quite differently when $e_1>0$. More generally, the experiments demonstrate that  system (\ref{chap7_eq2}) can attain bounded or unbounded
solutions \cite{chap7_M. Dehghan 2007} while the qualitative behavior of system (\ref{chap7_e4.4})$-$(\ref{chap7_e4.6}) with different initial values can lead to
dispersion and serious misspecification errors referred to as forbidden sets, see Appendix A. The proposed theoretical framework for system
(\ref{chap7_eq2})  is flexible and leaves room for experimentation and abstract results. Experiments with the numerical estimation algorithm
for solving system (\ref{chap7_e4.4})$-$(\ref{chap7_e4.6}) requires a discovery of modifications and development of the numerical method in our
future work that will hopefully provide a very efficient alternative to iterating the approximation \eqref{e1.1} for negative parameter $c$. The
most important reason for this difference seems to be, a great deal, based on
%first-order
forward-difference approximation \eqref{e1.1}.

%????we have the same result for  and .
%
%?????
%\begin{Thm}[See ]\label{chap7_t4.2}
%Let $b>1$, $c=0$, $1>d>0$, $e>0$ and $1>f>0$ satisfying
%$\mathcal{R}_d\neq1$. Then either
%\begin{equation}\label{chap7_e4.33}
%(x_n,y_n,z_n)\rightarrow\left(\frac{1}{b-1},\infty,\infty\right)
%\end{equation}
%or
%\begin{equation}\label{chap7_e4.38}
%(x_{n},y_{n},z_{n})\rightarrow\left(\frac{1}{b-1},0,0\right).
%\end{equation}
%\end{Thm}
%
%
%\begin{Cor}[See \cite{chap7_DNJ}]\label{chap7_t4.3}
%Consider system of difference equation (\ref{chap7_e4.4})$-$(\ref{chap7_e4.6})
%with $b>1$, $0<d<1$, $e>0$, $0<f<1$, and  $\mathcal{R}_d=1$.
%\begin{itemize}
%\item[(i)]
%If $c>0$, then (\ref{chap7_e4.38}) is true.
%\item[(ii)]
%If $c=0$, then we have
%\begin{equation*}\label{chap7_e4.40}
%(x_{n},y_{n},z_{n})\rightarrow\left(\frac{1}{b-1},(1-f)\,\psi,\psi\right).
%\end{equation*}
%\end{itemize}
%\end{Cor}

\section{Numerical experiments}\label{sec4.1}
A verification of the algorithm implementation for the above scenario cases is provided in this section. For example, the  effect of different parameter values for time-step and immune response  to the proposed estimation algorithm \eqref{e1.1} are studied to identify an effective treatment for the management of HIV infection administrating multiple anti-HIV preventive drugs in vivo. Multiple parameter sets are used to check the convergence of the estimation procedure in which solving the exact solution of  model (\ref{chap7_eq2}) can be quickly computed. Also several examples show that with different initial values $(x_0,y_0,z_0)\in\mathbb{R}^3_{+}$, solutions of (\ref{chap7_e4.4})$-$(\ref{chap7_e4.6}) attain negative values. However, theoretical results are provided to show that some solutions of system (\ref{chap7_eq2}) remains in $\mathbb{R}^3_{+}$ whenever $(x_0,y_0,z_0)\in\mathbb{R}^3_{+}$ in Appendix A.

\subsection{Effect of time-step, $l$}\label{subsec1}
We will show that the associated numerical estimation method is very robust in terms of numerical stability since the equilibria and the stability conditions are independent of the time step by detailed comparisons. Several numerical simulations were carried out using the numerical method (\ref{e1.1}), regarded as Method A in Table \ref{chap7_table1}, to identify the behavior of solutions of  the model (\ref{chap7_e4.4})$-$(\ref{chap7_e4.6}) with various time steps. An intuitive comparison of the results obtained from the application of the fourth-order Runge-Kutta method (RK4) and  Method A is depicted in Table \ref{chap7_table1} from which it is obvious that numerical scheme (\ref{e1.1}) is more competitive in connection with numerical stability. Extensive numerical simulations performed by the numerical approximation (\ref{e1.1}) show that Method A did not provide solutions with chaotic behavior \cite{chap7_A.B. Gumel}. In fact,  every solution of estimation method converges either to the disease-free state or endemic state. This scenario was given by some numerical examples in  \cite{chap7_A.B. Gumel} without any theoretical proof.

Interestingly the equilibrium point of system
(\ref{chap7_e2.5})$-$(\ref{chap7_e2.7}), say $(\bar T_4, \bar T_I, \bar V_I)=(\hat a\, \bar x, \bar y/\hat g, \bar z)$, is independent of the time step, that is, the solution approaches to the disease-free or endemic equilibrium point of the original model given by the system of ordinary differential equations (\ref{chap7_e2.1})$-$(\ref{chap7_e2.3}) or system (\ref{chap7_eq2}). As a result, regardless of the  implementation of  time step, a unique equilibrium point is eventually obtained. It is also shown in Theorems \ref{chap7_t4.7} and \ref{t13} that the solution obtained by the numerical scheme  approaches to the disease-free or endemic equilibrium point, whether the reproduction number $\mathcal{R}_d$ is less than or in excess of unity. The demonstrations are independent of  variation of $l$ as long as $l>0$. Using the previous discussion and dynamical comparison of Section \ref{sec5}, the solution likewise converges to the equilibrium point for sufficiently small values $l$ ($0<l<<1$) and is, as a result, consistent with the compartmental model (\ref{chap7_e2.1})$-$(\ref{chap7_e2.3}).

%\newcolumntype {Z}{ >{\centering \arraybackslash }X}
\begin{table}[]
\tikzstyle{int}=[rectangle, draw, fill=blue!10,
    text width=22em, text centered, rounded corners, minimum height=10.5em]
\tikzstyle{init} = [pin edge={to-,thick,black}]
{
\tikzstyle{cloud} = [draw=none, rectangle,text centered,rounded corners,text width=\textwidth,fill=gray!20, node distance=3.8cm,
    minimum height=5em]%;
\begin{tikzpicture}[node distance=2.cm,auto,>=latex']
    \node [cloud] (dd)
    {\caption{\footnotesize Effect of time step, $l$, on the convergence of the methods using $c = 3.1$ \cite{chap7_A.B. Gumel}.}\label{chap7_table1}
\centering\footnotesize
\begin{tabularx}{\textwidth}{XXl}
\toprule %
  $l$ &  RK4& Method A  \\[1ex]\hline
  % after \\: \hline or \cline{col1-col2} \cline{col3-col4} ...
  0.001 & Convergence & Convergence\\%\otoprule
  0.01 & Divergence & Convergence\\
  10 & Divergence & Convergence\\
  200 & Divergence & Convergence\\[1ex]
\end{tabularx}
};
\end{tikzpicture}
}
\end{table}

\subsection{Effect of basic reproduction number, $R_d$}

Here we attempt to determine appropriately the dynamics of discrete model by the sensitive quantity threshold, $R_d$, keeping the immune response to a sufficient level. A reliable comparison of the estimation results for the proposed numerical approximation \eqref{e1.1} suggests that a qualitative behavioral change to the model appears through destabilizing the disease-free equilibrium and emergence of a stable endemic equilibrium.
A stable endemic equilibrium branch appears with an unstable disease-free equilibrium (leading to the persistence and existence of HIV within the infected individual) when the related reproduction number is less than one. The attractivity of the model reveals that the disease-free equilibrium is globally asymptotically stable under certain assumptions. Nevertheless the unique endemic equilibrium is globally asymptotically stable in the presence of $60\%$ effective combination therapy. In addition, a comparison between the discrete and continuous models is presented to have a better view in understanding the behavioral dynamics of HIV infection modelling.

%The incidence and the prevalence of some diseases oscillate seasonally in
%a population. This oscillation seems to be caused by seasonal oscillation in
%the contact rate â. For example, the incidence of childhood diseases such as
%measles and rubella increases each year in the winter when children aggregate
%in schools [119, 159, 46].
%damps in an oscillatory manner
%
%The
%appears when the basic reproduction number, $R_d$ exceeds unity,  loses its stability and which leads to  as the basic reproduction number.
%In such a scenario, the classical requirement of the reproduction number being less than unity becomes a necessary and sufficient condition for disease elimination.

The correlated dynamic discrete model may exhibit a threshold behavior of a possibility of sustained oscillations. Oscillations in the model mean fluctuations in the number of uninfected/infected/infectious cells to be expected and the HIV infection still persists but in an
oscillatory manner for certain parameters values.
% with the basic reproduction number above a threshold.
The model (\ref{chap7_e4.4})$-$(\ref{chap7_e4.6}) suggests that the presence of damped oscillations is inevitable to emerge if the threshold distinguished as the basic reproduction number is maintained above a certain quantity during therapy. This includes the possibility that the presence of certain parameter values can destabilize the endemic equilibrium point of the model with constant coefficients and lead to sustain damped oscillations for the reproduction number exceeding a threshold value. Such a change in behavioral dynamics of infection is considerably affected by the insufficient level of CTL response in the immediate surroundings of antiretroviral therapy with relapse. Clearly a higher magnitude of basic reproduction numbers  inferred from both continuous HIV infection model and
dynamic discrete model is related to a comparatively lower threshold level of immune response or ineffective combination of reverse transcriptase and protease inhibitors in vivo.

Numerical simulations,  Figure \ref{chap3_pic3}, indicate that transitory oscillations are observed
for about the first 100 units of time and no sustained oscillations are obtained when the endemic equilibrium point appears to be stable for lower values of $R_d$ close to unity, $R_d\geq1$, but not for significantly large values. As the threshold level of basic reproduction number exceeds unity and is maintained above a certain threshold, the time periods of oscillations to the solutions of the model become shorter and rapid
in conjunction with occurrence of higher amplitudes of oscillations.

Comparison of the two rows in Figure \ref{chap3_pic3} for two different parameter values $c=0.6, \,0.006$ shows that the magnitude of maximal density for the activated CD4$^{+}$ T cells
carrying integrated HIV (infected cells) and density of virus particles
%as infected cells carrying integrated
is overwhelmingly larger in the top array corresponding to  smaller value, $c=0.006$, while the estimation results for the density of uninfected cells are only slightly affected by the same parameter values. Surprisingly, the  qualitative oscillatory behavior of the correlated dynamic discrete model is not substantially perturbed by the implementation of various parameter values for $c$. This result seems to have a simple explanation. The expected long-term solution of the model (\ref{chap7_e4.4})$-$(\ref{chap7_e4.6}) associated with the equilibrium point \eqref{chap7_e4.8} is intuitively sensitive to the implementation of the parameter $c$ and the basic reproduction number. The larger the parameter value $c$, the higher amplitude is observed for the density of infected cells and virus particles, Figure \ref{chap3_pic3}. However the parameter value $c$ produces directly no sizeable effects or possible repercussions to the density of uninfected cells.

\begin{figure}[b]
\begin{center}
\begin{tabular}{lll}
   \includegraphics[scale=.27]{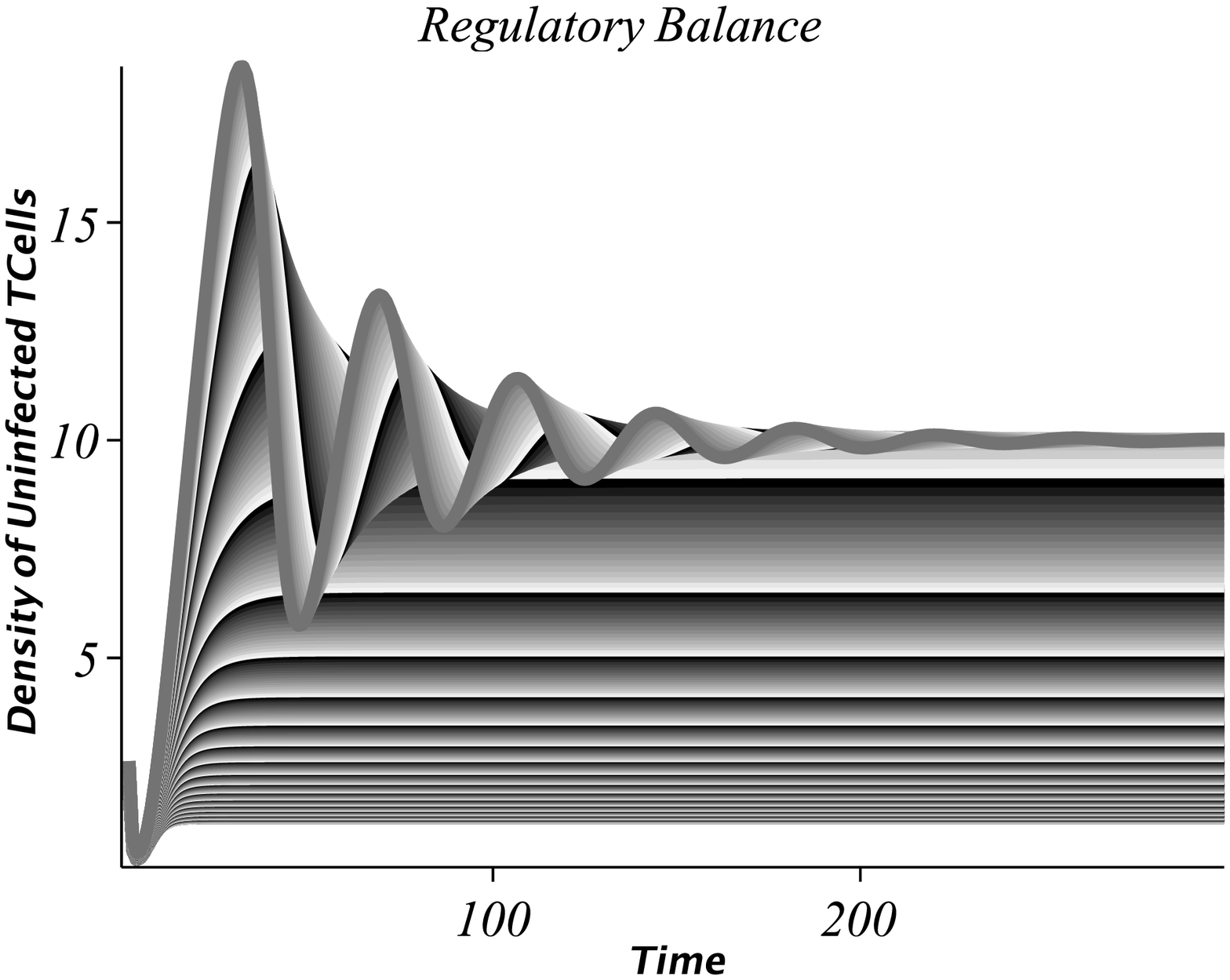}&\includegraphics[scale=.27]{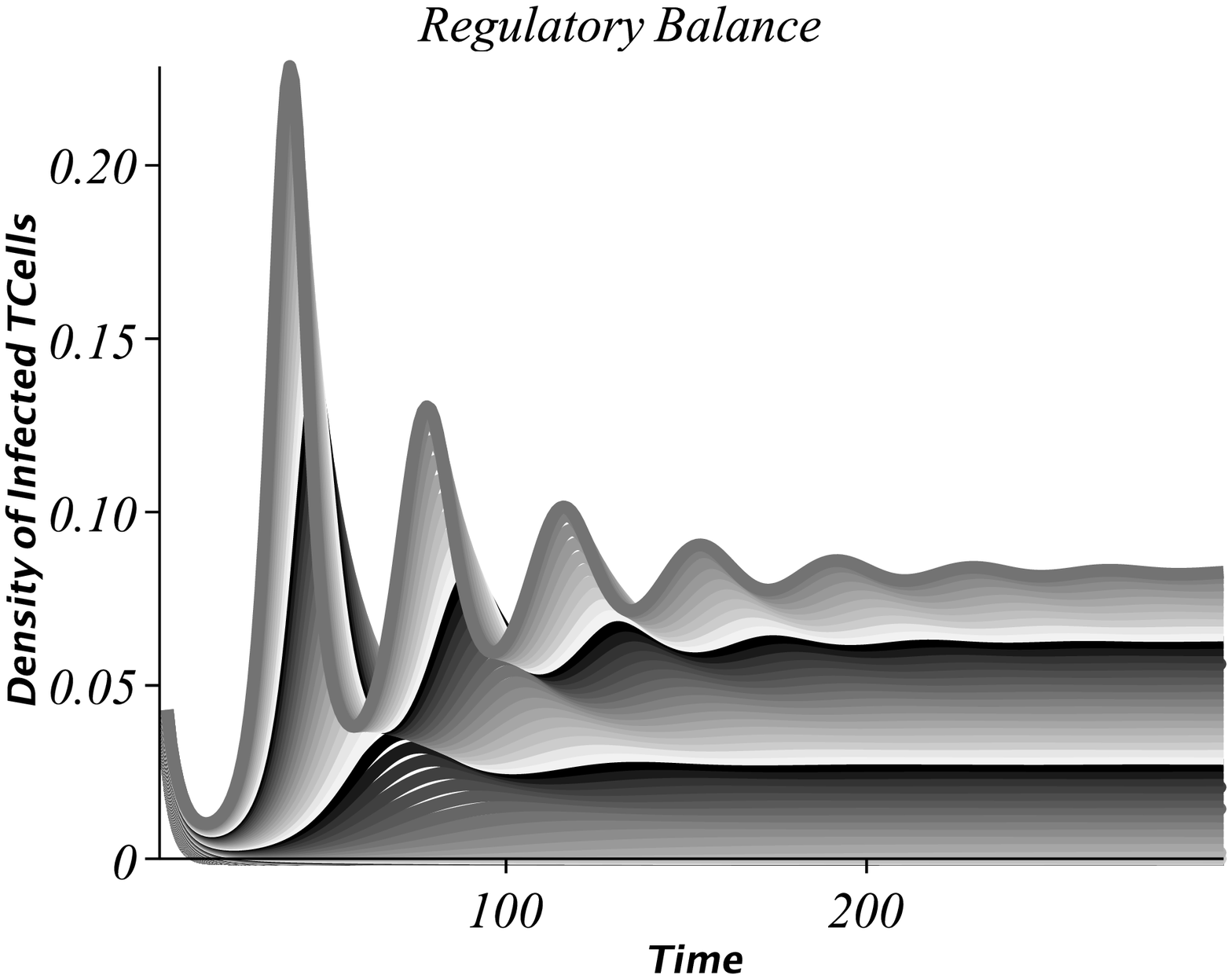}&\includegraphics[scale=.27]{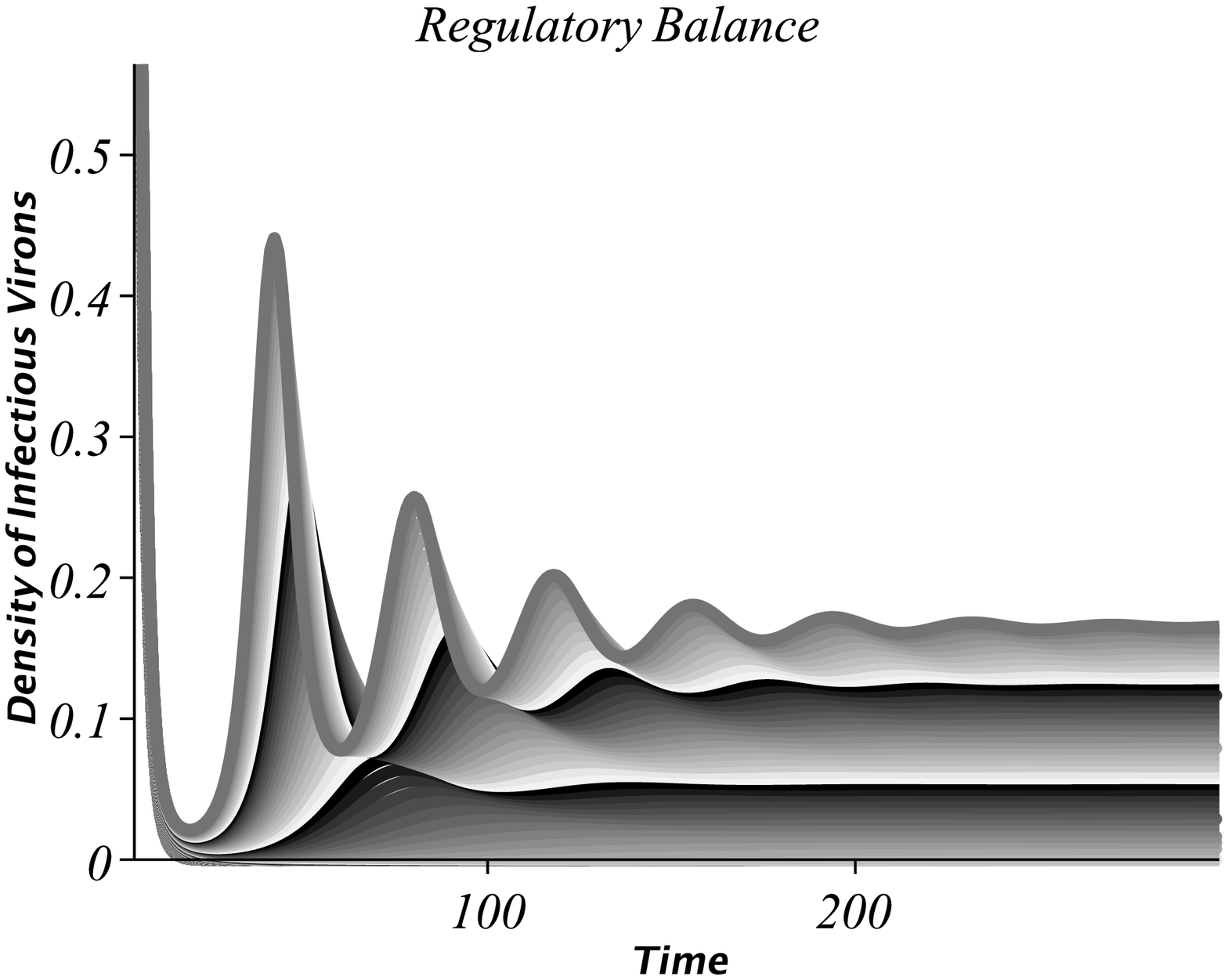}\\
   \includegraphics[scale=.27]{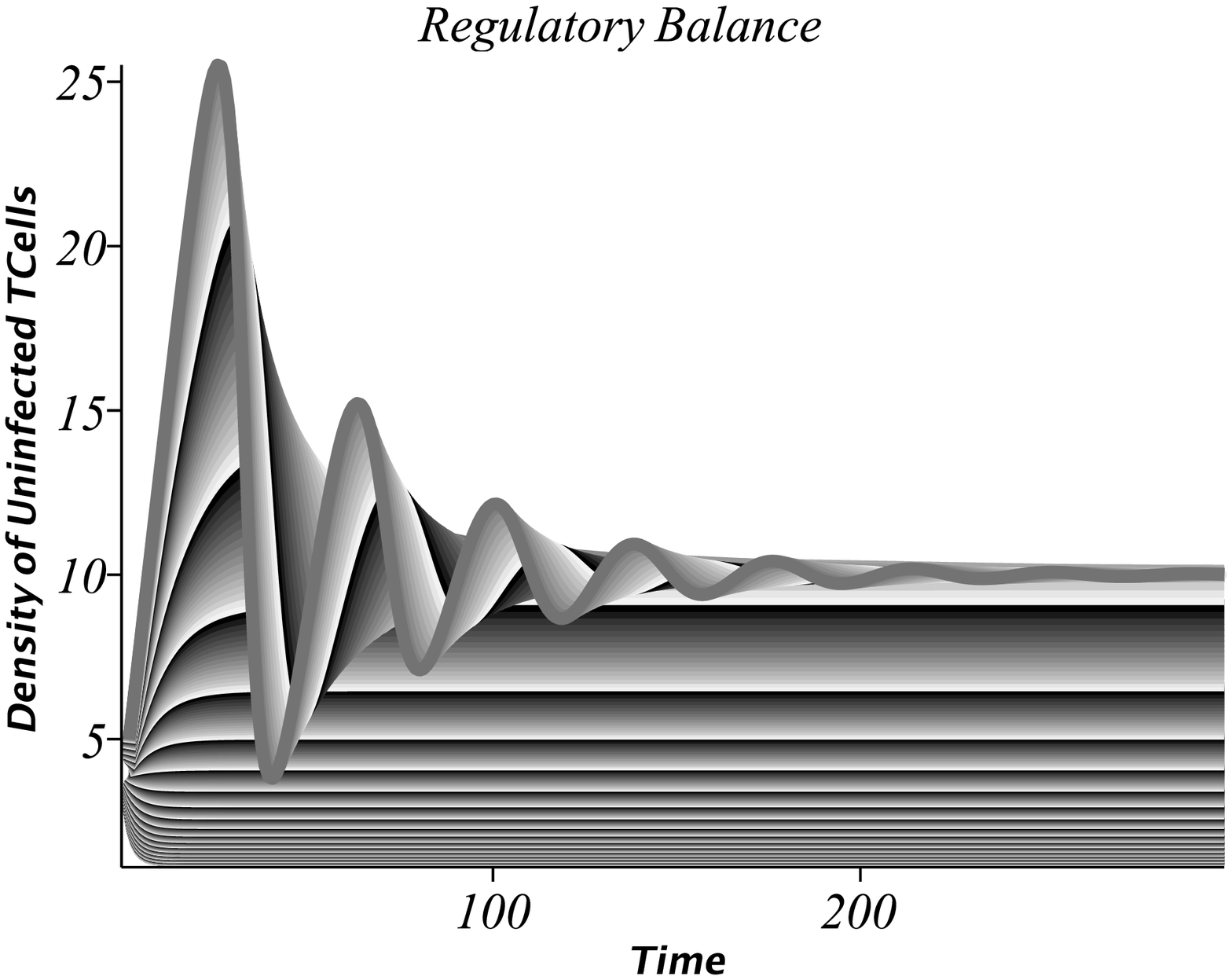}&\includegraphics[scale=.27]{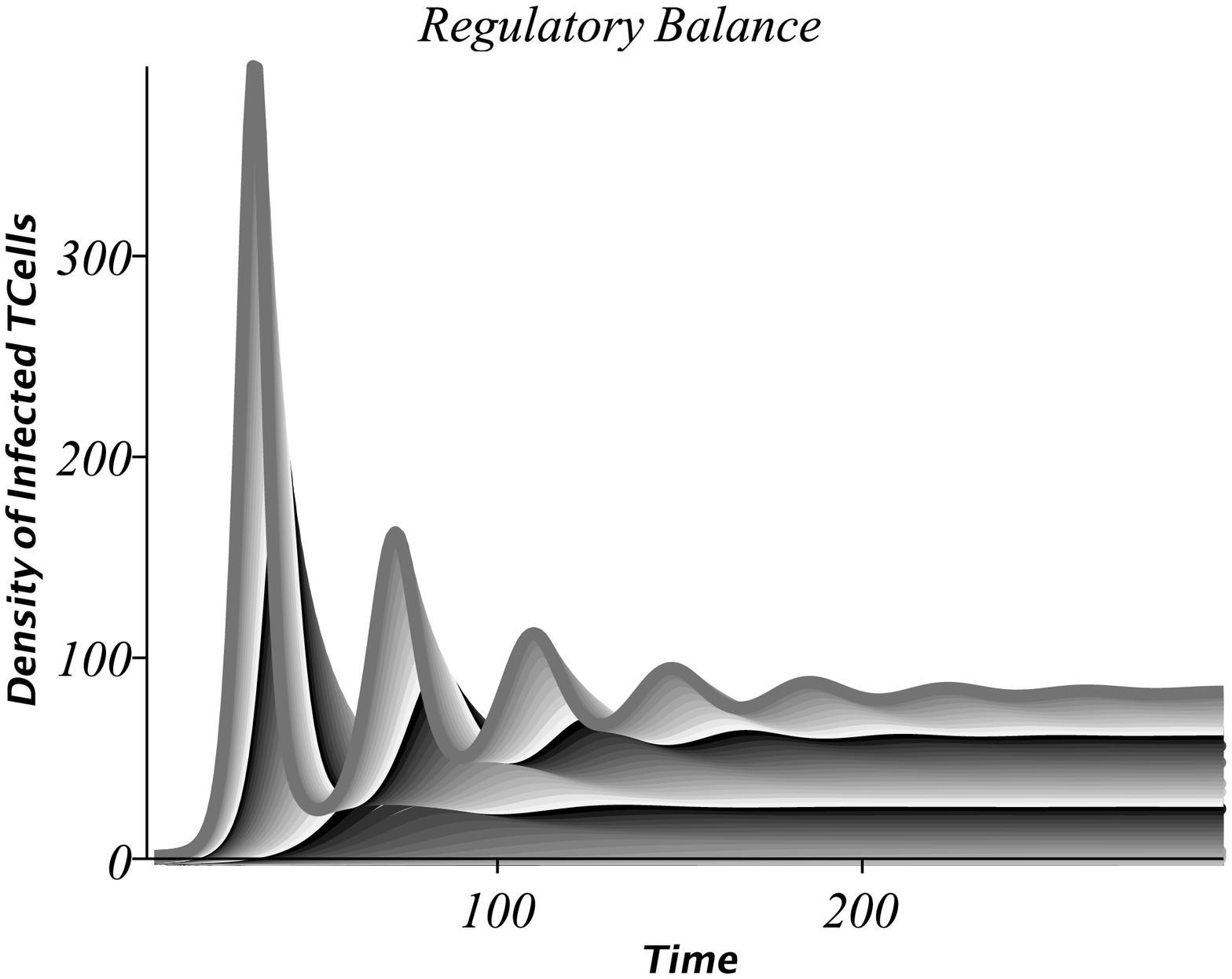}&\includegraphics[scale=.27]{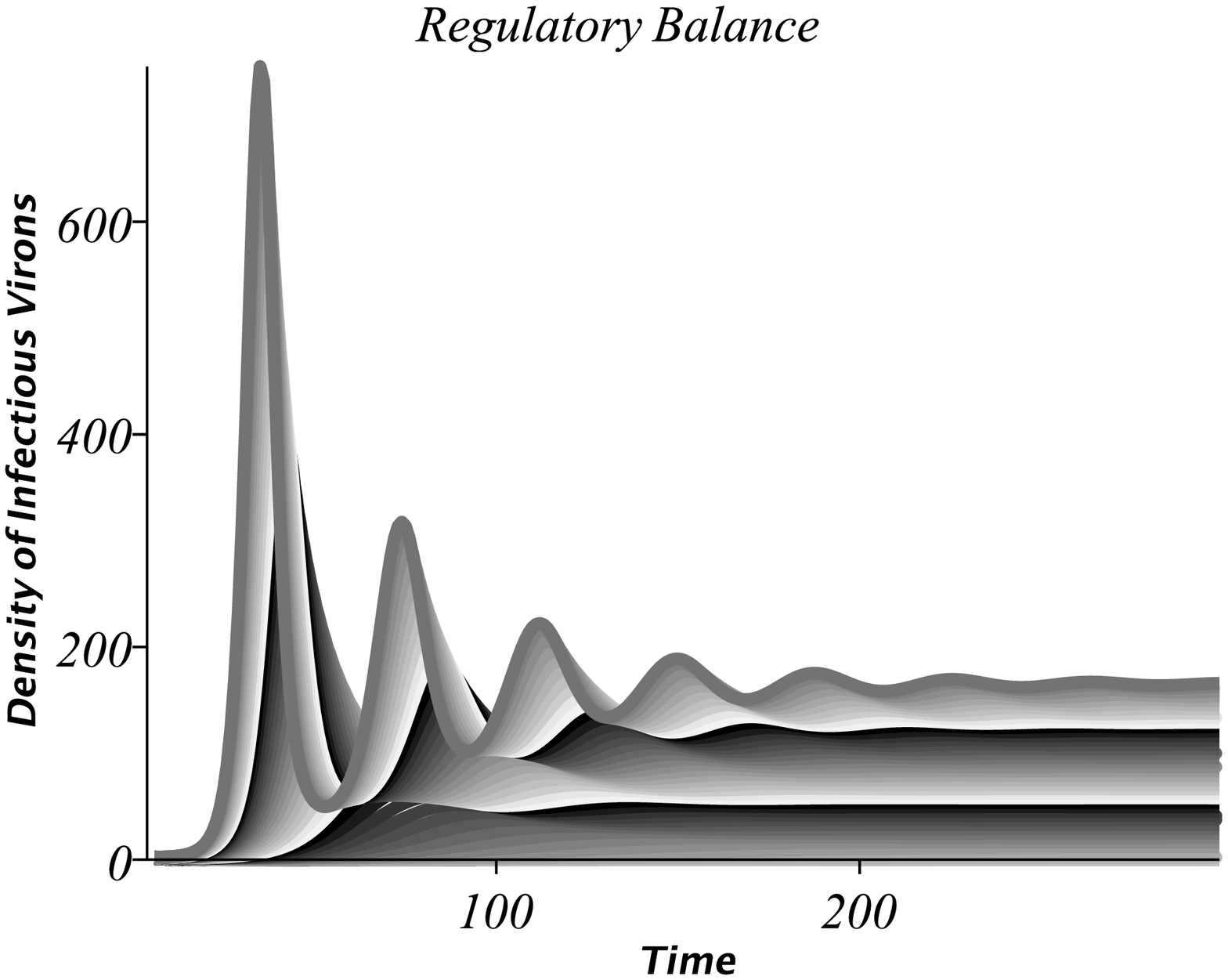}
\end{tabular}
\end{center}
\caption{\footnotesize
Estimated densities of $(\bar x, \bar y, \bar z)$ in conjunction with $(\bar T_4, \bar T_I, \bar V_I)$  are arranged horizontally for two different parameter values; $c=0.6$ is taken into account for the densities located on the top array and $c=0.006$ is considered for the densities on the bottom row.
The accumulated densities corresponding to uninfected/infected/infectious cells are plotted for a range of different threshold levels of the basic reproduction
number, $R_d\in[0.13344,\,100]$. The lower densities related to lower horizontally shaded strips are obtained from the relatively high threshold
level of immune response along with the effectiveness of reverse transcriptase and protease inhibitors, that is, $R_d\leq1$. The innermost strips
with tardily transitory oscillations are illustrated for the threshold level of basic reproduction number exceeding unity, but not for significantly
large values of $R_d$. While the relatively large magnitude of $R_d$  may exhibit a threshold behavioral change  which may result in generating sustained
oscillations of the uninfected/infected/infectious cells.
%
%
%N
%= 1000, the most widespread to N = 100. The dashed lines are fitted normal densities. The dashed lines are prior densities.
%The vertical lines show the actual parameter values. The posterior means of the hazard functions. Panel (a)the data generated from the model with
%the serially correlated unobservables, panel (b)the data generated from the dynamic logit model. The vertical
%axis is for the probability of engine replacement, the horizontal axis is for the mileage interval. The solid line is
%for the model with serially correlated unobservables. The dashed linefor dynamic multinomial logit, the dotted
%linedata hazard.
%Comparison with exact and approximate estimation algorithms. Estimated posterior densities: (a) 1,
%(b) , (c) 1, (d) , (e) 3. The vertical lines show the actual parameter values.The solid line shows the posterior
%for exact estimation procedure, the dashed line  approximate estimation procedure
%A comparison between combination antiretroviral therapy versus monotherapy.
%Based on simulations and theoretical results, an effective treatment strategy for the removal of integrated HIV cells is identified by
%administration of combination therapy. From the Figure $(a)-(c)$ we observe that the survival of viral load is more than $\%99$ for
%individuals in vivo while it can be broadly reduced to $\%80$ by application of both $E_{RT}$ and $E_{PI}$.
}
 \label{chap3_pic3}
\end{figure}

\subsection{Effect of immune response, $k_c$}
The effect of the CTL (cytotoxic T lymphocyte) response was investigated \cite{chap7_A.B. Gumel} by simulating Method A with various values of the CTL response parameter, $k_c$, in the absence of therapy (that is $E_{RT} = E_{PI} = 0$). The steady-state
values of the corresponding viral loads are given in the following table. Another experiment was carried out to
monitor $k_c$ when combination therapy is $60\%$ effective ($E_{RT} = E_{PI} = 0.6$) and the results are tabulated
in Table \ref{chap7_tableww}. It is worth mentioning that such limited efficiency of combination therapy ($60\%$) may
occur due to many reasons, including sub-optimal usage of the regimen, poor compliance, poor
absorption of certain drugs, mutation, etc.
\begin{table}[hbt]
\tikzstyle{cloud} = [draw=none, rectangle,text centered,rounded corners,text width=\textwidth,fill=gray!20, node distance=3.8cm,
    minimum height=5em]%;
\begin{tikzpicture}[node distance=2.cm,auto,>=latex']
    \node [cloud] (dd)
    {\caption{\footnotesize Effect of CTL response, steady-state values of infectious viral load (density of HIV) \cite{chap7_A.B. Gumel}}\label{chap7_tableww}
\begin{center}\footnotesize
\begin{tabularx}{\textwidth}{llXX}
\toprule %
$k_c$ && Effect of CTL response in the absence of effective protease inhibitors where $E_{RT} = 0$   &Effect of CTL response in the presence of $60\%$ combination therapy  \\\hline
0.0 &&834.90&588.11\\
%0.2 &&594.93&413.59\\
%0.4 &&461.63&316.63\\
0.6 &&376.77&254.93\\
%0.8 &&318.04&212.21\\
1.0 &&274.97& 180.89\\
  [1ex]
\end{tabularx}
\end{center}};
\end{tikzpicture}
\end{table}
\subsection{Combination antiretroviral therapy versus monotherapy}
In this experiment, we make use of the mathematical discrete model (\ref{chap7_e4.4})$-$(\ref{chap7_e4.6}) to identify an effective treatment strategy for the management of HIV infection using multiple anti-HIV preventive drugs in vivo. Figure \ref{chap3_pic1} describing the within-host infectious dynamics provide us an intuitive comparison for both combination antiretroviral therapy and monotherapy using the quantity threshold $\mathcal{R}_d=\frac{s\,e_2\,e_4}{\sigma\,\gamma_1\,e_3}$ related to the set of parameter values $s=8.076$, $r=0.03$, $k_{v}=1$,  $\gamma_{1}=0.5$, $\gamma_{2}=0.5$, $N=1000$, $L=0.2$, $\sigma=10$ and multiple sets of variables $(E_{RT},\,E_{PI},\,\alpha)$. For the monotherapy experiment, we assume that only protease inhibitors, e.g. nelfinavir, are administered to the HIV-infected patient \cite[Section 6.3]{chap7_A.B. Gumel}; thus the horizontal axis as a variable is used for the effectiveness values $E=E_{PI}$ with $E_{RT}=0$. Note that a similar experiment can be illustrated by assuming  $E_{RT}$ as variable with $E_{PI}=0$. As the administration of a combination of two or more drugs to the HIV-infected patient is currently a treatment strategy \cite[Section 6.3]{chap7_A.B. Gumel}, another experiment with respect to the combination therapy is conducted with an equivalent administration of both reverse transcriptase and protease inhibitors, that is, we assume that the horizontal axis as a variable represents the effectiveness values $E=E_{PI}=E_{RT}$. In both experiments, vertical axes represent the effect of the CTL (cytotoxic T lymphocyte) response. The black/gray/white shaded areas are used for various values of pre-existing activated CD4$^{+}$ T cells, $0\le\alpha\le1$. The basic reproduction number is in excess of unity in each corresponding area with the stated value $\alpha$ and including the area(s) with lower values of $\alpha$. For instance, the basic reproduction number for $\alpha=0.1$ not only represents the corresponding shaded area but also includes the areas with $\alpha=0,\, 0.01$.

In both scenarios, further computational experiments of the model for an enhanced comparison are performed on the representation of  $60\%$ and $80\%$ effectiveness of therapy with $50\%$ and $75\%$ immune response. In each set of these independent experiments, four distinguishable horizontal and vertical dashed lines are drawn on the shaded areas for this model, see Figure \ref{chap3_pic1}. These experimental results are intriguing. Using both experiment presented in Figure \ref{chap3_pic1}, clearly the chance to suppress the viral load is so narrow for monotherapy, in some cases less than one percent and it may be interpreted as impossible, while combination therapy provide better confidence in removal of integrated HIV cells. For instance, the experimental analysis on the model with administration of only protease inhibitors refereed as monotherapy indicates that the basic reproduction number is greater than one for the ordered pair $(0.6,0.5)$  with a proportion $\alpha\ge0.01$ of pre-exiting activated CD4$^{+}$ T cells, that is, the virus load cannot be entirely inhibited in the presence of $60\%$ effective monotherapy at $50\%$ effective immune system or more. In contrast, the suppression of virus load in the patient is likely to occur for $\alpha\le0.1$ by  implementing the same experiment using $60\%$ effective combination therapy in the presence of more than $90\%$ effective immune system.

Second analysis on the model may be performed in the immediate surroundings of $80\%$ effective monotherapy and combination therapy to a HIV-infected patient. The findings, based on simulations for the correlated dynamic discrete model, suggest that  the combination therapy is greatly efficient in mitigating  the infectious viral load with a possibility of successful suppression for the proportion $\alpha\le0.5$, since in this case $R_d$ will be equal or inferior to one. On the contrary, by administrating only protease inhibitors in a similar experiment, the infectious virus is most likely to persist within an infected individual if $\alpha\ge0.1$. The amount of viral load is not sufficiently reduced to a  certain extent required for $R_d<1$.

Alternatively, to further verify the possibility of successful persistence or inhibition of infectious viral load in a patient,  we conduct this experiment with  a proportion value of pre-exiting activated CD4$^{+}$ T cells close to unity to perturb the dynamics of the HIV model, these experimental results are also intriguing. To control and bring down the volume of viral load for monotherapy scenario, it is required to monitor treatment not less than $99\%$ effective, and in some cases it is impossible to mitigate HIV risk in vivo. However the consequence of administrating two drugs is a success for mitigation and prevention of infectious virus load using antiretroviral treatment that is $90\%$ effective or more if $\alpha$ is close to one.

%is implemented correctly I conducted a similar simulation
%study using the extreme value iid unobservables instead of the serially correlated unobservables.
%The results were analogous to the ones reported in the figures.
%
%effect of administering both RT and
% is studied. Fig. 2 depicts the pro6le of infectious viral
%load against time when RT and protease inhibitors are operating at $60\%$ eJciency, from which it
%is clear that (despite the presence of residual virus at steady state)
%
%$0\le\alpha\le1$  Investigating theoretical properties of this improved procedure, e.g. deriving
%complexity bounds, seems to be of great interest and is a subject of future work. This improvement
%has not been incorporated into the estimation experiments in this paper. However, I employ it in
%Norets (2006) that uses artificial neural networks to approximate the expected value function as a
%function of the parameters and the state variables.
%
%The comparison of the theoretical models are obtained trough the
%methods of concentrated parameters takes place with the experimental results of a beam with the same conditions of the
%theoretical study
%
%the Modal Analysis consists in determinate a experiment of a ensemble of answer functions
%in frequencies and extracting from their with a computer program the modals parameters of the system.

Besides, Table \ref{chap7_table2}  depicts the steady-state values of the viral load with multiple parameter values for the effectiveness of the inhibitors (RT and PIs). In the numerical simulation, identical efficiency levels were conducted for the management of HIV infection in vivo using the mathematical model (\ref{chap7_e4.4})$-$(\ref{chap7_e4.6}) \cite{chap7_A.B. Gumel}.  It is clearly illustrated that as the efficiency level of the drugs is increased, the infectious virus density decreases. With the administration of $99\%$ effective combination therapy (that is, $E_{RT}=E_{PI}=0.99$), however, the two steady states exchange their stability, and the virus particle density converges to zero \cite{chap7_A.B. Gumel}. The endemic equilibrium loses its stability and a stable disease-free equilibrium  appears as the basic reproduction number falls behind unity. In such a scenario,  the associated reproduction number being inferior to unity becomes a necessary and sufficient condition for disease elimination to the HIV-infected patient. This result is, of course, consistent with the theoretical analysis of previous sections.
\begin{figure}[b]
\begin{center}
\begin{tabular}{ll}
   \includegraphics[scale=.3]{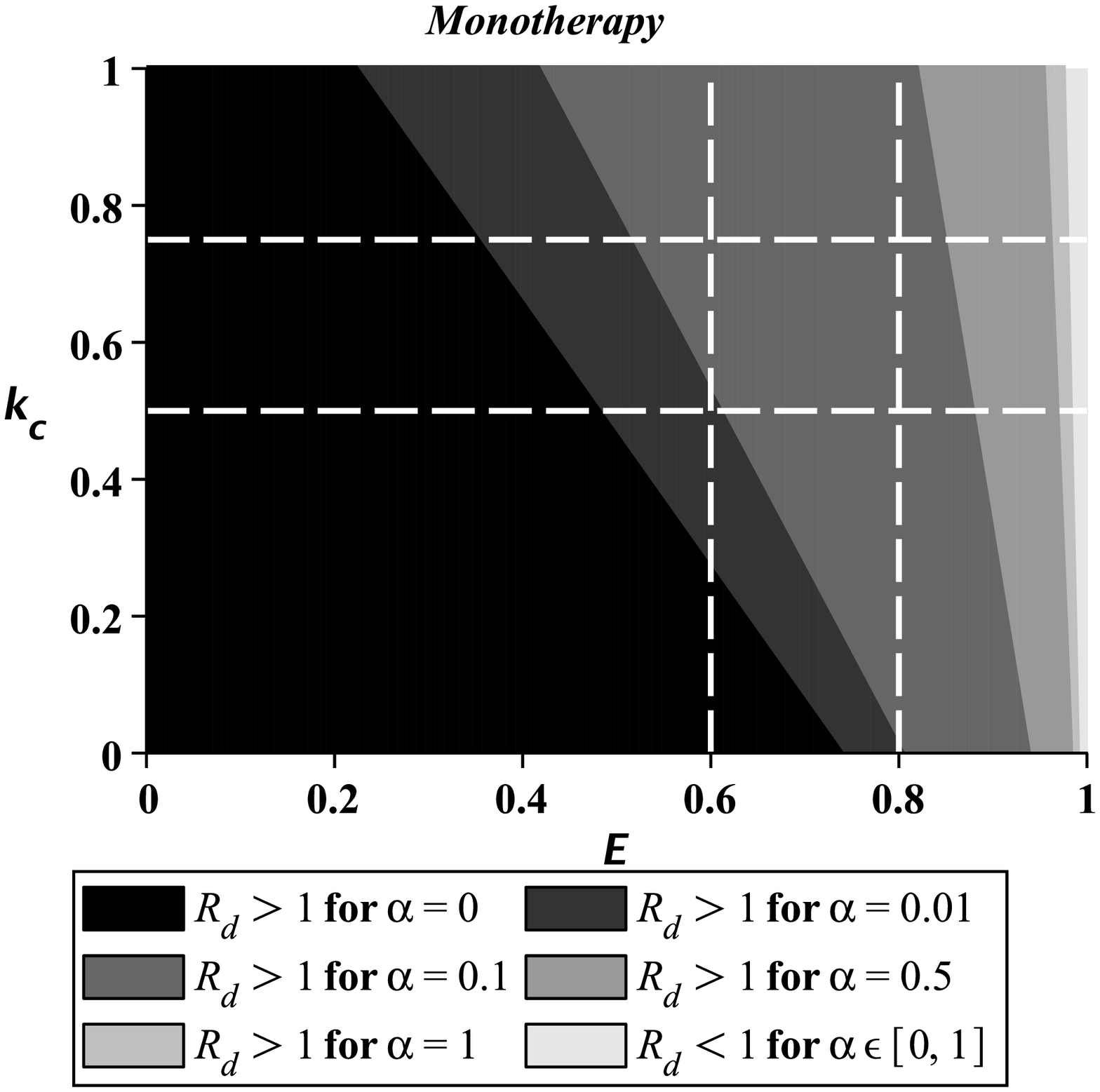}&
   \includegraphics[scale=.3]{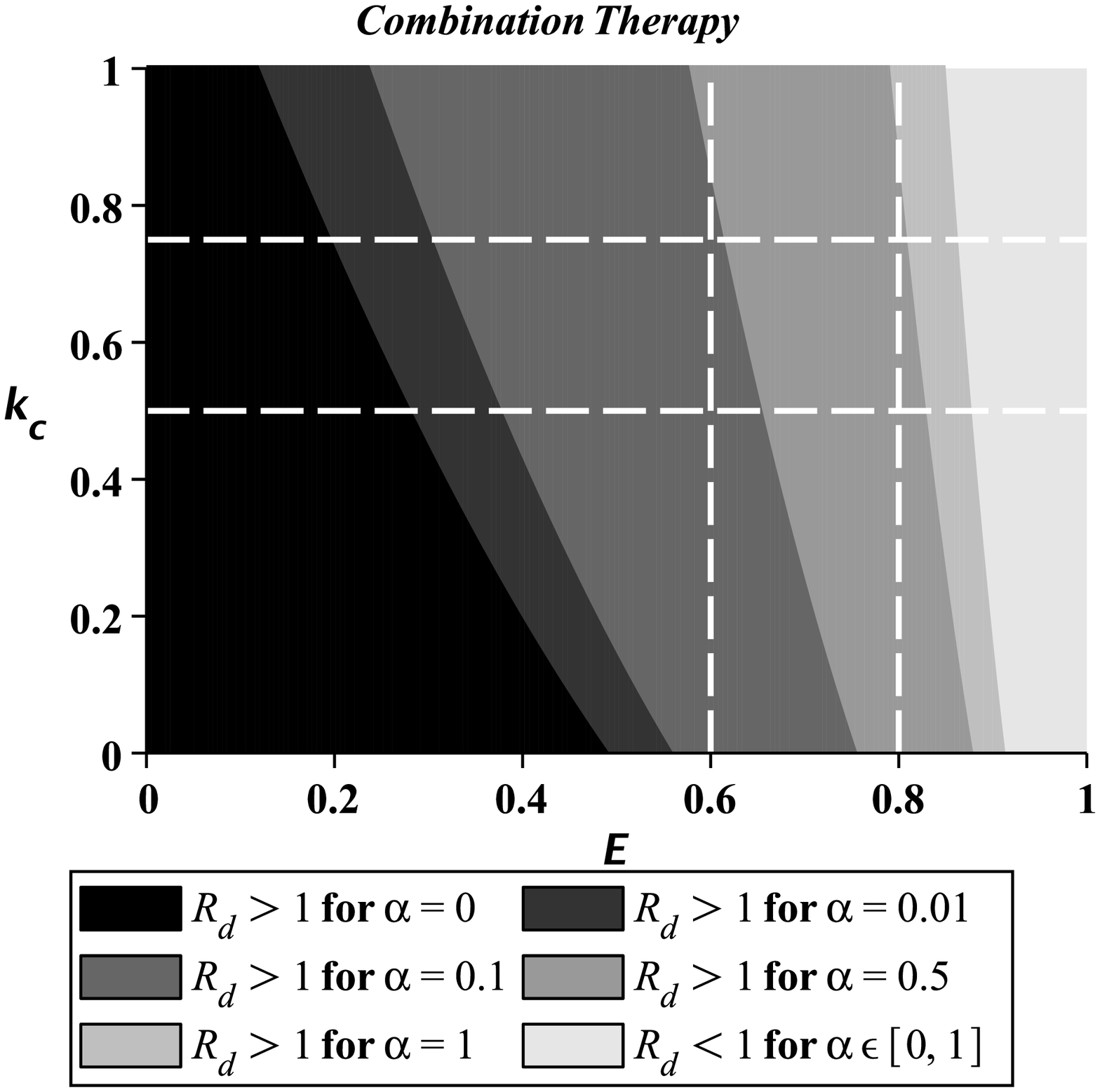}
\end{tabular}
\end{center}
\caption{\footnotesize A comparison between effectiveness of antiretroviral monotherapy  versus combination therapy using the quantity threshold $\mathcal{R}_d=\frac{s\,e_2\,e_4}{\sigma\,\gamma_1\,e_3}$ is studied in this experiment related to the multiple sets of variables $(E_{RT},\,E_{PI},\,\alpha)$. The horizontal and vertical axes are used for monotherapy or combination therapy, and the effect of CTL response respectively.  Based on simulations and theoretical results, an effective treatment strategy for the removal of integrated HIV cells is identified by the administration of combination therapy. From both experiments, we observe that the possibility of survival of viral load using monotherapy is more than $99\%$ for some HIV-infected patients while it can be broadly reduced to $80\%$ by intervention of both $E_{RT}$ and $E_{PI}$.}
 \label{chap3_pic1}
\end{figure}

\begin{table}[hbt]
\tikzstyle{int}=[rectangle, draw, fill=blue!10,
    text width=22em, text centered, rounded corners, minimum height=10.5em]
\tikzstyle{init} = [pin edge={to-,thick,black}]
{
\tikzstyle{cloud} = [draw=none, rectangle,text centered,rounded corners,text width=\textwidth,fill=gray!20, node distance=3.8cm,
    minimum height=5em]%;
\begin{tikzpicture}[node distance=2.cm,auto,>=latex']
    \node [cloud] (dd)
    {\caption{\footnotesize Effect of combination therapy: steady-state values \cite{chap7_A.B. Gumel}}\label{chap7_table2}
\begin{center}\footnotesize
\begin{tabularx}{\textwidth}{XXX}
\toprule %
$E_{PI}$ & $E_{RT}$& $\bar V_I$\\\hline
% after \\: \hline or \cline{col1-col2} \cline{col3-col4} ...
0.2 & 0.2 & 356.44 \\
%0.4 & 0.4 & 304.55 \\
0.6 & 0.6 & 282.69 \\
0.99 & 0.99 & 0 \\
[1ex]
\end{tabularx}
\end{center}};
\end{tikzpicture}
}
\end{table}

%----------------------------------------------------------------------------------
\subsection{Identification of regions for incompatible estimated densities} Our objective here is to identify regions where solutions generated from the numerical estimation for the correlated with dynamic discrete model is inconsistent with the behavior of solutions of the continuous HIV infection model. Based on the description of parameters in HIV model and their estimated values presented in Table \ref{chap7_table0}, incompatible iterative solutions is most likely to be generated in the regions when the key parameter $e_1$ is positive. As discussed previously, the estimated parameter $c$ can hold negative values conjunction with positive values for $e_1$.

%
%incompatible with the behavior of solutions of the continuous HIV infection model.
%
%can be identified for
%
%the possibility of to find the region where the iterative is .
%
% Description of parameters Inference in with

In order to get more insight into the effects of the key parameter $c$ on the perturbation of the solution, we
conducted another experiment using the function $E_{RT}=\frac{\alpha}{\alpha+r}$ for the effectiveness of reverse transcriptase inhibitors proportional to  the pre-existing activated CD4$^{+}$ T cells, $\alpha$.  In Figure \ref{chap3_pic2}, we observe that the effectiveness of reverse transcriptase inhibitors decline as the magnitude of the rate of proliferation of CD4$^{+}$ T cells is increased from 0 to 1. Hence this can significantly reduce the area where $e_1$ can be positive shrinking the regions of incompatible iterative solutions for numerical estimations.

\begin{figure}[b]
\begin{center}
   {\includegraphics[scale=.37]{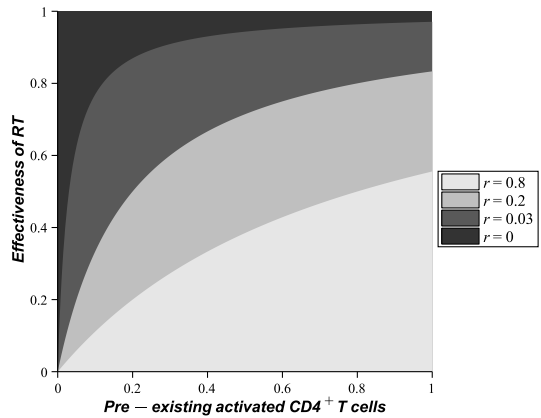}}
\end{center}
\caption{\footnotesize A comparison analysis for different values of the rate of proliferation of CD4$^{+}$ T cells. The parameter $e_1$ ($c$, respectively) can be positive (negative) in the area delimited by each curve, $E_{RT}-$axis and the constant value $E_{RT}=1$. Different curves are generated by the function $E_{RT}=\frac{\alpha}{\alpha+r}$ with respect to the proportion of pre-existing activated CD4$^{+}$ T cells, $\alpha$,  for $r=0,0.03,0.2,0.8$, respectively.}
 \label{chap3_pic2}
\end{figure}
%-------------------------------------------------------------------------------------
%-------------------------------------------------------------------------------------
\section{Conclusion}
This paper presents a method for analogy in continuous HIV infection model, system \eqref{chap7_e2.1}--\eqref{chap7_e2.3}, with estimation of correlated dynamic discrete model, system
(\ref{chap7_e4.4})$-$(\ref{chap7_e4.6}). We primarily construct the compartmental model of HIV infection for three different cell types, i.e. susceptible, infected, and infectious categories. The model simulates the interaction between CD4$^{+}$ T cells and HIV in vivo when combination antiretroviral therapy is used for the management of infection. Basically, reverse transcriptase inhibitors and protease inhibitors are exerted for the perturbation of HIV. Developing the robust numerical method \ref{e1.1}, we can successfully identify dynamics of behavioral changes in solutions of non-denationalized system (\ref{chap7_e4.4})$-$(\ref{chap7_e4.6}) by the quantity threshold $R_d$. A complete proof for the  convergence of the proposed estimation solutions
to the endemic equilibrium point is obtained for the Open problem given in  \cite[Section 4.3]{chap7_DNJ} with the assumptions imposed to the set of parameter values. The findings show that the burden of characterizing dynamics of the continuous HIV infection model \eqref{chap7_e2.1}--\eqref{chap7_e2.3} can be reduced to the computational discrete model at each iteration by efficient use of the estimation algorithm \ref{e1.1} and the information obtained
on previous stages.

Next a reliable comparison for the original deterministic HIV infection model with correlated dynamic discrete model is studies to analyze theoretically the long-term and oscillatory behavior of solutions. The results demonstrate that  the estimation accuracy of the proposed numerical method is an efficient algorithm for solving the dynamic continuous model in conjunction with $e_1 \le 0$, since similar results for both models are achieved. The main difference turns out to be in the qualitative behaviors of solutions of the models when $e_1>0$. Theoretical results and extensive experiments on artificial data with multiple sets of parameter values show that the discrete model conducted by the computational method conducted can behave quite different with the continuous deterministic model with HIV infection when $e_1>0$. Roughly speaking, the experiments demonstrate that  system (\ref{chap7_eq2}) can attain bounded or unbounded solutions \cite{chap7_M. Dehghan 2007} while the qualitative behavior of system (\ref{chap7_e4.4})$-$(\ref{chap7_e4.6}) with different initial values can lead to serious misspecification errors referred to as forbidden sets, see Appendix A.
In analysis of system (\ref{chap7_e4.4})$-$(\ref{chap7_e4.6}),  the paper establishes the complete convergence of the solutions to the nonnegative equilibrium points depending on the quantity threshold $R_d$; it then follows that no bistability occurs to the solutions and a forward bifurcation  is observed when the equality
$e=(b-1)(1-d)(1-f)$ hold.

The theoretical results follow by implementation of several experiments with respect to numerical algorithm to verify, among other things, the  effect of multiple parameter values for time-step and immune response. For instance, we show that the numerical method was a very robust and efficient technique in terms of stability to solve the continuous HIV infection system for large/small time steps since the equilibria and the stability conditions are independent of the time step. This was claimed without theoretical results by Gumel et al. \cite{chap7_A.B. Gumel}. Numerical simulations also indicate that transitory oscillations are observed for about the first 100 time units and no sustained oscillations are observed when the endemic equilibrium point appears to be stable for lower values of $R_d$ close to unity but not for significantly large values. However, the time periods of oscillations to the solutions of the model become shorter and rapid
in conjunction with occurrence of higher amplitudes of oscillations when the threshold level of basic reproduction number exceeds unity and is maintained above a certain threshold sufficiently large. Besides, we observe that the  qualitative oscillatory behavior of the correlated dynamic discrete model is not substantially perturbed by the implementation of various parameter values for $c$.

%This result seems to have a simple explanation. The expected long-term solution of the model (\ref{chap7_e4.4})$-$(\ref{chap7_e4.6}) associated with the equilibrium point \eqref{chap7_e4.8} is intuitively sensitive to the implementation of the parameter $c$ and the basic reproduction number. The larger the parameter value $c$, the higher amplitude is observed for the density of infected cells and virus particles, Figure \ref{chap3_pic3}. However the parameter value $c$ produces directly no sizeable effects or possible repercussions to the density of uninfected cells.

Further computational experiments are performed for the within-host infectious model in order to identify an effective treatment strategy for the control of HIV infection by application of single/multiple anti-HIV prophylactic medications within a patient. An intuitive comparison for both combination antiretroviral therapy and monotherapy is described for several case studies using multiple sets of variables $(E_{RT},\,E_{PI},\,\alpha)$. We subsequently study the possibility of successful persistence or inhibition of infectious viral load. In fact,  we conduct these experiments to investigate the qualitative effect of various proportion values of pre-exiting activated CD4$^{+}$ T cells  through perturbation of dynamics of the HIV model. To control and bring down the volume of viral load for monotherapy scenarios, the findings suggest that, in some cases, it is impossible to mitigate HIV risk in vivo, or it is required to monitor antiretroviral treatment that is $99\%$ effective or more. However the consequence of administrating two drugs is a success for mitigation and prevention of infectious virus load. Especially, if $\alpha$ is close to one, the virus particles can be significantly inhibited in the presence of only $90\%$ effective combination therapy.

\begin{appendix}
  \section{Dynamics of solutions with negative parameter $c$}\label{sec6}
In this section, $c$ is considered a negative number. As we will
see in the next section, this situation is equivalent to  $e_1>0$ in the
continuous model.

In this case we have again two equilibrium points (\ref{chap7_e4.7}) and
(\ref{chap7_e4.8}). But the equilibrium point (\ref{chap7_e4.8}) has negative
components, i.e. $\bar{y}$ and $\bar{z}$, while (\ref{chap7_e4.8})  is locally
asymptotically stable (Theorem \ref{chap7_t4.5}). However, if we can show that the solution eventually enters a negative area of $\mathbb{R}^3$, then viral load can be suppressed from the host.

%%%%%%%%%%%%%%%%%%%%%%%%%%%%%%%%%%%%%%%%%%%%%%%%%%%%%%%%%%%%%%%%%%%%%%%%%%%%%

\begin{Thm}
The following statements are true:
\begin{enumerate}[(i)]
  \item Assume that $x_0\ge-1$, $y_0=0$, and $z_0=0$, then $x_n>0$, $y_n=0$, and $z_n=0$.

  \item Assume that $x_0\ge-1$, $y_0<0$, and $z_0<\frac{-b}{c}$, then $x_n\ge0$, $y_n<0$, and $z_n<0$.

\end{enumerate}
\end{Thm}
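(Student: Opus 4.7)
The plan is to prove both parts by straightforward forward induction on $n$, tracking sign invariants through the three recursions \eqref{chap7_e4.4}--\eqref{chap7_e4.6}. For part (i), I would observe that the zero initial data propagate along the coupled $(y,z)$ subsystem: whenever $y_n=0$ and $z_n=0$, the $y$ recursion gives $y_{n+1}=d\,y_n+e\,x_{n+1}z_n=0$, and then $z_{n+1}=f z_n+y_{n+1}=0$, so by induction $y_n=z_n=0$ for every $n$. Once this is in hand, the $x$ equation collapses to the linear scalar recursion $x_{n+1}=(1+x_n)/b$ with $b>1$, whose unique fixed point $1/(b-1)$ is positive and globally attractive; the map sends $\{x\ge -1\}$ into $\{x\ge 0\}$ in one step and into $\{x>0\}$ from the second step onward, giving the claimed positivity.

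For part (ii), I would set up the induction with the joint invariant
\begin{equation*}
x_n\ge 0,\quad y_n<0,\quad z_n<0,
\end{equation*}
and show that each recursion preserves it. First, because $c<0$ and $z_n<0$ imply $c\,z_n>0$, the denominator $b+c\,z_n>b>0$, so $x_{n+1}=(1+x_n)/(b+c\,z_n)$ is well-defined and non-negative (the numerator is at least $1$). Next, $y_{n+1}=d\,y_n+e\,x_{n+1}z_n$ is the sum of a strictly negative term (since $d\in(0,1)$ and $y_n<0$) and a non-positive term (since $e>0$, $x_{n+1}\ge 0$, $z_n<0$), hence $y_{n+1}<0$. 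Finally $z_{n+1}=f\,z_n+y_{n+1}$ is the sum of two strictly negative numbers, so $z_{n+1}<0$, closing the induction.

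The only subtle point—and what I expect to be the main obstacle—is the base case $n=0$, where the hypothesis reads $z_0<-b/c$ rather than $z_0<0$. Since $c<0$, the quantity $-b/c=b/|c|$ is strictly positive, so $z_0<-b/c$ is precisely the condition that keeps the denominator $b+c\,z_0$ strictly positive and hence makes $x_1=(1+x_0)/(b+c\,z_0)\ge 0$ well-defined. Once I verify that $z_0<0$ holds (which is the regime in which the sign of $e\,x_1 z_0$ in $y_1$ is non-positive and the inductive invariant genuinely closes), the step from $n=0$ to $n=1$ is identical to the generic inductive step above, and the rest of the argument runs unchanged. In the writeup I would simply check this base case explicitly and then invoke the generic step for all $n\ge 1$.
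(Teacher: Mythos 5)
Part (i) and the inductive step of part (ii) are correct, and they are essentially the verification the paper leaves implicit: the paper's entire proof is the one sentence that the claims are an obvious consequence of the system, so there is no detailed argument to compare yours against. The genuine problem is the base case of part (ii), which you correctly single out as the subtle point but then do not actually close. You write that you will ``verify that $z_0<0$ holds,'' but $z_0<0$ is not a consequence of the hypothesis: since $c<0$ we have $-b/c=b/|c|>0$, so $z_0<-b/c$ only guarantees that the first denominator $b+c\,z_0$ is positive, and it leaves open the range $0<z_0<-b/c$. In that range your sign invariant fails at the very first step. Concretely, take $b=2$, $c=-1$, $d=f=1/2$, $e=1$ and $(x_0,y_0,z_0)=(0,-10^{-3},1)$: then $x_1=1$, $y_1=d\,y_0+e\,x_1\,z_0=0.9995>0$ and $z_1>0$, and a few iterations later even $x_3<0$. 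So no repair of the base case alone can succeed; under the hypothesis as printed, the conclusion ``$y_n<0$ and $z_n<0$ for all $n\ge1$'' is not reachable by an induction of this kind.

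What your argument actually proves is the statement under the stronger hypothesis $z_0\le0$ (which automatically implies $z_0<-b/c$): then $x_1=(1+x_0)/(b+c\,z_0)\ge0$, $y_1=d\,y_0+e\,x_1\,z_0<0$, $z_1=f\,z_0+y_1<0$, and your generic step takes over, with the edge case $x_0=-1$ (where $x_1=0$ and strict positivity of $x_n$ in part (i) only starts at $n=2$) handled exactly as you indicate. In the write-up you should therefore either restrict the hypothesis to $z_0\le0$, or reinterpret the conclusion as an eventual sign pattern (which is what the paper's subsequent conjecture suggests, and which would require a different, non-inductive argument). As it stands, the phrase ``once I verify that $z_0<0$ holds'' marks a genuine gap, not a formality.
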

\begin{proof}
The proof is an obvious result of  system  (\ref{chap7_e4.4})$-$(\ref{chap7_e4.6}).
\end{proof}

Now if we multiply the second and third equations of system  (\ref{chap7_e4.4})$-$(\ref{chap7_e4.6}) by $-1$, the change of variables $X_n=x_n$, $Y_n=(-y_n)$, $Z_n=(-z_n)$ gives
system (\ref{chap7_e4.4})$-$(\ref{chap7_e4.6}) with $c$ substituted by $(-c)$ which is positive in the first equation, and then we have
$\{(X_n,Y_n,Z_n)\}_{n=1}^{\infty}\subset\mathbb R_+^{3}$. As a result, the dynamic of the solution when $c$ is negative
with $x_0\ge-1$, $y_0<0$, and $z_0<\frac{-b}{c}$ is exactly the same
as the positive case.

For other situations, it is not possible to verify the dynamics of
solutions, since it is difficult to find the forbidden
set {\bf F}, the set of initial conditions $(x_0,y_0,z_0)\in
\mathbb{R}^3$ through which the denominator $(b+cz_{n})$ in equation
(\ref{chap7_e4.4}) will become zero for some value of $n\ge0$. Note that the situation $c<0$, $x_0\ge-1$, $y_0<0$, and
$z_0<\frac{-b}{c}$ in (\ref{chap7_e4.4})$-$(\ref{chap7_e4.6}) corresponds to
$e_1>0$, $x_0\geq 0$, $y_0<0$, and $z<0$  in the continuous model  (\ref{chap7_eq2}).

However we can leave a conjecture here that with an initial solution in this
form $(x_0,y_0,z_0)\in \mathbb{R}^3 \backslash {\bf F}$, the solution will become eventually as follows:
\[x_n\ge0, \quad y_n\le0, \quad \mbox{and} \quad z_n\le0.\]

\subsection{Sensitivity analysis for negative parameter experiments}
\begin{Exm}\label{chap7_ex6.1}
In order to test the stability and convergence properties of the scheme constructed in
Sections 5.4, 5.5,  and 5.6, we use the method to simulate the model (\ref{chap7_e4.4})$-$(\ref{chap7_e4.6}) with the  initial values and parameter
$(x_0,y_0,z_0)=(1,0,10^{-10})$,  $b=2$, $c=-1$, $d=0.5$, $e=1$, and $f=0.5$.
\end{Exm}
We have clearly that $e>(b-1)(1-d)(1-f)$, therefore, the equilibrium
point, $(0.25,-1.5,-3)$, is locally asymptotically stable. Table
\ref{chap7_table3} illustrates the global asymptotical stability of the
equilibrium point $(0.25,-1.5,-3)$.

%---------------------------------------------------------------------------------------------

\begin{table}[hbt]
\tikzstyle{int}=[rectangle, draw, fill=blue!10,
    text width=22em, text centered, rounded corners, minimum height=10.5em]
\tikzstyle{init} = [pin edge={to-,thick,black}]
{
\tikzstyle{cloud} = [draw=none, rectangle,text centered,rounded corners,text width=\textwidth,fill=gray!20, node distance=3.8cm,
    minimum height=5em]%;
\begin{tikzpicture}[node distance=2.cm,auto,>=latex']
    \node [cloud] (dd)
    {\caption{\footnotesize Negative parameter together with positive initial values leads to negative solutions, that is, a small portion of viral load can be suppressed immediately.}\label{chap7_table3}
\begin{center}
{\footnotesize\begin{tabularx}{\textwidth}{llXXl}
\toprule %
  $n$    && Density of $x_{n}$                   & Density of $y_{n}$                   & Density of $z_{n}$   \\ \hline
  $0$    && $1$                       & $0$                       & $10^{-10}$        \\
  $10$   && $1.000000015$             & $0.2955390683\times10^{-7}$    &  $0.4037138739\times10^{-7}$\\
%  $20$   && $1.000007565$             & $0.1512847059\times10^{-4}$    & $0.2066584326\times10^{-4}$\\
%  $30$   && $1.003893184$             & $0.7786367848\times10^{-2}$    & $0.1062795489\times10^{-1}$\\
%  $40$   && $-0.3290565393\times10^{-1}$   & $-2.065811307$            & $-2.549563692$\\
  $50$   && $0.2502325578$            & $-1.499534885$            & $-2.997932087$\\
%  $60$   && $0.2500028949$            & $-1.499994211$            & $-2.999974241$\\
%  $70$   && $0.2500000361$            & $-1.499999928$            & $-2.999999679$\\
%  $80$   && $0.2500000005$            & $-1.500000000$            & $-2.999999997$\\
%  $90$   && $0.2500000000$            & $-1.500000000$            & $-3.000000000$\\
  $100$  && $0.2500000000$            & $-1.500000000$            & $-3.000000000$\\
   [1ex]
\end{tabularx}}
\end{center}};
\end{tikzpicture}
}
\end{table}

%---------------------------------------------------------------------------------------------

%---------------------------------------------------------------------------------------------

\begin{Exm}\label{chap7_ex6.2}
In this example, we assume $b=5$, $c=-1$, $d=0.2$, $e=1$, and
$f=0.5$. We consider two solutions of (\ref{chap7_e4.4})$-$(\ref{chap7_e4.6})
with respect to two initial values $(0.4,0.75,1.5+10^{-5})$ and
 $(0.4,0.75,1.5-10^{-5})$.
\end{Exm}
We observe that $e<(b-1)(1-d)(1-f)$, therefore, the equilibrium
point, $(0.25,0,0)$, is locally asymptotically stable. Tables \ref{chap7_table4} and
\ref{chap7_table5} illustrate the global asymptotical stability of the
equilibrium point $(0.25,0,0)$. With the initial value
$(0.4,0.75,1.5+10^{-5})$, $y_{n}$ and $z_{n}$ are negative after
finite iterations. While for the initial value $(0.4,0.75,1.5-10^{-5})$, each component of $(x_{n},y_{n},z_{n})$ is
positive.
\begin{table}[hbt]
\tikzstyle{int}=[rectangle, draw, fill=blue!10,
    text width=22em, text centered, rounded corners, minimum height=10.5em]
\tikzstyle{init} = [pin edge={to-,thick,black}]
{
\tikzstyle{cloud} = [draw=none, rectangle,text centered,rounded corners,text width=\textwidth,fill=gray!20, node distance=3.8cm,
    minimum height=5em]%;
\begin{tikzpicture}[node distance=2.cm,auto,>=latex']
    \node [cloud] (dd)
    {\caption{\footnotesize Positive initial values and  negative parameter converges eventually to zero but not inside the first octant.}\label{chap7_table4}
\begin{center}
{\footnotesize\begin{tabularx}{\textwidth}{llXXl}
\toprule %
  $n$     & & Density of $x_{n}$          & Density of $y_{n}$                   & Density of $z_{n}$  \\ \hline
  $0$     & & $0.4$            & $0.75$                    & $1.5+10^{-5}$             \\
  $20$    & & $0.4000775562$   & $0.7503877812$            & $1.500649000$\\
%  $40$    & & $0.4061904916$   & $0.7809524575$            & $1.551541362$\\
%  $60$    & & $0.2407665726$   & $-0.4616713744\times10^{-1}$   & $-0.1183643855$\\
%  $80$    & & $0.2497903198$   & $-0.1048400521\times10^{-2}$   & $-0.2640833921\times10^{-2}$\\
  $100$   & & $0.2499950251$   & $-0.2487438382\times10^{-4}$   & $-0.6262803901\times10^{-4}$\\
%  $120$   & & $0.2499998820$   & $-0.5907468859\times10^{-6}$   & $-0.1487350223\times10^{-5}$\\
%  $140$   & & $0.2499999972$   & $-0.1403009524\times10^{-7}$   & $-0.3532419798\times10^{-7}$\\
%  $160$   & & $0.2500000000$   & $-0.3332115421\times10^{-9}$   & $-0.8389415901\times10^{-9}$\\
%  $180$   & & $0.2500000000$   & $-0.7913697756\times10^{-11}$  & $-0.1992467049\times10^{-10}$\\
  $200$   & & $0.2500000000$   & $-0.1879485081\times10^{-12}$  & $-0.4732063579\times10^{-12}$\\
   [1ex]
\end{tabularx}}
\end{center}};
\end{tikzpicture}
}
\end{table}
\begin{table}[hbt]
\tikzstyle{int}=[rectangle, draw, fill=blue!10,
    text width=22em, text centered, rounded corners, minimum height=10.5em]
\tikzstyle{init} = [pin edge={to-,thick,black}]
{
\tikzstyle{cloud} = [draw=none, rectangle,text centered,rounded corners,text width=\textwidth,fill=gray!20, node distance=3.8cm,
    minimum height=5em]%;
\begin{tikzpicture}[node distance=2.cm,auto,>=latex']
    \node [cloud] (dd)
    {\caption{\footnotesize Negative parameter and  positive initial values leads to positive solutions, that is, the viral load can still persist. One may compare this result with Table \ref{chap7_table4}.}\label{chap7_table5}
\begin{center}
{\footnotesize\begin{tabularx}{\textwidth}{llXXl}
\toprule %
   $n$     & & Density of $x_{n}$          & Density of $y_{n}$                  & Density of  $z_{n}$  \\ \hline
    $0$     & & $0.4$             &  $0.75$                    & $1.5-10^{-5}$ \\
    $20$    & & $0.3999225278$    &  $0.7496126391$            & $1.499351619$\\
%    $40$    & & $0.3943168060$    &  $0.7215840306$            & $1.452209971$\\
%    $60$    & & $0.2912673373$    &  $0.2063366867$            & $0.4790217445$\\
%    $80$    & & $0.2512455549$    &  $0.6227774544\times10^{-2}$    & $0.1563684572\times10^{-1}$\\
    $100$   & & $0.2500297591$    &  $0.1487952268\times10^{-3}$    & $0.3746036600\times10^{-3}$\\
%    $120$   & & $0.2500007068$    &  $0.3534354679\times10^{-5}$    & $0.8898588764\times10^{-5}$\\
%    $140$   & & $0.2500000168$    &  $0.8394039679\times10^{-7}$    & $0.2113404802\times10^{-6}$\\
%    $160$   & & $0.2500000005$    &  $0.1993565348\times10^{-8}$    & $0.5019288560\times10^{-8}$\\
%    $180$   & & $0.2500000000$    &  $0.4734672028\times10^{-10}$   & $0.1192069535\times10^{-9}$\\
    $200$   & & $0.2500000000$    &  $0.1124473755\times10^{-11}$   & $0.2831137823\times10^{-11}$\\
   [1ex]
\end{tabularx}}
\end{center}};
\end{tikzpicture}
}
\end{table}

\end{appendix}

\begin{thebibliography}{99}
{\scriptsize
\renewcommand{\baselinestretch}{1.}

\bibitem{chap7_Blower}Blower SM, Koelle K, Mills J. Health policy modeling: epidemic control, HIV vaccines, and risky behavior. In:
Kaplan, Brookmeyer, editors. Quantitative evaluation of HIV prevention programs. Yale University Press; 2002,
260-89.


\bibitem{chap7_F. Brauer}F. Brauer and C. Castillo-Chavez, Mathematical
Models in Population Biology and Epidemiology, {\em
Springer-Verlag}, New York, 2001.

\bibitem{chap7_Castillo1}C. Castillo-Chavez, K. Cooke, W. Huang, S.A. Levin, Results on the dynamics for models for the sexual
transmission of the human immunodeficiency virus, Appl. Math. Lett. 2 (1989) 327.

\bibitem{chap7_Castillo2} C. Castillo-Chavez, K. Cooke, W. Huang, S.A. Levin, The Role of long incubation periods in the dynamics of HIV/AIDS. Part 2: Multiple group models, in: Carlos Castillo-Chavez (Ed.), Mathematical and Statistical Approaches to AIDS Epidemiology, Lecture Notes in Biomathematics, 83, Springer, 1989, p. 200.


\bibitem{chap7_Castillo3} C. Castillo-Chavez, B. Song, Dynamical models of tuberculosis and their applications, Math. Biosci. Eng. 1 (2)(2004) 361.


\bibitem{chap7_D:Clark}Dean Clark, M.R.S. Kulenovic, J. F. Selgrade, Global asymptotic behavior
of a two-dimensional difference equation modelling competition,
{\em Nonlinear Analysis, Theory, Methods and Applications}, {
52} (2003), 1765-1776.

\bibitem{chap7_M. Dehghan 2004-2} M. Dehghan, M. Jaberi Douraki,
M. Jaberi Douraki, Dynamics of a rational difference equation
using both theoretical and computational approachs, Applied
Mathematics and Computation, 168 (2005), 756-775.

\bibitem{chap7_M. Dehghan 2007} M. Dehghan, M. Nasri, and M. R. Razvan, Global Stability of a Deterministic Model for HIV
Infection In vivo, Chaos, Solitons and Fractals, { 34} (2007), 1225-1238.


\bibitem{chap7_M. Dehghan 2004-1} M. Dehghan, A. Saadatmandi, Bounded for solutions of a
six-point partial-difference scheme, Computers and Mathematics
with Applications, {47} (2004), 83-89.

\bibitem{chap7_Dushoff} J. Dushoff, H. Wenzhang, C. Castillo-Chavez, Backwards bifurcations and catastrophe in simple models of fatal
diseases, J. Math. Biol. 36 (1998) 227.

\bibitem{chap7_P. Essunger} P. Essunger and A.S. Perelson, Modeling HIV infection of
CD4$^{+}$ T-cell subpopulation, {\em Journal of Theoretical
Biololy} { 170} (1994), 367-391.

\bibitem{chap7_M. Fan} M. Fan and K. Wang, Periodic Solutions of a Discrete Time Nonautonomous Ratio-Dependent Predator-Prey System, {\em Mathematical and Computer Modelling}, {35} (2002) 951-961.

\bibitem{chap7_M. Farkas} M. Farkas, Dynamical Models in Biology, Academic Press, 2001.

\bibitem{chap7_A.B. Gumel} A.B. Gumel, T.D. Loewen, P.N. Shivakumar, B.M. Sahai, P. Yu, and M.L. Garba, Numerical modelling of the perturbation of HIV-1 during combination anti-retroviral therapy, {\em Computers in Biology and Medicine}, { 31} (2001) 287-301.

\bibitem{chap7_Hethcote1}H.W. Hethcote and J.W. van Ark, Epidemiology models for heterogeneous populations: proportionate mixing, parameter estimation, and immunization programs, Math. Biosci. 84 (1987) 85.

\bibitem{chap7_Hethcote2}H.W. Hethcote, The mathematics of infectious diseases, SIAM Rev. 42 (4) (2000) 599.

\bibitem{chap7_M.W. Hirsch 1974} M.W. Hirsch and S. Smale, Differential Equations, Dynamical
Systems, and Linear Algebra, Academic Press, INC., 1974.

\bibitem{chap7_M. Jaberi} M. Jaberi Douraki, The study of some classes of nonlinear difference equations, M.Sc. Thesis, Department of Applied Mathematics, Amirkabir University of Technology, July 2004.

\bibitem{JHWM} M. Jaberi Douraki, JM. Heffernan J. Wu and  SM. Moghadas, Optimal Treatment Profile during an Influenza Epidemic. Differential Equations and Dynamical Systems, 21(2013), 237-252

\bibitem{chap7_jaberi1}M. Jaberi Douraki and J. Mashreghi, On the Population Model of the Non-Autonomous Logistic Equation of Second Order with Period-two Parameters,  Journal of Difference Equations and Applications, 14(3) 2008, 231-257.

\bibitem{chap7_jaberi2}M. Jaberi Douraki and J. Mashreghi, On the Population Model of Non-Autonomous Logistic Equation ${x_{n+1}=\frac{p_nx_{n}}{1+x_{n-1}}}$, submitted.

\bibitem{JaberiJTB1} Majid Jaberi-Douraki, Santiago Schnell, Massimo Pietropaolo, Anmar Khadra, Unraveling the contribution of pancreatic beta-cell suicide in autoimmune type 1 diabetes, Journal of theoretical biology, Volume 375, 2015, Pages 77-87. 

\bibitem{JaberiJTB2} Majid Jaberi-Douraki, Massimo Pietropaolo, Anmar Khadra, Continuum model of T-cell avidity: Understanding autoreactive and regulatory T-cell responses in type 1 diabetes, Journal of theoretical biology, Volume 383, 21 2015, Pages 93-105. 

\bibitem{JaberiMBE} Majid Jaberi-Douraki, Seyed M Moghadas, Optimal control of vaccination dynamics during an influenza epidemic, Volume 11, 2014, Pages 1045-1063.

\bibitem{JaberiPediatric} Majid Jaberi-Douraki, Shang Wan Shalon Liu, Massimo Pietropaolo, Anmar Khadra, Autoimmune responses in T1DM: quantitative methods to understand onset, progression, and prevention of disease, Pediatric diabetes, Volume 15, 2014, Pages 162-174.

\bibitem{JaberiPlosOne} Majid Jaberi-Douraki, Massimo Pietropaolo, Anmar Khadra, PREDICTIVE MODELS OF TYPE 1 DIABETES PROGRESSION: UNDERSTANDING T-CELL CYCLES AND THEIR IMPLICATIONS ON AUTOANTIBODY RELEASE, PloS ONE, Volume 9, 2014, e93326.

\bibitem{JaberiJBD} Majid Jaberi-Douraki, Seyed M Moghadas, Optimality of a time-dependent treatment profile during an epidemic, Journal of biological dynamics, Volume 7, 2013, Pages 133-147.

\bibitem{chap7_Kermack}W.O. Kermack, A.G. McKendrick, A contribution to the mathematical theory of epidemics, Proc. Roy. Soc. A 115 (1927) 700.

\bibitem{Kirschner1}D. Kirschner, S. Lenhart, S. Serbin, Optimal control of chemotherapy of HIV, J. Math. Biol. 35 (1997) 775-792.
\bibitem{Kirschner2}D. Kirschner, G.F. Webb, A model for treatment strategy in the chemotherapy of AIDS, Bull. Math. Biol. 58 (1996) 367-391.

\bibitem{Kirschner3}D.E. Kirschner, G.F. Webb, Understanding drug resistance for monotherapy treatment of HIV infection, Bull. Math.
Biol. 59 (1997) 763-785.

\bibitem{chap7_Kouichi} Kouichi Murakami, Stability and bifurcation in a discret-time
predator-prey model, Journal of Difference Equations and Applications,
 13(10)(2007), 911-925.

\bibitem{chap7_V:Kocic}V.L. Kocic and G. Ladas, Global Behavir
of Nonlinear Difference Equations of Higher Order with
Applications, {\em Kluwer Academic Publishers}, Dordrect, 1993.

\bibitem{chap7_K:Kulenovic}M.R.S. Kulenovic and G. Ladas,
Dynamics of Second Order Rational Difference Equations With Open
ProbLems and Conjectures, {\em Chapman and Hall/CRC, Boca Raton},
2002.

\bibitem{chap7_Lajmanovich}A. Lajmanovich, J.A. Yorke, A deterministic model for gonorrhea in a non-homogeneous population, Math. Biosci. 28 (1976) 221.

\bibitem{Lin} Zhoumeng Lin, Majid Jaberi-Douraki, Chunla He, Shiqiang Jin, Raymond Yang, Jeffrey Fisher, Jim Riviere, Performance Assessment and Translation of Physiologically Based Pharmacokinetic Models from acslX to Berkeley Madonna, MATLAB, and R language: Oxytetracycline and Gold Nanoparticles as Case Examples, Toxicological Sciences, Volume 158, 2017, Pages 23-35.

\bibitem{Mazloom} R Mazloom, M Jaberi-Douraki, J Comer, V Volkova, Potential information loss due to categorization of MIC frequency distributions, Foodborne Pathogens and Disease, 2017, 1-22.

\bibitem{McLean}A.R. McLean, S.D.W. Frost, Ziduvidine and HIV: mathematical models of within-host population dynamics, Rev.
Med. Virol. 5 (1995) 141-147.

\bibitem{chap7_R.E. Mickens}R.E. Mickens, Application of Nonstandard
Finite Difference Schemes, { World Scientific Publishing} Co.
Pte. Ltd., 2000.

\bibitem{chap7_R.K. Miller 1982} R.K. Miller and A.N. Michel, Ordinary Differential Equations,
Academic Press, INC., 1982.

\bibitem{Faqir} Faqir Muhammad, Majid Jaberi-Douraki, Damião Pergentino de Sousa, Jim Riviere, Modulation of Chemical Dermal Absorption by 14 Natural Products: A Quantitative Structure Permeation Analysis of Components Often Found in Topical Preparations, Cutaneous and Ocular Toxicology, Volume 36, 2017, Pages 237-252.

\bibitem{Murray} Murray, J.D. Mathematical Biology: Spatial models and biomedical applications. Interdisciplinary Applied Mathematics. Third Edit. Springer Verlag, 2003.

\bibitem{chap7_M. Nasri} M. Nasri,
Modeling HIV-1 Infection in vivo and its numerical simulations,
M.Sc. Thesis, Department of Applied Mathematics, Amirkabir
University of Technology, November 2004.

\bibitem{chap7_DNJ} M. Nasri, M. Dehghan and, M. Jaberi Douraki, {Study of a System of
Nonlinear Difference Equations
Arising in a Deterministic Model for HIV Infection} Applied
Mathematics and Computation,  { 171} (2005), 1306-1330.

\bibitem{chap7_G:Papaschinopoulos}G. Papaschinopoulos and C.J. Schinas, On the system of
two difference equations $x_{n+1}=\sum_{i=0}^{k}A_i/y_{n-i}^{p_i},
y_{n+1}=\sum_{i=0}^{k}B_i/x_{n-i}^{q_i}$, {\em Journal of
Mathematical Analysis and Applications}, { 273} (2002) 294-309.

\bibitem{chap7_A.S. Perelson}A.S. Perelson, P.W. Nelson, Mathematical analysis of HIV-1
dynamics in vivo, { 41} ({ 1}) (1999), 3-44.

\bibitem{chap7_L. Perko 1991} L. Perko, Differential Equations and Dynamical Systems,
Springer-Verlag, New York, 1991.

\bibitem{chap7_R.B. Potts} R.B. Potts and X.-H. Yu, Difference Equation Modelling of a Variable
Structure System, {\em Computers and Mathematics with
Applications}, { 28} (1994) 281-289.

\bibitem{Petrov}V. Petrov, S.K. Scott, and K. Showalter. Excitability, wave reflection, and wave splitting in a cubic autocatalysis reaction-diffusion system. Phil. Trans. R. Soc. A, 347:631-642, 1994.

\bibitem{Rang}Rang HP, Dale MM, Ritter JM, and Flower RJ. (2007). Rang and Dale's Pharmacology (6th Edition ed.). Philadelphia: Churchill Livingstone Elsevier.

\bibitem{chap7_D. Ruelle} D. Ruelle, ELements of Differentiable Dynamics
and Bifurcation Theory, {\em Academic Press, Inc.}, 1989.

\bibitem{chap7_H. Sedaghat 2003} H. Sedaghat, Nonlinear Difference Equations, Theory with
Applications to Social Science Models, Kluwer Academic
Publishers, Dordrect, 2003.}


\bibitem{Steigbigel}Steigbigel RT, Cooper DA, Kumar PN, et al. (July 2008). Raltegravir with optimized background therapy for resistant HIV-1 infection. N. Engl. J. Med. 359 (4): 339-54. \href{http://dx.doi.org/10.1056\%2FNEJMoa0708975}{doi:10.1056/NEJMoa0708975}

\bibitem{Q_Zhang} Qi Zhang, Zhenzhen Shi, Pengfei Zhang, Zhichao Li, Majid Jaberi-Douraki, Predictive temperature modeling and experimental investigation of ultrasonic vibration-assisted pelleting of wheat straw, Applied Energy, 205(2017), 511?528.

\bibitem{P_Zhang1} Pengfei Zhang, Pengpeng Wu, Qi Zhang, Zhenzhen Shi, Mingjun Wei, Majid Jaberi-Douraki, Optimization of feed thickness on distribution of airflow velocity in belt dryer using computational fluid dynamics, Energy Procedia (9th International Conference on Applied Energy, ICAE2017, 21-24 August 2017, Cardiff, UK), 2017, 1-8.

\bibitem{P_Zhang2} Pengfei Zhang, Yanbin Mu, Zhenzhen Shi, Mingjun Wei, Qi Zhang, Majid Jaberi-Douraki, Computational fluid dynamic analysis of airflow in belt dryer: effects of conveyor position on airflow distribution, Energy Procedia (9th International Conference on Applied Energy, ICAE2017, 21-24 August 2017, Cardiff, UK), 2017, 1-8.

\end{thebibliography}
\end{document}